\numberwithin{equation}{section}
\newtheorem{thm}{Theorem}[section]
\newtheorem{lem}[thm]{Lemma}
\newtheorem{prop}[thm]{Proposition}
\newtheorem{definition}[thm]{Definition}
\newtheorem{rem}[thm]{Remark}
\newcommand\bZ{{\mathbb Z}}
\newcommand\ve{\varepsilon}
\newcommand\vf{\varphi}
\newcommand{\pat}[1]{\textcolor{black}{#1}}
\newcommand{\bjo}[1]{\textcolor{black}{#1}}
\newcommand\LL{{\mathbb L}}
\newcommand\EE{{\mathbb E}}
\newcommand\PP{{\mathbb P}}
\newcommand\RR{{\mathbb R}}
\newcommand\ls{\;{\lesssim}\,}
\newcommand{\mc}[1]{{\mathcal #1}}
\newcommand{\bb}[1]{{\mathbb #1}}
\begin{document}
\title[]{A microscopic model for a one parameter class of fractional Laplacians with Dirichlet boundary conditions}
\author{C.Bernardin}
\address{Universit\'e C\^ote d'Azur, CNRS, LJAD\\
Parc Valrose\\
06108 NICE Cedex 02, France}
\email{{\tt cbernard@unice.fr}}

\author{P.  Gon\c calves}
\address{Patr\'icia Gon\c calves Center for Mathematical Analysis,  Geometry and Dynamical Systems,
Instituto Superior T\'ecnico, Universidade de Lisboa,
Av. Rovisco Pais, 1049-001 Lisboa, Portugal and   Institut  Henri
Poincar\'e, UMS 839 (CNRS/UPMC), 11 rue Pierre et Marie Curie, 75231 Paris Cedex 05, France.}
\email{{\tt patricia.goncalves@math.tecnico.ulisboa.pt}}

\author{B. Jim\'enez Oviedo}
\address{Universit\'e C\^ote d'Azur, CNRS, LJAD\\
Parc Valrose\\
06108 NICE Cedex 02, France}
\email{{\tt byron@unice.fr}}

\thanks{}

\date{\today.}
\begin{abstract}
\bjo{ We prove the hydrodynamic limit for the symmetric exclusion process with long jumps given by a mean zero probability transition rate with infinite variance and  in contact with infinitely many reservoirs with density $\alpha$ at the left of the system and $\beta $ at the right of the system. The strength of the reservoirs is ruled by $\kappa N^{-\theta}>0$. Here $N$ is the size of the system, $\kappa>0$ and $\theta \in \RR$. Our results are valid for $\theta \leq 0$. For $\theta=0$, we obtain a collection of  fractional reaction-diffusion equations indexed by the parameter $\kappa$ and with Dirichlet boundary conditions. Their solutions also depend on $\kappa$. For $\theta<0$, the hydrodynamic equation corresponds to a reaction equation with Dirichlet boundary conditions. The case $\theta > 0$ is still open. For that reason we also analyze the convergence of the unique weak solution of the equation in the case $\theta =0$ when we send the parameter $\kappa$ to zero. Indeed, we conjecture that the limiting profile when  $\kappa\to 0$ is the one that we should obtain when taking small values of $\theta>0$.
}
\end{abstract}
\keywords{Hydrodynamic limit, Heat equation, Boundary conditions, Exclusion with long jumps.} 

\maketitle
 
\section{Introduction}
Normal (diffusive) transport phenomena are described by standard random walk models. Anomalous transport, in particular transport phenomena giving rise to superdiffusion, are nowadays encapsulated in the Lévy flights or Lévy walks framework \cite{DSU,DKZ} and appear in physics, finance, biology ... The term "Lévy flight" was coined by Mandelbrot and is nothing but a random walk in which the step-lengths have a probability distribution that is heavy tailed.  A (one-dimensional)  Lévy walker moves with a constant velocity $v$ for a heavy-tailed random time $\tau$ on a distance $x=v \tau$ in either direction with equal probability and then chooses a new direction and moves again. One then easily shows that for Lévy flights or Lévy walks, the space-time scaling limit $P(x,t)$ of the probability distribution  of the particle position $x(t)$ is solution of the fractional diffusion equation 
\begin{equation}
\label{eq:fle024}
\partial_t P = -c (-\Delta)^{\gamma/2} P 
\end{equation}
where $c$ is a constant and $\gamma \in (1,2)$. In physics, the description of anomalous transport phenomena by Lévy walks instead of Lévy flights is sometimes preferred despite the two models have the same scaling limit form provided by (\ref{eq:fle024}) because the first ones have a finite propagation of speed (see \cite{DKZ} for more details).

While Lévy walks and Lévy flights are today well known and popular models to describe superdiffusion in infinite systems in various application fields, there has been  recently several physical studies pointing out that it would be desirable to have a better understanding of Lévy walks in bounded domains. For bounded domains, boundary conditions and exchange with reservoirs or environment have to be taken into account. A particular interest for this problem is related to the description of anomalous diffusion of energy in low-dimensional lattices \cite{D,LLP} in contact with reservoirs \cite{DS,DSS,LP}. It is for example argued in \cite{LP} that the density profiles of Lévy walkers in a finite box with absorbtion-reflection-creation well reproduces the temperature profile of some chains of harmonic oscillators with conservative momentum-energy noise and thermostat boundaries. It is well established that superdiffusive systems are much more sensitive to the reservoirs and boundaries than diffusive systems but quantitative informations, like the form of the singularities of the profiles at the boundaries, are still missing. 

In this work, motivated by these studies, we propose a simple interacting particle system which may be considered as a substitute to Lévy flights in bounded domains with reservoirs when Lévy flights are moreover {\textit{interacting}}. Indeed, the previous studies consider only non-interacting cases. The system considered here is composed of interacting Lévy flights on a one-dimensional lattice. More exactly, the system is an exclusion process on a finite lattice of size $N$ with jumps having a distribution in the form $p(z) \sim |z|^{-(1+\gamma)}$, $1<\gamma<2$,  and which in contact with some reservoirs at density $\alpha$ (resp. $\beta$) at its left (resp. right boundary). The reservoirs coupling is modulated by a prefactor $\kappa N^{-\theta}$, $\kappa>0$, $\theta \in \RR$.  \pat{In this work we focus on the case $\theta\leq 0$ and  the case $\theta>0$ remains open.}

Our main result is the derivation of the hydrodynamic limit for the density of particles for this system. The limiting PDE depends {\footnote{In the diffusive case $\gamma>2$ the limiting PDE is given by the heat equation with Dirichlet boundary conditions \cite{BGJO}. It does nod depend of $\kappa$.}} on the value of $\kappa$ and takes the form of a fractional heat equation with a singular reaction term, \pat{see \eqref{eq:Dirichlet Equation}}. The singular reaction term fixes the density on the left to be $\alpha$ and on the right to be $\beta$. In our opinion this singular reaction term, which is due to the presence of the reservoirs, should be more considered as a boundary condition than as a reaction term. We obtain in this way a new family of regional fractional Laplacians on $[0,1]$ with zero Dirichlet boundary conditions indexed by $\kappa$ and taking the form
\begin{equation}\label{def:LLKappa}
{\bb L}_\kappa= \LL -\kappa V_1, \quad V_1 (u) = c_\gamma \gamma^{-1} (u^{-\gamma} +(1-u)^{-\gamma} ),
\end{equation}
where $c_\gamma$ is a constant depending on $\gamma$. These operators are symmetric non-positive when restricted to the set of smooth functions compactly supported in $(0,1)$. For $\kappa=1$, we recover the so-called restricted fractional Laplacian while in the limit $\kappa \to 0$ we get the so-called regional fractional Laplacian.  We recall that since the fractional Laplacian is a non-local operator, the definition of a fractional Laplacian with Dirichlet boundary conditions is not obvious from a modeling point of view. \bjo{In the PDE's literature several candidates have been proposed, for instance, "restricted fractional Laplacian", "spectral fractional Laplacian", "Neumann Fractional Laplacian " \cite{ BKO,Vaz}, but often without a clear physical interpretation}. A probabilistic interpretation of these operators is sometimes possible and may enlighten their meaning. The restricted fractional Laplacian ($\kappa=1$) corresponds to the generator of a $\gamma$-Lévy stable process killed outside of $(0,1)$, while the regional fractional Laplacian ($\kappa=0$) corresponds to the generator of a censored $\gamma$-Lévy stable process on $(0,1)$ \cite{censored,GM}. For $\kappa \ne 0,1$ we could rely on the Feynman-Kac formula but we do not pursue this issue here. \pat{As mentioned above our reservoirs are regulated by the parameters   $\kappa N^{-\theta}$, $\kappa>0$ and in this work  we focus on the case $\theta\leq 0$. The case $\theta> 0$ is quite interesting and we conjecture that 
for small values of $\theta>0$ it is given by \eqref{eq:Dirichlet Equation} for the choice $\kappa=0$. To support this conjecture, in Theorem \ref{convergence_rho^k_to_rho^0}, we analyse the convergence of the profile that we obtained for $\theta=0$ and which is indexed in $\kappa$, when $\kappa\to 0$ (we also analyse the case $\kappa\to\infty$ confirming the behaviour obtained from the microscopic system when $\theta < 0$) and  indeed, we obtain that the limiting profiles are weak solution of the conjectured equation. We remark that the main problem in analysing the behavior of the microscopic system in this case is at the level of the derivation of the Dirichlet boundary conditions, since the  two-blocks  estimate does not work. We leave this open problem for a future work.}
After having obtained the hydrodynamic limits, we have studied their stationary solutions $\bar \rho^\kappa$, which are not explicit apart from the case $\kappa=1$  and the case $\kappa=\infty$, i.e. $\bar \rho^\infty = \lim_{\kappa \to \infty} \bar \rho^\kappa$. These profiles coincide with the profiles of the microscopic system in their non-equilibrium stationary states (see \cite{BJ} for the $\kappa=1$ case). The bounded continuous function $\bar \rho^\kappa$ has $\alpha$ and $\beta$ as boundary conditions and is such that it solves in a distributional sense \pat{the equation}
\begin{equation}
\label{eq:mml}
\bb L_\kappa \bar \rho^\kappa =-\kappa V_0, \quad V_0 (u) = c_\gamma \gamma^{-1} (\alpha u^{-\gamma} + \beta (1-u)^{-\gamma} ).
\end{equation}
There are many recent studies focusing on the regularization properties of fractional operators in bounded domains. Even in this one dimensional setup, the question is in general non trivial.  For $\kappa=1$, $\bar\rho^\kappa$ can be computed explicitly and it appears that it is smooth in the interior of $[0,1]$ but has only H\"older regularity equal to $\gamma/2$ at the boundaries. For $\kappa\ne 1$, it should be possible to prove the interior regularity of $\bar \rho^\kappa$ by some existing methods (\cite{Mou}) but the boundary regularity that numerical simulations seem to indicate to depend on $\kappa$ is much more challenging and seems to be open. We prove that as $\kappa \to 0$, $\bar \rho^\kappa \to \bar \rho^0$ in a suitable topology and that $\bar \rho^0$ is a weakly harmonic function of the regional fractional Laplacian $\bb L_0$, i.e. we can take $\kappa=0$ in (\ref{eq:mml}). We left these interesting questions for future works.

The paper is organized as follows. In Section    \pat{\ref{sec:stat_results} we introduce the model  and we present all the PDE's that will be related to its hydrodynamic limit. We also present the main results of this work, namely the hydrodynamic limit stated in Theorem \ref{theo:hydro_limit}, the convergence, when $\kappa\to0$ and when $\kappa\to\infty$, of the hydrodynamical profile  in Theorem \ref{convergence_rho^k_to_rho^0} and of the stationary profile in Theorem \ref{Existence_uniqueness_convergence_stationary}.
Section \ref{sec:proof_hyd_limit}
is devoted to the proof of Theorem \ref{theo:hydro_limit} while Sections \ref{sec: Study of solution} and \ref{sec:Exis_unq_sta_sol} are dedicated, respectively,  to the convergence of the hydrodynamical profile and of the stationary profile. Finally, in Section \ref{sec:Uniqueness} we prove the uniqueness of all the weak solutions that we consider in this work.}
\section{Statement of results}\label{sec:stat_results}

\subsection{The model}
\label{sec:model}
For $N\geq{2}$ let  $\Lambda_N=\{1, \ldots, N-1\}$.  The boundary driven exclusion process with long jumps is a  Markov process that we denote by $\{\eta(t)\}_{t\geq{0}}$ with state space $\Omega_N:=\{0,1\}^{\Lambda_N}$ and is defined as follows. The configurations of the state space $\Omega_N$  are denoted by $\eta$, so that for $x\in\Lambda_N$,  $\eta_{x}=0$ means that the site $x$ is vacant while $\eta_{x}=1$ means that the site $x$ is occupied. Fix $\gamma \in (1,2)$. Let $p:\mathbb{Z}\rightarrow{[0,1]}$ be a translation invariant transition probability defined by
\begin{equation}\label{transition_p}
p(z) = c_{\gamma}\dfrac{{\bb 1}_{\{z \ne 0\}}}{\vert z\vert^{\gamma+1}}\, 
\end{equation}
where  $c_{\gamma}$ is a normalizing constant. Since $\gamma\in (1,2)$, we know that $p$ has infinite variance but finite mean.  

Fix $0<\alpha\leq\beta<1$. We consider the process in contact with infinitely many stochastic  reservoirs with density $\alpha$ at all the negative integer sites  and  with density ${\beta}$ at all the integer sites $z\geq N$. The intensity of the reservoirs is regulated by a parameter $\kappa N^{-\theta}$ where $\kappa>0$ and $\theta\leq 0$.

 The  process is characterized by its infinitesimal generator \begin{equation}
\label{Generator}
 L_{N} = L_{N}^{0}+\kappa N^{-\theta} L_{N}^{\ell}+ \kappa N^{-\theta} L_{N}^{r},
\end{equation}
which acts on functions  $f:\Omega_N \to \RR$ as
\begin{equation}\label{generators}
\begin{split}
&(L^0_N f)(\eta) =\cfrac{1}{2} \, \sum_{x,y \in \Lambda_N} p(x-y) [ f(\sigma^{x,y}\eta) -f(\eta)],\\
&(L_N^{\ell} f)(\eta) =\sum_{\substack{x \in \Lambda_N\\ y \le 0}} p(x-y)c_{x}(\eta;\alpha) [f(\sigma^x\eta) - f(\eta)],\\
&(L_N^{r} f)(\eta)= \sum_{\substack{x \in \Lambda_N \\ y \ge N}} p(x-y) c_{x}(\eta;\beta)  [f(\sigma^x\eta) - f(\eta)]
\end{split}
\end{equation}
where 
\begin{equation*}
(\sigma^{x,y}\eta)_z = 
\begin{cases}
\eta_z,& \textrm{if}\;\; z \ne x,y,\\
\eta_y,& \textrm{if}\;\; z=x,\\
\eta_x,& \textrm{if}\;\; z=y
\end{cases}
, \quad (\sigma^x\eta)_z= 
\begin{cases}
\;\; \eta_z, &\textrm{if}\;\; z \ne x,\\
1-\eta_x,& \textrm{if}\;\; z=x,
\end{cases}
\end{equation*}
and  for a function $\vf:[0,1]\rightarrow \RR$ and for $x\in \Lambda_{N}$ we used the notation 
\begin{equation}\label{rate_c}
c_{x} (\eta;\vf(\cdot)) :=\left[ \eta_x  \left(1-\vf(\tfrac{x}{N}) \right) + (1-\eta_x)\vf(\tfrac{x}{N})\right].
\end{equation}

We consider  the Markov process  speeded up in the subdiffusive time scale $t\Theta(N)$ and we use the notation $\eta_{t}^{N}:= \eta(t\Theta(N))$, so that $\eta_t^N$ has  infinitesimal generator $\Theta(N) L_{N}$. Although $\eta _{t}^{N}$ depends on $\alpha$, $\beta$ $\theta$ and $\kappa$, we shall omit these indexes in order to simplify notation. 

\subsection{Hydrodynamic equations}
\label{subsec:hyd_eq}
From now on up to the rest of this article we fix a finite time horizon $[0,T]$. 
To properly  state the hydrodynamic limit, we need to introduce some notations and definitions, which we present as follows: {first} we abbreviate the Hilbert space $L^{2}([0,1],h(u)du)$ by $L^{2}_{h}$ and we denote  its inner product by $\langle \cdot,\cdot\rangle _{h}$ and the corresponding norm  by $\Vert\cdot \Vert_{h}$. When $h\equiv 1$  we simply write $L^{2}$, $\langle \cdot,\cdot\rangle$ and $\Vert\cdot\Vert$. For  an interval $I$ in $\RR$ and integers
$m$ and $n$, we denote by $C^{m,n}([0, T] \times  I)$ the set of functions defined on $[0, T] \times I $ that are $m$ times differentiable on the first variable and $n$ times differentiable  on the second variable. We denote by $C_{c}^{\infty}( I)$ the set of all smooth real-valued  functions defined in $ I$ with compact support included in $ I$. The supremum norm is denoted by $\Vert \cdot\Vert_{\infty}$. We also consider the set $C_c^{1,\infty} ([0,T]\times  I)$ of functions $G \in C^{1,\infty}([0, T] \times I)$ such that $G(t,\cdot) \in C_{c}^{\infty} (I)$ for all $t \in [0, T]$. An index on a function will always denote a
variable, not a derivative. For example, $G_{t}(u)$ means $G(t, u)$. The derivative of $G \in C^{m,n}([0, T] \times I)$ will be denoted by $\partial _{{t}}G$ (first variable) and $\partial _{u}G$ (second variable). 

The fractional Laplacian $-(-\Delta)^{\gamma/2}$ of exponent $\gamma/2$  is defined on the set of functions $G:\RR \to \RR$ such that
\begin{equation}
\label{eq:integ1}
\int_{-\infty}^{\infty} \cfrac{|G(u)|}{(1 +|u|)^{1+\gamma}} du < \infty
\end{equation}
by
\begin{equation}
-(-\Delta)^{\gamma/2} G \, (u) = c_\gamma  \lim_{\ve \to 0} \int_{-\infty}^{\infty} {\bb 1}_{|u-v| \ge \ve} \, \cfrac{G(v) -G(u)}{|u-v|^{1+\gamma}} dv
\end{equation}
provided the limit exists (which is the case, for example, if $G$ is in the Schwartz space) and {where $c_{\gamma}$ is set in (\ref{transition_p})}. Up to a multiplicative constant, $-(-\Delta)^{\gamma/2}$ is the generator of a $\gamma$-L\'evy stable process. 

{We define the operator} $\bb L$ by its action on functions \pat{$G \in C_c^{\infty} ((0,1))$,} by
$$\forall u \in (0,1), \quad ({\bb L} G)(u) = c_\gamma  \lim_{\ve \to 0} \int_{0}^{1} {\bb 1}_{|u-v| \ge \ve} \, \cfrac{G(v) -G(u)}{|u-v|^{1+\gamma}} dv.$$
The operator ${\bb L}$ is called the \textit{regional fractional Laplacian} on $(0,1)$. The semi inner-product $\langle \cdot, \cdot \rangle_{\gamma/2}$ is defined on the set \pat{$C_c^{\infty}((0,1))$} by 
\begin{equation}
\langle G, H \rangle_{\gamma/2} =  \cfrac{c_{\gamma}}{2} \iint_{[0,1]^2} \cfrac{(H(u) -H(v)) (G(u) -G(v))}{|u-v|^{1+\gamma}} \, du dv.
\end{equation}  
The corresponding semi-norm is denoted by $\| \cdot \|_{\gamma/2}$. Observe that for any $G,H \in C_c^{\infty} ((0,1))$ we have that
\begin{equation*}
\langle G, {-\bb L} H \rangle = \langle {-\bb L} G, H \rangle = \langle G, H \rangle_{\gamma/2}.
\end{equation*}
 Recall \eqref{def:LLKappa}. We  introduced a family of  operators indexed by $\kappa$ and taking the form
\begin{equation*}
{\bb L}_\kappa= \LL -\kappa V_1.
\end{equation*}
Acting on $ C_{c}^{\infty}((0,1))$ these operators are symmetric and non-positive. For $\kappa=1$, we recover the so-called restricted fractional Laplacian (see \cite{Vaz}):
\begin{equation}\label{Operator_LL}
\forall u \in (0,1), \quad -(-\Delta)^{\gamma/2}  G\, (u)  =({\bb L} G)(u) -  V_1(u) G(u):=(\LL_{1}G)(u),
\end{equation}
 while in the limit $\kappa \to 0$ we get the  regional fractional Laplacian. 

We rewrite  $V_1(u)=r^-(u)+r^+(u)$ and $V_0(u)= \alpha r^-(u)+\beta r^+(u)$  where the functions $r^{\pm}: (0,1) \to (0, \infty)$ are defined by
\begin{equation} \label{def:rpm}
r^- (u)=c_\gamma \gamma^{-1} u^{-\gamma},\quad  r^+ (u) = c_\gamma \gamma^{-1} (1-u)^{-\gamma}.
\end{equation}

\begin{definition}
\label{Def. Sobolev space}
The Sobolev space $\mathcal{H}^{\gamma/2}:=\mathcal{H}^{\gamma/2}([0,1])$ consists of all square integrable functions $g: (0,1) \rightarrow \RR$ such that $\| g \|_{\gamma/2} <\infty$. This is a Hilbert space for the norm $\| \cdot\|_{{\mc H}^{\gamma/2}}$ defined by
$$\Vert g \Vert_{\mathcal{H}^{\gamma/2}}^{2}:= \Vert g \Vert^{2} + \Vert g \Vert_{\gamma/2}^{2} .$$
Its elements elements coincide a.e. with continuous functions. The completion of $C_c^{\infty} ((0,1))$ for this norm is denoted by ${\mc H}_0^{\gamma/2}:={\mc H}_0^{\gamma/2}([0,1])$. This is a Hilbert space whose elements coincide a.e. with continuous functions vanishing at $0$ and $1$. On ${\mc H}_0^{\gamma/2}$, the two norms $ \| \cdot \|_{{\mc H}^{\gamma/2}}$ and  $\| \cdot \|_{\gamma/2}$ are equivalent.   

The space $L^{2}(0,T;\mathcal{H}^{\gamma/2})$ is the set of measurable functions $f:[0,T]\rightarrow  \mathcal{H}^{\gamma/2}$ such that 
$$\int^{T}_{0} \Vert f_{t} \Vert^{2}_{\mathcal{H}^{\gamma/2}}dt< \infty. $$
The spaces $L^{2}(0,T;\mathcal{H}_0^{\gamma/2})$ and $L^{2}(0,T;L^{2}_h)$  are defined similarly.
\end{definition}

We now extend the definition of the regional fractional Laplacian on $(0,1)$, which has been defined on $C^{\infty}((0,1))$, to the space $\mc H ^{\gamma/2}$.

\begin{definition}\label{def:Dist}
For $\rho \in \mc H^{\gamma/2}$ we define the distribution $\LL \rho$ by $$\langle \LL\rho,G\rangle = \langle \rho,\LL G\rangle,\quad  G\in C_{c}^{\infty}((0,1)). $$
\end{definition}

Let us check that $\LL\rho$ is indeed a well defined distribution. Consider a sequence $\{G_{n}\}_{n\geq 1} \in C_{c}^{\infty}((0,1))$ converging to $0$  in the usual topology of the test functions. By the integration by parts formula for the regional fractional Laplacian (see Theorem 3.3 in \cite{GM}) we have for any $\rho\in \mc H^{\gamma/2}$ that $ \langle \LL\rho,G_{n}\rangle = \langle \rho, G_{n}\rangle_{\gamma/2}$. Now using the Cauchy-Schwarz's inequality and the mean value Theorem, we get that  $\langle \LL\rho,G_{n}\rangle$ is bounded from above by a constant times
$$ \Vert \rho \Vert_{\gamma/2}\Vert G_{n} \Vert_{\gamma/2}\ls \Vert \rho \Vert_{\gamma/2}\Vert G'_{n} \Vert_{\infty}^{2}\iint_{[0,1]^{2}} \vert u -v \vert^{1-\gamma}dudv $$
which goes to $0$ as $n\to \infty$ since $\gamma\in (1,2)$. Therefore $\LL\rho$ is a well defined distribution. 

\bjo{Above (and hereinafter) we write $f(u) \lesssim g(u)$ if there exists a constant $C$ independent of $u$ such that $f(u) \le C g(u)$ for every $u$. We will also write $f(u) = {O} (g(u) )$ if the condition $|f (u) | \lesssim |g(u) |$ is satisfied. Sometimes, in order to stress the dependence of a constant $C$ on some parameter $a$, we write $C(a)$.}
 
\subsection{Hydrodynamic equations}
\label{subsec:hyd_eq}
Now, for the following definitions recall the definition of $\LL_{\kappa}$ given in \eqref{def:LLKappa} and $V_{0}$ from \eqref{eq:mml}.

\begin{definition}
\label{Def. Dirichlet Condition}
 Let $\hat\kappa\geq 0$ be some parameter and let $g:[0,1]\rightarrow [0,1]$ be a measurable function. We say that  $\rho^{\hat\kappa}:[0,T]\times[0,1] \to [0,1]$ is a weak solution of the {non-homogeneous regional} fractional reaction-diffusion equation with Dirichlet boundary conditions given by  
\begin{equation}\label{eq:Dirichlet Equation}
\begin{cases}
&\partial_{t} \rho_{t}^{\hat\kappa}(u)= \LL_{\hat\kappa} \rho_t^{\hat\kappa}(u)+\hat\kappa V_0(u),  \quad (t,u) \in [0,T]\times(0,1),\\
 &{ \rho^{\hat\kappa}_{t}}(0)=\alpha, \quad { \rho^{\hat\kappa}_{t}}(1)=\beta,\quad t \in [0,T], \\
 &{ \rho}_{0}^{\hat\kappa}(u)= g(u),\quad u \in (0,1),
 \end{cases}
 \end{equation}
 if : 
\begin{enumerate}[i)] 

\item  $\rho^{\hat\kappa} \in L^{2}(0,T;\mathcal{H}^{\gamma/2})$.
\item   $\int_0^T \int_0^1 \Big\{ \frac{(\alpha-\rho_t^{\hat\kappa}(u))^2}{u^\gamma}+\frac{(\beta-\rho_t^{\hat\kappa}(u))^2}{(1-u)^\gamma}\Big\} \, du\, dt <\infty$ for $\hat\kappa>0$;   $\rho_t^{\hat\kappa}(0)=\alpha$, $\rho_t^{\hat\kappa}(1)=\beta$ for almost every $t\in[0,T]$,   for  $\hat\kappa=0$. 
\item For all $t\in [0,T]$ and {all functions} $G \in C_c^{1,\infty} ([0,T]\times (0,1))$ we have that 
\begin{equation}
\label{eq:Dirichlet integral}
\begin{split}
F_{Dir}(t, \rho^{\hat\kappa},G,g):=&\left\langle \rho_{t}^{\hat\kappa},  G_{t} \right\rangle -\left\langle g,   G_{0}\right\rangle - \int_0^t\left\langle \rho_{s}^{\hat\kappa},\Big(\partial_s + \bb L_{\hat\kappa} \Big) G_{s}  \right\rangle ds 
-\hat\kappa \int^{t}_{0}\left\langle G_s , V_0 \right\rangle\,ds=0.
\end{split}   
\end{equation}
\end{enumerate}
\end{definition}

\begin{rem}\label{use:rem_dir}
Note that item ii) is different for $\hat\kappa>0$ and $\hat\kappa=0$. We can see that the condition for $\hat\kappa=0$ is weaker than the condition for $\hat\kappa>0$. In fact, item i) and item ii) for $\hat\kappa>0$ of the previous definition imply that $\rho_t^{\hat\kappa} (0)=\alpha $ and $\rho_{t}^{\hat\kappa} (1) =\beta $, for almost every $t$ in $[0,T]$. Indeed, first note that by item i) we know that $\rho_t$  is  $\tfrac{\gamma-1}{2}$-H\"older for almost every $t$ in $[0,T]$ (see Theorem 8.2 of \cite{dPV} ). Then, we note that 
\begin{equation*}
 \int_{0}^{T} \dfrac{(\rho_{t}^{\hat\kappa}(0) - \alpha)^{2}}{\gamma-1}dt =\int_{0}^{T}\lim _{\ve\to 0}\ve ^{\gamma-1}\int_{\ve}^{1} \dfrac{(\rho_{t}^{\hat\kappa}(0) - \alpha)^{2}}{u^{\gamma}}dudt. 
\end{equation*}
By summing and subtracting $\rho_t^{\hat\kappa}(u)$ inside the square in the expression on the right hand side in   the previous equality and using the inequality $(a+b)^2\leq 2a^{2} +2b^{2}$ we get that the right hand side of  the previous equality  is bounded from above by
\begin{equation*}
\begin{split}
& 2\int_{0}^{T}\lim _{\ve\to 0}\ve ^{\gamma-1}\int_{\ve}^{1} \dfrac{(\rho_{t}^{\hat\kappa}(0) - \rho^{\hat\kappa}_{t}(u))^{2}}{u^{\gamma}}dudt 
+2\int_{0}^{T}\lim _{\ve\to 0}\ve ^{\gamma-1}\int_{\ve}^{1} \dfrac{(\rho_{t}^{\hat\kappa}(u) - \alpha)^{2}}{u^{\gamma}}dudt.
\end{split}
\end{equation*}
Since $\rho_t$  is  $\tfrac{\gamma-1}{2}$-H\"older for almost every $t$ in $[0,T]$  the term on the left hand side in the previous expression vanishes. Now, the term on the right hand side in the previous expression is bounded from above by
$$2\lim _{\ve\to 0}\ve ^{\gamma-1}\int_{0}^{T}\int_{0}^{1} \dfrac{(\rho_{t}^{\hat\kappa}(u) - \alpha)^{2}}{u^{\gamma}}dudt,$$
which vanishes {as a consequence of} item ii).
 Thus, we have that 
 $$\int_{0}^{T} \dfrac{(\rho_{t}^{\hat\kappa}(0) - \alpha)^{2}}{\gamma-1}dt =0,$$
 whence we get that  $\rho_t^{\hat\kappa} (0)=\alpha $ for almost every $t$ in $[0,T]$. Showing that $\rho_t^{\hat\kappa}(1)=\beta$ for almost every $t$ in  $[0,T]$ is completely analogous.
 
  Moreover, the existence and uniqueness {of a weak solution to the equation above,} for $\hat\kappa>0$ does not require the strong form of ii). Nevertheless, in order to prove Theorem \ref{convergence_rho^k_to_rho^0} we need {to impose} that condition.
\end{rem}

\begin{rem}
Observe that in the case  $\hat\kappa=1$, since $\LL_{1} =-(-\Delta)^{\gamma/2}$ we obtain {in Definition \ref{Def. Dirichlet Condition}} the fractional heat equation with reaction  and  Dirichlet boundary conditions, i.e.
  \begin{equation*}
 \begin{cases}
 &\partial_{t} \rho_{t}^{1}(u)= \LL_{1}\rho_t^{1}(u) + V_{0}(u), \quad (t,u) \in [0,T]\times(0,1),\\
 &{ \rho} _{t}^{1}(0)=\alpha, \quad { \rho}_{t}^{1}(1)=\beta,\quad t \in [0,T], \\
 &{ \rho}_{0}^{1}(u)= g(u),\quad u \in (0,1),
 \end{cases}
 \end{equation*}
  by \eqref{Operator_LL} and \eqref{def:LLKappa} the notion {of item iii) is reduced to} 
 \begin{equation*}
F_{Dir}(t, \rho^{1},G,g):=\left\langle \rho_{t}^{1},  G_{t} \right\rangle  -\left\langle g, G_{0} \right\rangle 
- \int_0^t\left\langle \rho_{s}^{1}, \Big(\partial_s -(-\Delta)^{\gamma/2} \Big) G_{s} \right\rangle ds - \int^{t}_{0}\left\langle G_s , V_0\right\rangle ds=0,   
 \end{equation*}
{for all $t\in [0,T]$ and all functions $G \in C_c^{1,\infty} ([0,T]\times (0,1))$.}
\end{rem}

\begin{definition}
\label{Def. Dirichlet Condition_kappa^infty}
 Let $\hat \kappa > 0$ be some parameter and let  $g:[0,1]\rightarrow [0,1]$ be a measurable function. We say that  $\rho^{\hat\kappa}:[0,T]\times[0,1] \to [0,1]$ is a weak solution of the {non-homogeneous} reaction equation with Dirichlet boundary conditions given by  
 
\begin{equation}\label{eq:Dirichlet Equation_infty}
 \begin{cases}
 &\partial_{t} \rho_{t}^{\hat \kappa}(u)= -\hat\kappa\rho^{\hat\kappa}_{t}(u)V_{1}(u) +\hat\kappa V_{0}(u),  \quad (t,u) \in [0,T]\times(0,1),\\
 &{ \rho} _{t}^{\hat\kappa}(0)=\alpha, \quad { \rho}_{t}^{\hat\kappa}(1)=\beta,\quad t \in [0,T], \\
 &{ \rho}_{0}^{\hat\kappa}(u)= g(u), \quad u \in (0,1),
 \end{cases}
 \end{equation}
if: 
\begin{enumerate}[i)] 

\item   $\int_0^T \int_0^1 \Big\{ \frac{(\alpha-\rho_t^{\hat\kappa}(u))^2}{u^\gamma}+\frac{(\beta-\rho_t^{\hat\kappa}(u))^2}{(1-u)^\gamma}\Big\} \, du\, dt <\infty$.
  
\item For all $t\in [0,T]$ and {all functions} $G \in C_c^{1,\infty} ([0,T]\times (0,1))$ we have
\begin{equation}\label{eq:Dirichlet integral_infty}
\begin{split}
F_{Reac}(t, \rho^{\hat\kappa},G,g):=&\left\langle \rho_{t}^{\hat\kappa},  G_{t} \right\rangle  -\left\langle g,   G_{0}\right\rangle- \int_0^t\left\langle \rho_{s}^{\hat\kappa},\partial_s G_{s}  \right\rangle ds
\\
&+ \int^{t}_{0} \left\langle \rho_{s}^{\hat\kappa},G_s \right\rangle_{V_1} ds - \int^{t}_{0}\left\langle  G_s ,V_0\right\rangle ds=0.  
\end{split}
\end{equation}
\end{enumerate}
\end{definition}
\begin{rem}\label{rem:Explicit_sol_rho_infty}
Note that the explicit solution of  (\ref{eq:Dirichlet Equation_infty}) is given by 
$$ \bar\rho^{\infty}(u) + (g(u)- \bar\rho^{\infty}(u))e^{-t\hat\kappa V_{1}(u)},$$
where $\bar\rho^{\infty}(u) = \dfrac{V_{0}(u)}{V_{1}(u)}$. As we will see, the function $ \bar\rho^{\infty}$ plays an important role {in the proof of our main results, namely,}  Theorems \ref{convergence_rho^k_to_rho^0} and \ref{Existence_uniqueness_convergence_stationary}.  
\end{rem}
{
\begin{lem}\label{lem:uniquess}
The weak solutions of (\ref{eq:Dirichlet Equation}) and (\ref{eq:Dirichlet Equation_infty}) are unique.
\end{lem}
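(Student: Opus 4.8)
The plan is to prove the two uniqueness statements separately: \eqref{eq:Dirichlet Equation_infty} is, pointwise in $u$, a linear first-order ODE in time and can be dispatched by hand, whereas \eqref{eq:Dirichlet Equation} is genuinely parabolic and calls for an energy estimate. In both cases I would work with the difference of two solutions: given weak solutions $\rho^{1},\rho^{2}$ attached to the same data $(\hat\kappa,\alpha,\beta,g)$, put $w:=\rho^{1}-\rho^{2}$; subtracting the two weak formulations, all terms depending only on the data (those carrying $g$ and $V_{0}$) cancel, so $w$ satisfies the corresponding \emph{homogeneous} weak formulation with $w_{0}=0$, and one must show $w\equiv 0$.

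For the reaction equation \eqref{eq:Dirichlet Equation_infty}, fix $t\in[0,T]$ and $\varphi\in C_{c}^{\infty}((0,1))$ and use in \eqref{eq:Dirichlet integral_infty} the integrating-factor test function $G_{s}(u):=\varphi(u)\,e^{\hat\kappa V_{1}(u)(s-t)}$. Since $\varphi$ is supported in a compact subset of $(0,1)$ on which $V_{1}$ is smooth and bounded, $G\in C_{c}^{1,\infty}([0,T]\times(0,1))$, and by construction $\partial_{s}G_{s}=\hat\kappa V_{1}G_{s}$; with this choice the term $\int_{0}^{t}\langle w_{s},\partial_{s}G_{s}\rangle\,ds$ cancels exactly against the $V_{1}$-term of the homogeneous \eqref{eq:Dirichlet integral_infty}, and, since $w_{0}=0$, one is left with $\langle w_{t},G_{t}\rangle=\langle w_{t},\varphi\rangle=0$. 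As $\varphi$ and $t$ are arbitrary, $w_{t}\equiv 0$ on $(0,1)$ for every $t\in[0,T]$. (This matches the explicit formula in Remark \ref{rem:Explicit_sol_rho_infty}.)

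For the reaction-diffusion equation \eqref{eq:Dirichlet Equation} I would use the energy method. First, items i) and ii) of the definition give $w_{t}(0)=w_{t}(1)=0$ for a.e.\ $t$ (for $\hat\kappa>0$ this uses the argument of Remark \ref{use:rem_dir}), hence $w\in L^{2}(0,T;\mc H_{0}^{\gamma/2})$, while item ii) also yields $\int_{0}^{T}\!\!\int_{0}^{1}V_{1}(u)\,w_{t}(u)^{2}\,du\,dt<\infty$. Testing the homogeneous weak formulation against \emph{time-independent} $G\in C_{c}^{\infty}((0,1))$ and using the integration-by-parts identity $\langle w_{s},\LL G\rangle=-\langle w_{s},G\rangle_{\gamma/2}$ (Theorem 3.3 of \cite{GM}, cf.\ Definition \ref{def:Dist}) yields, for every $t\in[0,T]$,
\[
\langle w_{t},G\rangle=-\int_{0}^{t}\langle w_{s},G\rangle_{\gamma/2}\,ds-\hat\kappa\int_{0}^{t}\int_{0}^{1}V_{1}\,w_{s}\,G\,du\,ds .
\]
By Cauchy-Schwarz together with a fractional Hardy inequality $\int_{0}^{1}V_{1}G^{2}\,du\lesssim\|G\|_{\mc H^{\gamma/2}}^{2}$ on $\mc H_{0}^{\gamma/2}$ (valid since $\gamma/2\in(1/2,1)$), the right-hand side is controlled by $\|G\|_{\mc H^{\gamma/2}}$, so the identity extends to all $G\in\mc H_{0}^{\gamma/2}$; equivalently $\partial_{t}w=\LL_{\hat\kappa}w\in L^{2}(0,T;(\mc H_{0}^{\gamma/2})')$ with $w_{0}=0$. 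Since moreover $w\in L^{2}(0,T;\mc H_{0}^{\gamma/2})$, the standard integration-by-parts-in-time lemma for the Gelfand triple $\mc H_{0}^{\gamma/2}\hookrightarrow L^{2}\hookrightarrow(\mc H_{0}^{\gamma/2})'$ gives $w\in C([0,T];L^{2})$ and
\[
\tfrac12\|w_{t}\|^{2}=\tfrac12\|w_{0}\|^{2}+\int_{0}^{t}\langle \LL_{\hat\kappa}w_{s},w_{s}\rangle\,ds=-\int_{0}^{t}\Big(\|w_{s}\|_{\gamma/2}^{2}+\hat\kappa\!\int_{0}^{1}V_{1}w_{s}^{2}\,du\Big)ds\le 0 ,
\]
whence $w\equiv 0$ because $w_{0}=0$. (When $\hat\kappa=0$ the potential term is absent and no Hardy inequality is needed.)

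The delicate point is this last step: $w_{t}$ is not an admissible test function, so the formal relation $\tfrac12\tfrac{d}{dt}\|w_{t}\|^{2}=\langle \LL_{\hat\kappa}w_{t},w_{t}\rangle$ must be justified. This is precisely where one needs (i) the fractional Hardy inequality, which absorbs the singular potential $V_{1}$ into the $\gamma/2$-seminorm and guarantees $\LL_{\hat\kappa}w\in L^{2}(0,T;(\mc H_{0}^{\gamma/2})')$, and (ii) the abstract time-regularity lemma for Gelfand triples, or---if a self-contained argument is preferred---a careful mollification of $w$ in the time variable inside the weak formulation. Everything else (cancellation of the data terms, the boundary values of $w$, the signs of the two quadratic terms) is routine; note in particular that, the problem being linear with a dissipative structure, no Gronwall argument is needed---the energy identity alone forces $w\equiv 0$.
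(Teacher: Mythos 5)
Your argument is correct, but it differs from the paper's on both counts. The paper treats \eqref{eq:Dirichlet Equation} and \eqref{eq:Dirichlet Equation_infty} by one and the same device: writing $w=\rho^{1}-\rho^{2}$, it tests the homogeneous formulation against $G_{n}(s,\cdot)=\int_{s}^{T}H_{n}(r,\cdot)\,dr$ with $H_{n}\to w$ in $L^{2}(0,T;\mc H_{0}^{\gamma/2})$, and passes to the limit using only Cauchy--Schwarz and the integration by parts formula of \cite{GM} (Lemma \ref{lem:unique-conv-007}), arriving at the identity $\int_{0}^{T}\Vert w_{s}\Vert^{2}ds+\tfrac12\Vert\int_{0}^{T}w_{s}ds\Vert_{\gamma/2}^{2}+\tfrac{\kappa}{2}\Vert\int_{0}^{T}w_{s}ds\Vert_{V_{1}}^{2}=0$, whose first term alone kills $w$; no time regularity of $w$ is ever needed. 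You instead (a) dispatch the reaction equation with a pointwise integrating factor $G_{s}=\varphi\,e^{\hat\kappa V_{1}(\cdot)(s-t)}$ -- shorter and entirely elementary, exploiting that \eqref{eq:Dirichlet Equation_infty} is an ODE in $t$ for each $u$ (just make sure the exponent matches the coefficient actually appearing in front of the $V_{1}$-term of \eqref{eq:Dirichlet integral_infty}); and (b) run the classical Gelfand-triple energy method for \eqref{eq:Dirichlet Equation}, which requires identifying $\partial_{t}w\in L^{2}(0,T;(\mc H_{0}^{\gamma/2})')$ via the fractional Hardy inequality and then invoking the Lions--Magenes integration-by-parts-in-time lemma to legitimize $\tfrac{d}{dt}\Vert w_{t}\Vert^{2}=2\langle\partial_{t}w_{t},w_{t}\rangle$. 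You correctly flag this as the delicate step; it is standard but external, whereas the paper's time-integrated test function sidesteps it entirely. In exchange, your route yields the stronger conclusions $w\in C([0,T];L^{2})$ and a pointwise-in-time energy identity. Both proofs rely on the same preliminary reduction ($w_{t}(0)=w_{t}(1)=0$ via Remark \ref{use:rem_dir}, hence $w\in L^{2}(0,T;\mc H_{0}^{\gamma/2})$), and both are sound.
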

Aiming to concentrate in the main facts, the proof of previous lemma is postponed to Section \ref{sec:Uniqueness}.}

\begin{definition}
\label{Def. Stationary_RFRD}
 Let $\hat\kappa \geq 0$ be some parameter. We say that  $\bar\rho^{\hat\kappa}:[0,1] \to [0,1]$ is a weak solution of the stationary regional fractional reaction-diffusion equation with {non-homogeneous} Dirichlet boundary conditions given by  
  \begin{equation}\label{eq:Stationary_RFRD}
 \begin{cases}
 & \LL_{\hat\kappa}\bar\rho^{\hat\kappa}(u)+\kappa V_0(u) = 0,  \quad  u \in (0,1),\\
 &{\bar\rho^{\hat\kappa}} (0)=\alpha, \quad { \bar\rho^{\hat\kappa}}(1)=\beta, 
 \end{cases}
 \end{equation}
if: 
\begin{enumerate}[i)] 
\item  $\bar\rho^{\hat\kappa} \in \mathcal{H}^{\gamma/2}$.
\item $\int^{1}_{0}\left\lbrace\tfrac{\left(\alpha-\bar\rho^{\hat\kappa}(u)\right)^{2}}{u^{\gamma}}+\tfrac{\left(\beta-\bar\rho^{\hat\kappa}(u)\right)^{2}}{u^{\gamma}} \right\rbrace du <\infty$ if $\hat\kappa>0$ and  ${\bar\rho^{\hat\kappa}} (0)=\alpha$, ${ \bar\rho^{\hat\kappa}}(1)=\beta$ if $\hat\kappa = 0$.
\item For any function $G \in C_c^{\infty} ((0,1))$ we have
\begin{equation*}
\bar F_{Dir}(\bar \rho^{\hat\kappa},G):=
\left\langle \bar\rho^{\hat\kappa}, \bb L_{\hat\kappa}  G \right\rangle 
 +\hat\kappa \left\langle G, V_0 \right\rangle =0.  
\end{equation*}
\end{enumerate}
\end{definition}

\begin{rem}
\pat{We observe  that $\bar\rho^{0}$ is a weak harmonic function for $\LL$ and}  the interior regularity of this solution is studied in \cite{Mou}, but the regularity at the boundary is unknown.
\end{rem}
In Section \ref{sec:Uniqueness}  we will prove the following lemma.
{
\begin{lem}
\label{existence and uniqueness}
There exists a unique weak solution of (\ref{eq:Stationary_RFRD}).
\end{lem}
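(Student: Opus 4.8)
The plan is to solve \eqref{eq:Stationary_RFRD} by a Riesz/Lax--Milgram argument after subtracting a lift of the Dirichlet data, and then to read off uniqueness from the same variational problem. Write $\langle f,g\rangle_{V_1}=\int_0^1 f(u)g(u)V_1(u)\,du$ and, for $\hat\kappa\ge 0$, let $\mc E_{\hat\kappa}(f,g)=\langle f,g\rangle_{\gamma/2}+\hat\kappa\,\langle f,g\rangle_{V_1}$, so that $\langle f,-\bb L_{\hat\kappa}G\rangle=\mc E_{\hat\kappa}(f,G)$ whenever $G\in C_c^{\infty}((0,1))$ and $f\in\mc H^{\gamma/2}$ (the $\LL$--part being the integration by parts recalled after Definition \ref{def:Dist}, i.e. Theorem 3.3 in \cite{GM}). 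The one analytic input I would invoke is the fractional Hardy inequality on $(0,1)$, valid because $\gamma/2>1/2$: every $G\in\mc H_0^{\gamma/2}$ satisfies $\langle G,G\rangle_{V_1}\ls\|G\|_{\gamma/2}^{2}$. Hence $\mc E_{\hat\kappa}$ is a symmetric, continuous and coercive bilinear form on the Hilbert space $(\mc H_0^{\gamma/2},\langle\cdot,\cdot\rangle_{\gamma/2})$, with $\|G\|_{\gamma/2}^{2}\le\mc E_{\hat\kappa}(G,G)\ls(1+\hat\kappa)\|G\|_{\gamma/2}^{2}$.

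For the lift I would take $\bar\rho^{\infty}=V_0/V_1$ from Remark \ref{rem:Explicit_sol_rho_infty}. Using $V_0=\alpha r^{-}+\beta r^{+}$ and $V_1=r^{-}+r^{+}$ (see \eqref{def:rpm}) one checks directly that $\bar\rho^{\infty}$ is $C^{1}$, hence Lipschitz, on $[0,1]$, that $\alpha\le\bar\rho^{\infty}\le\beta$ with $\bar\rho^{\infty}(0)=\alpha$, $\bar\rho^{\infty}(1)=\beta$, and that $\bar\rho^{\infty}(u)-\alpha=O(u^{\gamma})$ near $0$ and $\bar\rho^{\infty}(u)-\beta=O((1-u)^{\gamma})$ near $1$; in particular $\bar\rho^{\infty}\in\mc H^{\gamma/2}$ and $\int_0^1\{(\bar\rho^{\infty}-\alpha)^{2}u^{-\gamma}+(\bar\rho^{\infty}-\beta)^{2}(1-u)^{-\gamma}\}\,du<\infty$. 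For $G\in C_c^{\infty}((0,1))$ all the integrals below converge absolutely, and, using $\bb L_{\hat\kappa}=\LL-\hat\kappa V_1$ and $V_0=\bar\rho^{\infty}V_1$, item iii) of Definition \ref{Def. Stationary_RFRD} is equivalent to
\begin{equation*}
\langle\bar\rho^{\hat\kappa},G\rangle_{\gamma/2}+\hat\kappa\,\langle\bar\rho^{\hat\kappa}-\bar\rho^{\infty},G\rangle_{V_1}=0 .
\end{equation*}
Writing $\bar\rho^{\hat\kappa}=\bar\rho^{\infty}+w$ with $w\in\mc H_0^{\gamma/2}$, this becomes $\mc E_{\hat\kappa}(w,G)=-\langle\bar\rho^{\infty},G\rangle_{\gamma/2}$, whose right-hand side is a bounded functional on $(\mc H_0^{\gamma/2},\langle\cdot,\cdot\rangle_{\gamma/2})$ since $|\langle\bar\rho^{\infty},G\rangle_{\gamma/2}|\le\|\bar\rho^{\infty}\|_{\gamma/2}\|G\|_{\gamma/2}$; by Riesz representation there is a unique $w\in\mc H_0^{\gamma/2}$ with $\mc E_{\hat\kappa}(w,G)=-\langle\bar\rho^{\infty},G\rangle_{\gamma/2}$ for all $G\in\mc H_0^{\gamma/2}$, in particular all $G\in C_c^{\infty}((0,1))$. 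Then $\bar\rho^{\hat\kappa}:=\bar\rho^{\infty}+w$ satisfies item iii); item i) holds as $\bar\rho^{\infty},w\in\mc H^{\gamma/2}$; and item ii) follows from the above properties of $\bar\rho^{\infty}$ together with Hardy (for $\hat\kappa>0$), while for $\hat\kappa=0$ one has $\bar\rho^{\hat\kappa}(0)=\alpha$, $\bar\rho^{\hat\kappa}(1)=\beta$ since $w$ vanishes at the endpoints.

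To see that $\bar\rho^{\hat\kappa}$ is $[0,1]$-valued, note that the displayed identity extends from $C_c^{\infty}((0,1))$ to all $G\in\mc H_0^{\gamma/2}$: the first term is continuous in $G$ because $\bar\rho^{\hat\kappa}\in\mc H^{\gamma/2}$, and the second is bounded by $\langle\bar\rho^{\hat\kappa}-\bar\rho^{\infty},\bar\rho^{\hat\kappa}-\bar\rho^{\infty}\rangle_{V_1}^{1/2}\langle G,G\rangle_{V_1}^{1/2}$, which is finite by item ii), the properties of $\bar\rho^{\infty}$, and Hardy. Testing with $G=(\bar\rho^{\hat\kappa}-\beta)^{+}$ — which lies in $\mc H_0^{\gamma/2}$, being a $1$-Lipschitz image of $\bar\rho^{\hat\kappa}\in\mc H^{\gamma/2}$ that vanishes at both endpoints (as $\bar\rho^{\hat\kappa}(0)=\alpha\le\beta$ and $\bar\rho^{\hat\kappa}(1)=\beta$) — and using $\bar\rho^{\infty}\le\beta$ and $(a-b)(a^{+}-b^{+})\ge(a^{+}-b^{+})^{2}$, both terms are nonnegative, hence vanish; in particular $\|(\bar\rho^{\hat\kappa}-\beta)^{+}\|_{\gamma/2}=0$, so $\bar\rho^{\hat\kappa}\le\beta$. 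Testing with $(\alpha-\bar\rho^{\hat\kappa})^{+}$ and using $\bar\rho^{\infty}\ge\alpha$ gives $\bar\rho^{\hat\kappa}\ge\alpha$; thus $\bar\rho^{\hat\kappa}:[0,1]\to[\alpha,\beta]\subset[0,1]$. For uniqueness, let $\bar\rho_1,\bar\rho_2$ be weak solutions: for $\hat\kappa>0$, items i) and ii) and the argument of Remark \ref{use:rem_dir} (in its stationary version) force $\bar\rho_i(0)=\alpha$, $\bar\rho_i(1)=\beta$, while for $\hat\kappa=0$ this is item ii). In either case $w:=\bar\rho_1-\bar\rho_2\in\mc H^{\gamma/2}$ vanishes at the endpoints, hence $w\in\mc H_0^{\gamma/2}$ (as $\gamma/2>1/2$); subtracting the two weak formulations in the displayed form, the $\bar\rho^{\infty}$ terms cancel, so $\mc E_{\hat\kappa}(w,G)=0$ for $G\in C_c^{\infty}((0,1))$, hence (by density and the continuity just used) $\mc E_{\hat\kappa}(w,w)=0$, giving $\|w\|_{\gamma/2}=0$ and $w=0$.

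The two points that require genuine care are: (a) the fractional Hardy inequality on the interval — classical in the range $\gamma/2\in(1/2,1)$, but it is what makes $\mc E_{\hat\kappa}$ continuous on $\mc H_0^{\gamma/2}$, allows one to pass from test functions in $C_c^{\infty}((0,1))$ to all of $\mc H_0^{\gamma/2}$, and renders item ii) automatic after the lift; and (b) the reformulation of the weak equation in the ``difference'' form $\langle\bar\rho^{\hat\kappa},G\rangle_{\gamma/2}+\hat\kappa\langle\bar\rho^{\hat\kappa}-\bar\rho^{\infty},G\rangle_{V_1}=0$, since for $G\in\mc H_0^{\gamma/2}$ the individual terms $\langle\bar\rho^{\hat\kappa},G\rangle_{V_1}$ and $\langle V_0,G\rangle$ need not be absolutely convergent near the boundary and only their combination is, precisely thanks to item ii). By comparison, the maximum-principle and Riesz steps are routine.
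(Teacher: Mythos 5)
Your proof is correct and follows essentially the same route as the paper: subtract the lift $\bar\rho^{\infty}=V_0/V_1$, reduce to the symmetric bilinear form $\langle\cdot,\cdot\rangle_{\gamma/2}+\hat\kappa\langle\cdot,\cdot\rangle_{V_1}$ on $\mathcal{H}_0^{\gamma/2}$, which is coercive and (via the fractional Hardy inequality) continuous, and conclude by Riesz/Lax--Milgram. The only differences are that you additionally verify the $[0,1]$-range of the solution by testing with truncations and spell out why an arbitrary weak solution lies in $\bar\rho^{\infty}+\mathcal{H}_0^{\gamma/2}$ — details the paper leaves implicit.
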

}

\subsection{{Statement of results}}\label{subsec:HL}
  First we want to state the hydrodynamic limit of the process $\lbrace{\eta_{t}^{N}}\rbrace_{t\geq 0}$ with state space $\Omega_{N}$ and with infinitesimal generator $\Theta(N)L_{N}$ defined in (\ref{Generator}).

Let ${\mc M}^+$ be the space of positive measures on $[0,1]$ with total mass bounded by $1$ equipped with the weak topology. For any configuration  $\eta \in \Omega_{N}$ we define the empirical measure \pat{$\pi^{N}(\eta,du):=\pi^{N,\kappa}(\eta,du)$} in $\Omega_{N}$ by 
\begin{equation}\label{MedEmp}
\pi^{N}(\eta, du)=\dfrac{1}{N-1}\sum _{x\in \Lambda_{N}}\eta_{x}\delta_{\frac{x}{N}}\left( du\right),
 \end{equation}
where $\delta_{a}$ is a Dirac mass at $a \in [0,1]$ and
$\pi^{N}_{t}(\eta, du):=\pi^{N}(\eta_{t}^{N}, du).$

Let $g: [0,1]\rightarrow[0,1]$ be a measurable function. We say that a sequence of probability measures $\lbrace\mu_{N}\rbrace_{N\geq 1 }$ in $\Omega_{N}$  is associated to the profile $g$ if for any continuous function $G:[0,1]\rightarrow \mathbb{R}$  and every $\delta > 0$ 
\begin{equation*}
  \lim _{N\to\infty } \mu _{N}\left( \eta \in \Omega_{N} : \left\vert \dfrac{1}{N}\sum_{x \in \Lambda_{N} }G\left(\tfrac{x}{N} \right)\eta_{x} - \int_{0}^1G(u)g(u)du \right\vert    > \delta \right)= 0.
\end{equation*}

We denote by $\PP _{\mu _{N}}$ the probability measure in the Skorohod space $\mathcal D([0,T], \Omega_N)$ induced by the  Markov process ${\eta_{t}^{N} }$ and the measure $\mu_N$ in $\Omega_{N}$ and we denote by $\EE _{\mu _{N}}$ the expectation with respect to $\PP_{\mu _{N}}$. Let $\lbrace\mathbb{Q}_{N}\rbrace_{N\geq 1}$ be the  sequence of probability measures on the Skorohod space $\mathcal D([0,T],\mathcal{M}^{+})$ induced by the  Markov process $\lbrace \pi_{t}^{N}\rbrace_{t\geq 0}$ and by $\mathbb{P}_{\mu_{N}}$.

At this point we are ready to state the hydrodynamic limit of the process ${\eta_{t}^{N}}$.
\begin{thm}(Hydrodynamic limit)\label{theo:hydro_limit}
 Let $g:[0,1]\rightarrow[0,1]$ be a measurable function and let $\lbrace\mu _{N}\rbrace_{N\geq 1}$ be a sequence of probability measures in $\Omega_{N}$ associated to $g$. Then, for any $0\leq t \leq T$,
\begin{equation*}\label{limHidreform}
 \lim _{N\to\infty } \PP_{\mu _{N}}\left( \eta_{\cdot}^{N} \in \mathcal D([0,T], {\Omega_{N}}) : \left\vert \dfrac{1}{N}\sum_{x \in \Lambda_{N} }G\left(\tfrac{x}{N} \right)\eta_{x}(tN^{\gamma}) - \int_{0}^1G(u)\rho_{t}^{\kappa}(u)du \right\vert    > \delta \right)= 0,
\end{equation*}
where the time scale is given by $\Theta(N)= N^{\gamma + \theta}$ and  $\rho_{t}^{\kappa}$ is the unique weak solution of:
\begin{itemize}
\item[$\bullet$] \eqref{eq:Dirichlet Equation_infty} with $\hat \kappa=\kappa $, if $\theta<0$;
\item [$\bullet$] (\ref{eq:Dirichlet Equation}) with $\hat \kappa=\kappa $, if $\theta =0$.
\end{itemize}
\end{thm}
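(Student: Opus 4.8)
The plan is to follow the standard entropy/relative-compactness scheme for hydrodynamic limits of boundary-driven exclusion processes, adapted to the long-jump setting. First I would establish \emph{tightness} of the sequence $\{\mathbb{Q}_N\}_{N\ge1}$ of laws of the empirical measure $\pi^N_\cdot$ on $\mathcal{D}([0,T],\mathcal{M}^+)$. Since $\mathcal{M}^+$ is compact, it suffices to control the time-oscillations: for each fixed $G\in C_c^\infty((0,1))$ one writes, via Dynkin's formula, $\langle \pi^N_t,G\rangle - \langle\pi^N_0,G\rangle = \int_0^t \Theta(N) L_N \langle \pi^N_s,G\rangle\, ds + M^{N,G}_t$, where $M^{N,G}_t$ is a martingale. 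One then shows that $\Theta(N) L_N \langle \pi^N_s, G\rangle$ is uniformly bounded (using $\Theta(N)=N^{\gamma+\theta}$, the discrete approximation of $\bb L$ by the rescaled generator $L_N^0$, the bound $p(z)\sim c_\gamma|z|^{-(1+\gamma)}$, and the fact that $G$ is compactly supported so the reservoir terms $\kappa N^{-\theta} L_N^{\ell,r}$ contribute at most $O(N^{-\theta}\cdot N^{\theta}\cdot N^{-1})$ per site after the correct time scaling) and that the quadratic variation of $M^{N,G}_t$ vanishes as $N\to\infty$; an Aldous-type criterion then gives tightness.

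Next I would identify all limit points. Let $\mathbb{Q}$ be any limit point of $\{\mathbb{Q}_N\}$. A standard argument (replacing indicators by local averages and using that the exclusion process admits no atoms in the limit) shows $\mathbb{Q}$ is supported on absolutely continuous trajectories $\pi_t(du)=\rho_t(u)\,du$ with $0\le\rho_t\le1$. To obtain the weak-formulation identities \eqref{eq:Dirichlet integral} (for $\theta=0$) and \eqref{eq:Dirichlet integral_infty} (for $\theta<0$), I would pass to the limit in the Dynkin martingale decomposition. The heart of the matter is computing $\lim_N \Theta(N) L_N \langle \pi^N_s, G\rangle$. Splitting $L_N=L_N^0+\kappa N^{-\theta}(L_N^\ell+L_N^r)$:
\begin{itemize}
\item[$\bullet$] For the bulk term, $N^{\gamma+\theta}\cdot N^{-(1)}\sum_x (L_N^0$-increment$)$ converges, after a summation by parts, to $\int_0^1 \bb L G(u)\,\rho_s(u)\,du$ when $\theta=0$ (here the $N^{\gamma}$ exactly compensates the $|z|^{-(1+\gamma)}$ tail, reproducing the regional fractional Laplacian on $(0,1)$, with the truncation to $\Lambda_N$ producing, in the limit, the extra killing/reaction term $-V_1$ since $\sum_{y\le 0}p(x-y)+\sum_{y\ge N}p(x-y)\to V_1(x/N)$); when $\theta<0$ this term carries a prefactor $N^\theta\to0$ and disappears.
\item[$\bullet$] For the reservoir terms, $N^{\gamma+\theta}\cdot\kappa N^{-\theta}\cdot\frac1{N}\sum_x (\sum_{y\le0}p(x-y))\,G(x/N)\,\mathbb{E}[c_x(\eta_s;\alpha)]$ and its right counterpart. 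Using $c_x(\eta;\alpha)=\alpha-\eta_x$ (in expectation against $G$, after replacing $\eta_x$ by its local density via the one-block estimate, and noting $N^\gamma\sum_{y\le0}p(x-y)\to r^-(x/N)$), this converges to $\int_0^1 G(u)\big(\alpha r^-(u)+\beta r^+(u) - \rho_s(u)(r^-(u)+r^+(u))\big)\,du = \langle G,V_0\rangle - \langle \rho_s,G\rangle_{V_1}$.
\end{itemize}
Assembling these gives precisely $F_{Dir}=0$ for $\theta=0$ (where the bulk $\bb L$-term and the reservoir terms combine into $\langle\rho_s,\bb L_\kappa G\rangle + \kappa\langle G,V_0\rangle$) and $F_{Reac}=0$ for $\theta<0$. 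One also needs to verify the regularity statements: item i) and ii) of Definition \ref{Def. Dirichlet Condition} (resp.\ item i) of Definition \ref{Def. Dirichlet Condition_kappa^infty}). The $L^2(0,T;\mathcal{H}^{\gamma/2})$ bound for $\theta=0$ comes from an energy estimate — controlling $\mathbb{E}[\int_0^T \langle \eta_s,(-\bb L_N)\eta_s\rangle\,ds]$ via the entropy production of the Markov process relative to a reference product measure — while the boundary integrability bound $\int_0^T\int_0^1\{(\alpha-\rho_t)^2 u^{-\gamma}+(\beta-\rho_t)^2(1-u)^{-\gamma}\}\,du\,dt<\infty$ comes from controlling the Dirichlet form of the reservoir part, $N^{\gamma+\theta}\kappa N^{-\theta}\mathbb{E}[\int_0^T \sum_x (\text{reservoir rate})(\eta_x-\alpha)^2\,ds]\lesssim 1$ again by entropy production. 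Finally, uniqueness of the weak solutions is Lemma \ref{lem:uniquess}, so the sequence $\{\mathbb{Q}_N\}$ has a unique limit point, which combined with tightness upgrades convergence in distribution to convergence in probability, yielding the stated law-of-large-numbers statement for $\langle\pi^N_t,G\rangle$ for each fixed $t$.

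\textbf{Main obstacle.} The hardest ingredient is the \emph{replacement lemmas} — the one-block and two-blocks estimates — needed to close the expressions above, i.e.\ replacing $\eta_x$ by the empirical density $\rho_s(x/N)$ locally. In the long-jump setting the two-blocks estimate is delicate because the transition rate is non-local and heavy-tailed: the usual argument via the Dirichlet form of $L_N^0$ must be combined carefully with the moving-particle lemma adapted to long jumps. Near the boundary, one additionally needs a \emph{replacement at the boundary} comparing $\eta_1$ (resp.\ $\eta_{N-1}$) with $\alpha$ (resp.\ $\beta$); this is controlled by the strong reservoir term $\kappa N^{-\theta}L_N^\ell$ whose Dirichlet form dominates exactly the relevant cost thanks to $\theta\le0$ (the case $\theta>0$ fails precisely because this boundary replacement/two-blocks estimate breaks down, which is why it is left open). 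I expect the bulk of the technical work to be in making these estimates, together with the energy estimate, precise; the limit identification and tightness are comparatively routine once those are in hand.
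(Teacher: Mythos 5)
Your overall scheme is the one the paper follows: tightness via the Aldous criterion and Dynkin's formula, energy estimates obtained from the entropy inequality, Feynman--Kac and bounds on the Dirichlet form, identification of limit points by passing to the limit in the martingale decomposition, and finally uniqueness of the weak solution. However, you have misplaced the main difficulty. No replacement lemma (one-block, two-blocks, or boundary replacement) is needed here, and the paper uses none: since $L_N\eta_x$ is \emph{linear} in the configuration, $\Theta(N)L_N\langle\pi^N_s,G\rangle$ is already a continuous functional of the empirical measure, namely $\langle \pi^N_s, N^{\gamma+\theta}\mathcal L_N G\rangle$ plus the reservoir terms written against $N^{\gamma}r_N^{\pm}$. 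The identification of limit points then follows directly from the uniform convergence $N^{\gamma}\mathcal L_N G\to \bb L G$ and $N^{\gamma}r_N^{\pm}\to r^{\pm}$ on the compact support of $G$, Portmanteau's theorem and Doob's inequality for the martingale part. The remark about the two-blocks estimate failing concerns only the open regime $\theta>0$, where one would need to fix the boundary values $\rho_t(0)=\alpha$, $\rho_t(1)=\beta$ microscopically; for $\theta\le 0$ the Dirichlet data are encoded in the singular integrability condition of item ii) of Definition \ref{Def. Dirichlet Condition}, which is exactly what the second energy estimate delivers. The genuinely technical parts of the paper are that energy estimate (Theorem \ref{Energy_Thm1}) and the uniqueness Lemma \ref{lem:uniquess}, the latter requiring a density/approximation argument to test the weak formulation against the solution itself.

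A second point, which as written would lead you to the wrong PDE: in your first bullet you attribute the killing term $-V_1$ to ``the truncation to $\Lambda_N$'' of the bulk generator, and then in your second bullet you derive $-\langle\rho_s,G\rangle_{V_1}$ again from the reservoir generators. Taken literally this double-counts and would produce $\bb L-(1+\kappa)V_1$ rather than $\bb L_\kappa=\bb L-\kappa V_1$. The correct bookkeeping is that $L_N^0$ only exchanges sites inside $\Lambda_N$, so its rescaled limit is the \emph{regional} operator $\bb L$ with no killing at all; the entire contribution $-\kappa V_1(u)\rho_s(u)+\kappa V_0(u)$ comes from $\kappa N^{-\theta}(L_N^{\ell}+L_N^{r})$, using $L_N^{\ell}\eta_x=\sum_{y\le 0}p(x-y)(\alpha-\eta_x)$ and $N^{\gamma}r_N^{\pm}\to r^{\pm}$. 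Your final assembled identity $\langle\rho_s,\bb L_\kappa G\rangle+\kappa\langle G,V_0\rangle$ is correct, so this is an internal inconsistency in the derivation rather than an error in the conclusion, but it should be fixed.
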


Once the hydrodynamic limit is obtained, we would like to know  how  \pat{the  weak solution $\rho_t^{\kappa}$ and the stationary solution $\bar\rho^{\kappa}$ behave as $\kappa$ goes to $0$ or $\infty$ and this is the purpose of  Theorem \ref{convergence_rho^k_to_rho^0} and \ref{Existence_uniqueness_convergence_stationary} stated below. This limiting profile  will give us an idea of what to expect at the hydrodynamics level when we consider our microscopic dynamics in  contact with reservoirs  whose strength is regulated by $\kappa/N^\theta$ and when $\theta\neq 0$ as   in \cite{BGJO}. As mentioned  in the introduction we do not analyze the system in this regime but we conjecture that for small positive values of $\theta>0$ (that corresponds to slow reservoirs) the hydrodynamic limit should be given by the weak solution of (\ref{eq:Dirichlet Equation}) with $\kappa=0$ while for the case $\theta<0$ (that corresponds to fast reservoirs) it should be given by the weak solution of (\ref{eq:Dirichlet Equation_infty}). }

\begin{thm}
\label{convergence_rho^k_to_rho^0}
Let $\rho_{0}:[0,1]\to [0,1]$ be a measurable function. Further, let  $\rho^{\kappa}$ be the weak solution of (\ref{eq:Dirichlet Equation}), with  initial condition $\rho_{0}$ which is independent of $\kappa$ and let $\hat\rho^{\kappa}_{t}:=\rho^{\kappa}_{t/\kappa}$, for all $t\in [0,T]$. Then
\begin{enumerate}[i)]
\item  $\rho^{\kappa}$ converges strongly to  $\rho^{0}$  in $L^{2}(0,T;\mc H^{\gamma/2})$ as $\kappa$ goes to $0$, where $\rho^{0}$ is the weak solution of \eqref{eq:Dirichlet Equation} with $\kappa =0$ and initial condition $\rho_{0}.$
\item {If $\rho_{0} - \bar\rho^{\infty} \in \mc H^{\gamma/2}$ } then $\hat\rho^{\kappa}$ converges strongly to  $\rho^{\infty}$  in $L^{2}(0,T;L^{2}_{V_{1}})$ as $\kappa$ goes to $\infty$, where $\rho^{\infty}$ is the weak solution of \eqref{eq:Dirichlet Equation_infty}. 
\end{enumerate} 
\end{thm}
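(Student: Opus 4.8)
The plan is to prove both items by an energy method, using as reference profile the explicit reaction stationary profile $\bar\rho^{\infty}=V_{0}/V_{1}$ of Remark \ref{rem:Explicit_sol_rho_infty}: it satisfies $\bar\rho^{\infty}(0)=\alpha$, $\bar\rho^{\infty}(1)=\beta$, $V_{1}\bar\rho^{\infty}\equiv V_{0}$, and from $\bar\rho^{\infty}(u)=\alpha+(\beta-\alpha)\,u^{\gamma}(u^{\gamma}+(1-u)^{\gamma})^{-1}$ one checks directly that $\bar\rho^{\infty}\in\mathcal H^{\gamma/2}$ and $\int_{0}^{1}\{(\bar\rho^{\infty}-\alpha)^{2}u^{-\gamma}+(\bar\rho^{\infty}-\beta)^{2}(1-u)^{-\gamma}\}\,du<\infty$. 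For item i) the first step is a $\kappa$-uniform energy estimate: testing \eqref{eq:Dirichlet Equation} against $\rho^{\kappa}_{t}-\bar\rho^{\infty}\in\mathcal H_{0}^{\gamma/2}$ (legitimate by Remark \ref{use:rem_dir} and the Lions--Magenes framework developed in the proof of Lemma \ref{lem:uniquess}), the cross reaction term $\kappa\langle V_{1}\bar\rho^{\infty}-V_{0},\rho^{\kappa}_{t}-\bar\rho^{\infty}\rangle$ drops out by the choice of $\bar\rho^{\infty}$, and one obtains
\[
\tfrac12\|\rho^{\kappa}_{t}-\bar\rho^{\infty}\|^{2}+\int_{0}^{t}\|\rho^{\kappa}_{s}-\bar\rho^{\infty}\|_{\gamma/2}^{2}\,ds+\kappa\int_{0}^{t}\|\rho^{\kappa}_{s}-\bar\rho^{\infty}\|_{V_{1}}^{2}\,ds=\tfrac12\|\rho_{0}-\bar\rho^{\infty}\|^{2}-\int_{0}^{t}\langle\bar\rho^{\infty},\rho^{\kappa}_{s}-\bar\rho^{\infty}\rangle_{\gamma/2}\,ds,
\]
whence, by Cauchy--Schwarz and $0\le\rho^{\kappa}\le1$, $\sup_{0<\kappa\le1}\|\rho^{\kappa}\|_{L^{2}(0,T;\mathcal H^{\gamma/2})}<\infty$ and $\sup_{0<\kappa\le1}\kappa\int_{0}^{T}\!\!\int_{0}^{1}\{(\rho^{\kappa}_{t}-\alpha)^{2}u^{-\gamma}+(\rho^{\kappa}_{t}-\beta)^{2}(1-u)^{-\gamma}\}\,du\,dt<\infty$.

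Next I would extract a limit. Using the last bound one checks that $\partial_{t}\rho^{\kappa}$ is bounded, uniformly in $\kappa\in(0,1]$, in $L^{2}(0,T;\mathcal W')$, where $\mathcal W$ is the closure of $C_{c}^{\infty}((0,1))$ for $\|G\|_{\mathcal W}^{2}=\|G\|_{\gamma/2}^{2}+\int_{0}^{1}(G^{2}u^{-\gamma}+G^{2}(1-u)^{-\gamma})\,du$; by Aubin--Lions $\{\rho^{\kappa}\}$ is then relatively compact in $L^{2}(0,T;L^{2})$ and in $C([0,T];\mathcal W')$. Along a subsequence $\rho^{\kappa}\rightharpoonup\rho$ weakly in $L^{2}(0,T;\mathcal H^{\gamma/2})$, strongly in $L^{2}(0,T;L^{2})$, and in $C([0,T];\mathcal W')$. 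Passing to the limit in \eqref{eq:Dirichlet integral} with $\hat\kappa=\kappa$: every term is linear in $\rho^{\kappa}$ except the $\kappa$-term $\kappa\int_{0}^{t}\langle(\rho^{\kappa}_{s}-\alpha)r^{-}+(\rho^{\kappa}_{s}-\beta)r^{+},G_{s}\rangle\,ds$, which for $G$ supported away from $\{0,1\}$ is $O(\kappa)$ and so vanishes; hence $\rho$ satisfies item iii) of Definition \ref{Def. Dirichlet Condition} with $\hat\kappa=0$. Since each $\rho^{\kappa}_{t}$ ($\kappa>0$) has boundary values $\alpha,\beta$ (Remark \ref{use:rem_dir}) and the uniform $\mathcal H^{\gamma/2}$-bound makes $\rho^{\kappa}_{t}$ equi-H\"older for a.e.\ $t$, along a further subsequence $\rho^{\kappa}_{t}\to\rho_{t}$ uniformly on $[0,1]$ for a.e.\ $t$, so $\rho_{t}(0)=\alpha$, $\rho_{t}(1)=\beta$, i.e.\ item ii). Thus $\rho$ is a weak solution of \eqref{eq:Dirichlet Equation} with $\hat\kappa=0$, and by Lemma \ref{lem:uniquess} $\rho=\rho^{0}$, so the whole family converges weakly to $\rho^{0}$.

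The real content of item i) is upgrading this to strong convergence. Write $h=\bar\rho^{\infty}$ and subtract the energy identity above (at $t=T$) from the analogous identity for $\rho^{0}$, which solves $\partial_{t}\rho^{0}=\LL\rho^{0}$ and whose identity carries no $\kappa$-term. Since $\kappa\int_{0}^{T}\|\rho^{\kappa}_{t}-h\|_{V_{1}}^{2}\,dt\ge0$, the initial data coincide, and $\int_{0}^{T}\langle h,\rho^{\kappa}_{t}-\rho^{0}_{t}\rangle_{\gamma/2}\,dt\to0$ (weak convergence against the fixed element $h$), one gets
\[
\limsup_{\kappa\to0}\Big\{\tfrac12\big(\|\rho^{\kappa}_{T}-h\|^{2}-\|\rho^{0}_{T}-h\|^{2}\big)+\int_{0}^{T}\big(\|\rho^{\kappa}_{t}-h\|_{\gamma/2}^{2}-\|\rho^{0}_{t}-h\|_{\gamma/2}^{2}\big)\,dt\Big\}\le0 .
\]
From the $C([0,T];\mathcal W')$-convergence, $\rho^{\kappa}_{T}\rightharpoonup\rho^{0}_{T}$ in $L^{2}$, so $\liminf_{\kappa\to0}\|\rho^{\kappa}_{T}-h\|^{2}\ge\|\rho^{0}_{T}-h\|^{2}$; combining this with the previous line gives $\limsup_{\kappa\to0}\int_{0}^{T}\|\rho^{\kappa}_{t}-h\|_{\gamma/2}^{2}\,dt\le\int_{0}^{T}\|\rho^{0}_{t}-h\|_{\gamma/2}^{2}\,dt$, and weak lower semicontinuity of $\int_{0}^{T}\|\cdot\|_{\gamma/2}^{2}$ turns this into convergence. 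Together with the strong $L^{2}(0,T;L^{2})$ convergence this yields $\|\rho^{\kappa}-h\|_{L^{2}(0,T;\mathcal H^{\gamma/2})}\to\|\rho^{0}-h\|_{L^{2}(0,T;\mathcal H^{\gamma/2})}$, and a weakly convergent sequence in a Hilbert space whose norms converge is strongly convergent, so $\rho^{\kappa}\to\rho^{0}$ strongly in $L^{2}(0,T;\mathcal H^{\gamma/2})$. I expect this step to be the main obstacle: the reaction term of \eqref{eq:Dirichlet Equation} is singular at the boundary, the penalization integral $\int_{0}^{T}\!\int_{0}^{1}(\rho^{\kappa}-\alpha)^{2}u^{-\gamma}\,du\,dt$ is only controlled by $C/\kappa$ (hence diverges) while $\rho^{0}$ carries no such control, so one cannot simply test the equation for $\rho^{\kappa}-\rho^{0}$; the device above circumvents this by comparing energy identities relative to the fixed profile $\bar\rho^{\infty}$ and exploiting the sign of the penalization term together with the compactness at $t=T$. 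The auxiliary fact to supply here is the $\kappa$-uniform bound on $\partial_{t}\rho^{\kappa}$ in $\mathcal W'$.

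For item ii), the time change $\hat\rho^{\kappa}_{t}=\rho^{\kappa}_{t/\kappa}$ turns \eqref{eq:Dirichlet Equation} into $\partial_{t}\hat\rho^{\kappa}=\kappa^{-1}\LL\hat\rho^{\kappa}-V_{1}\hat\rho^{\kappa}+V_{0}$ with unchanged initial datum $g$; hence $\zeta^{\kappa}:=\hat\rho^{\kappa}-\rho^{\infty}$, with $\rho^{\infty}$ the (explicitly known) weak solution of \eqref{eq:Dirichlet Equation_infty}, satisfies $\partial_{t}\zeta^{\kappa}=\kappa^{-1}\LL\hat\rho^{\kappa}-V_{1}\zeta^{\kappa}$ and $\zeta^{\kappa}_{0}=0$. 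The hypothesis $\rho_{0}-\bar\rho^{\infty}\in\mathcal H^{\gamma/2}$ together with the formula $\rho^{\infty}_{t}-\bar\rho^{\infty}=(\rho_{0}-\bar\rho^{\infty})e^{-tV_{1}}$ (Remark \ref{rem:Explicit_sol_rho_infty}) gives $\rho^{\infty}_{t}-\bar\rho^{\infty}\in\mathcal H_{0}^{\gamma/2}$ and, since $\rho_{0}-\bar\rho^{\infty}$ is bounded, $\int_{0}^{T}\|\rho^{\infty}_{t}\|_{\gamma/2}^{2}\,dt<\infty$; in particular $\zeta^{\kappa}_{t}\in\mathcal H_{0}^{\gamma/2}$ for a.e.\ $t$. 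Testing against $\zeta^{\kappa}_{t}$, using $\langle\LL\hat\rho^{\kappa}_{t},\zeta^{\kappa}_{t}\rangle=-\|\zeta^{\kappa}_{t}\|_{\gamma/2}^{2}-\langle\rho^{\infty}_{t},\zeta^{\kappa}_{t}\rangle_{\gamma/2}$ and Young's inequality on the last term, one obtains
\[
\tfrac12\|\zeta^{\kappa}_{T}\|^{2}+\int_{0}^{T}\|\zeta^{\kappa}_{t}\|_{V_{1}}^{2}\,dt+\tfrac1{2\kappa}\int_{0}^{T}\|\zeta^{\kappa}_{t}\|_{\gamma/2}^{2}\,dt\le\tfrac1{2\kappa}\int_{0}^{T}\|\rho^{\infty}_{t}\|_{\gamma/2}^{2}\,dt,
\]
hence $\|\hat\rho^{\kappa}-\rho^{\infty}\|_{L^{2}(0,T;L^{2}_{V_{1}})}^{2}\le\tfrac1{2\kappa}\int_{0}^{T}\|\rho^{\infty}_{t}\|_{\gamma/2}^{2}\,dt\to0$ as $\kappa\to\infty$, which is item ii); the only technical point here is the membership $\rho^{\infty}\in L^{2}(0,T;\mathcal H^{\gamma/2})$, which is precisely where the hypothesis $\rho_{0}-\bar\rho^{\infty}\in\mathcal H^{\gamma/2}$ is used.
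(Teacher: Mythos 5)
Your argument is correct in outline but follows a genuinely different route from the paper's. The paper imports its $\kappa$-uniform a priori bounds from the particle system (items i) and ii) of Theorem \ref{Energy_Thm1}), builds $\rho^{0}$ as a weak subsequential limit of $\vf^{\kappa}=\rho^{\kappa}-\bar\rho^{\infty}$ (Lemma \ref{lem1_existence_rho^0}), and then proves \emph{quantitative} strong convergence by a Steklov-averaging device: it tests the difference of the weak formulations against approximations of $\int_{s}^{t}\delta_{r}^{\kappa}\,dr$ and of $\tfrac1\ve\int_{t}^{t+\ve}\delta_{r}^{\kappa}\,dr$, which only ever produces identities for time-integrated quantities such as $\|\int_{0}^{t}\delta_{s}^{\kappa}ds\|_{\gamma/2}^{2}$ and yields explicit rates ($\int_{0}^{T/2}\|\delta_{t}^{\kappa}\|_{\gamma/2}^{2}dt\ls\kappa$ for item i), $1/\sqrt{\kappa}$ for item ii)). You instead derive the uniform bounds purely from the PDE by testing against $\rho_{t}^{\kappa}-\bar\rho^{\infty}$, use Aubin--Lions compactness to identify the limit, and upgrade to strong $L^{2}(0,T;\mc H^{\gamma/2})$ convergence via the soft ``weak convergence plus convergence of norms'' argument; for item ii) your direct differential inequality is essentially the paper's estimate reorganized, and in fact gives the sharper rate $1/\kappa$ in place of $1/\sqrt{\kappa}$. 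What your route buys is independence from the microscopic input and a cleaner item ii); what it costs is the loss of a rate in item i) and, more importantly, a heavier technical debt: your norm-convergence trick requires the \emph{pointwise-in-$t$ energy equality} $\tfrac12\|\vf_{t}^{\kappa}\|^{2}+\int_{0}^{t}\|\vf_{s}^{\kappa}\|_{\gamma/2}^{2}ds+\kappa\int_{0}^{t}\|\vf_{s}^{\kappa}\|_{V_{1}}^{2}ds=\cdots$ with equality (an inequality in the wrong direction would destroy the argument), hence the full Lions--Magenes chain rule in the Gelfand triple built on $\mc H_{0}^{\gamma/2}\cap L^{2}_{V_{1}}$ --- precisely the machinery the paper's integrated identities are designed to avoid. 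This step, together with the uniform bound on $\partial_{t}\rho^{\kappa}$ in the dual of the weighted space (which does follow from $\kappa\int_{0}^{T}\|\vf_{t}^{\kappa}\|_{V_{1}}^{2}dt\ls1$), should be written out; also, the identification of the boundary values of the limit is obtained more cleanly from the weak closedness of $L^{2}(0,T;\mc H_{0}^{\gamma/2})$ (as in Lemma \ref{lem1_existence_rho^0}) than from your ``equi-H\"older for a.e.\ $t$ along a further subsequence'' claim, which as stated conflates an $L^{2}$-in-time bound with a pointwise-in-time one.
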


\begin{rem}
The convergence in Theorem \ref{convergence_rho^k_to_rho^0} is also true in $L^{2}(0,T;L^{2})$. In fact, we will see that a crucial step in the proof of the theorem  is to show that $\rho^{\kappa}$ converges strongly  in $L^{2}(0,T;L^{2})$. Convergence in i) is also true in $L^{2}(0,T; L_{V_{1}}^{2})$ and it is  {a consequence} of the fractional Hardy's inequality (see e.g. \cite{Dy}).
\end{rem}
\begin{thm}
\label{Existence_uniqueness_convergence_stationary}
Let $\bar\rho^{\kappa}$  be the weak solution  of (\ref{eq:Stationary_RFRD}). Then,
\begin{enumerate}[i)]
\item  $\bar\rho^{\kappa}$ converges strongly to  $\bar\rho^{0}$  in $\mc H^{\gamma/2}$ as $\kappa$ goes to $0$, where  $\bar\rho^{0}$ is the weak solution of (\ref{eq:Stationary_RFRD}) with $\kappa =0$.
\item $\bar\rho^{\kappa}$ converges strongly to  $\bar\rho^{\infty}$  in $L^{2}_{V_{1}}$ as $\kappa$ goes to $\infty$, where $\bar\rho^{\infty}$  is given in Remark \ref{rem:Explicit_sol_rho_infty}.
\end{enumerate}  
\end{thm}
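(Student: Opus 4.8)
The plan is to exploit the variational/energy structure of \eqref{eq:Stationary_RFRD}. For $\kappa>0$, testing the weak formulation against $\bar\rho^\kappa - \bar\rho^\infty$ (or, to stay inside the space of compactly supported test functions, against a smooth approximation thereof and then passing to the limit using item i) and ii) of Definition \ref{Def. Stationary_RFRD}) gives an identity of the form $\|\bar\rho^\kappa\|_{\gamma/2}^2 + \kappa\,\langle \bar\rho^\kappa, \bar\rho^\kappa\rangle_{V_1} = \kappa\langle \bar\rho^\kappa, V_0\rangle + (\text{boundary terms involving }\alpha,\beta)$. Using $\bar\rho^\infty = V_0/V_1$ and completing the square in the weighted inner product $\langle\cdot,\cdot\rangle_{V_1}$, one obtains a uniform-in-$\kappa$ bound
\[
\|\bar\rho^\kappa\|_{\gamma/2}^2 + \kappa\,\|\bar\rho^\kappa - \bar\rho^\infty\|_{V_1}^2 \ls 1 ,
\]
together with the a priori bound $0\le\bar\rho^\kappa\le 1$, which controls $\|\bar\rho^\kappa\|$. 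Hence $\{\bar\rho^\kappa\}$ is bounded in $\mc H^{\gamma/2}$, so along subsequences it converges weakly in $\mc H^{\gamma/2}$ and, by compact embedding of $\mc H^{\gamma/2}$ into $L^2$ (Rellich-type, since $\gamma>1$), strongly in $L^2$; moreover $\|\bar\rho^\kappa - \bar\rho^\infty\|_{V_1}^2 = O(\kappa^{-1}) \to 0$, which already identifies the $L^2_{V_1}$-limit as $\bar\rho^\infty$ in part ii).

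For part i), I would first pass to the limit $\kappa\to0$ in the weak formulation: for fixed $G\in C_c^\infty((0,1))$, the term $\kappa\langle G,V_0\rangle$ and the term $\kappa\langle\bar\rho^\kappa, V_1 G\rangle$ vanish (the latter since $V_1 G$ is bounded and $\bar\rho^\kappa$ is uniformly bounded), while $\langle\bar\rho^\kappa,\LL G\rangle \to \langle\bar\rho^0,\LL G\rangle$ along the subsequence, so every subsequential limit $\bar\rho^0$ satisfies $\langle\bar\rho^0,\LL G\rangle=0$, i.e. is weakly $\LL$-harmonic; one must also check it still satisfies the boundary conditions $\bar\rho^0(0)=\alpha$, $\bar\rho^0(1)=\beta$ in the sense of item ii) for $\hat\kappa=0$, which should follow from the uniform bound on $\int_0^1\{(\alpha-\bar\rho^\kappa)^2 u^{-\gamma} + (\beta-\bar\rho^\kappa)^2(1-u)^{-\gamma}\}\,du$ extracted from the energy estimate, combined with Fatou and the H\"older continuity of $\mc H^{\gamma/2}$-functions (as in Remark \ref{use:rem_dir}). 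By Lemma \ref{existence and uniqueness} the weak solution of \eqref{eq:Stationary_RFRD} with $\kappa=0$ is unique, so the whole family converges, not merely subsequences. To upgrade weak convergence in $\mc H^{\gamma/2}$ to strong convergence, I would show $\|\bar\rho^\kappa\|_{\gamma/2} \to \|\bar\rho^0\|_{\gamma/2}$: the energy identity for $\bar\rho^\kappa$ gives $\|\bar\rho^\kappa\|_{\gamma/2}^2$ in terms of quantities converging to the corresponding identity for $\bar\rho^0$ (whose energy identity reads $\|\bar\rho^0\|_{\gamma/2}^2 = (\text{boundary terms})$), after checking the cross terms $\kappa\langle\bar\rho^\kappa,(\bar\rho^\infty-\bar\rho^\kappa)V_1\rangle=O(\kappa^{1/2})$ via Cauchy–Schwarz and the bound $\kappa\|\bar\rho^\kappa-\bar\rho^\infty\|_{V_1}^2\ls1$; norm convergence plus weak convergence in the Hilbert space $\mc H^{\gamma/2}$ then yields strong convergence.

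The main obstacle I anticipate is the rigorous justification of testing the weak formulation against $\bar\rho^\kappa-\bar\rho^\infty$ itself rather than against a function in $C_c^\infty((0,1))$: this function is neither smooth nor compactly supported, and $\bar\rho^\infty = V_0/V_1$ blows up at the endpoints relative to... actually $\bar\rho^\infty$ is bounded, but $\bar\rho^\kappa-\bar\rho^\infty$ need not vanish at $0,1$. The cleanest route is to exploit the integration-by-parts formula for $\LL$ (Theorem 3.3 of \cite{GM}) together with the integrability condition ii) of Definition \ref{Def. Stationary_RFRD}, which is exactly the finiteness of the weighted-square quantity, to make sense of $\langle\bar\rho^\kappa,\LL(\bar\rho^\kappa-\bar\rho^\infty)\rangle$ by a density/truncation argument, handling the non-vanishing boundary data $\alpha,\beta$ by splitting $\bar\rho^\kappa = \alpha + (\bar\rho^\kappa-\alpha)$ near $0$ and symmetrically near $1$. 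A secondary technical point is that the embedding $\mc H^{\gamma/2}\hookrightarrow L^2$ must be compact and the trace-type control at the boundary must be uniform in $\kappa$; both follow from the energy bound above, but care is needed to keep all constants independent of $\kappa$.
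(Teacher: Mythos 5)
Your energy estimate is exactly the paper's starting point: writing $\bar\vf^{\kappa}=\bar\rho^{\kappa}-\bar\rho^{\infty}\in\mc H_0^{\gamma/2}$ and testing against it (justified by density of $C_c^{\infty}((0,1))$ in $\mc H_0^{\gamma/2}$), the paper obtains $\Vert\bar\vf^{\kappa}\Vert_{\gamma/2}^2+\kappa\Vert\bar\vf^{\kappa}\Vert_{V_1}^2=I_{\bar\rho^{\infty}}(\bar\vf^{\kappa})\ls\Vert\bar\vf^{\kappa}\Vert_{\gamma/2}$, hence $\Vert\bar\vf^{\kappa}\Vert_{\gamma/2}\ls1$ and $\Vert\bar\vf^{\kappa}\Vert_{V_1}\ls\kappa^{-1/2}$, and item ii) is immediate, as in your proposal. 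Where you genuinely diverge is item i): you go through weak compactness, identification of subsequential limits in the weak formulation, uniqueness, and then norm convergence to upgrade to strong convergence. The paper is more direct and quantitative: it subtracts the two weak formulations, tests the difference against $\bar\vf^{\kappa}-\bar\vf^{0}$, and gets $\Vert\bar\vf^{\kappa}-\bar\vf^{0}\Vert_{\gamma/2}^2\le\kappa\Vert\bar\vf^{\kappa}\Vert_{V_1}\Vert\bar\vf^{\kappa}-\bar\vf^{0}\Vert_{V_1}\ls\sqrt{\kappa}\,\Vert\bar\vf^{\kappa}-\bar\vf^{0}\Vert_{\gamma/2}$ via Hardy's inequality, i.e.\ an explicit rate $\Vert\bar\vf^{\kappa}-\bar\vf^{0}\Vert_{\gamma/2}\ls\sqrt{\kappa}$ with no compactness or Rellich embedding needed. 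Your route is valid but longer and rate-free.

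Two points in your write-up need repair. First, you claim the boundary condition for the limit follows from a uniform bound on $\int_0^1\{(\alpha-\bar\rho^{\kappa})^2u^{-\gamma}+(\beta-\bar\rho^{\kappa})^2(1-u)^{-\gamma}\}\,du$ ``extracted from the energy estimate''; the energy estimate only gives this quantity as $O(\kappa^{-1})$, which degenerates as $\kappa\to0$. The uniform bound is true, but it comes from the fractional Hardy inequality \eqref{norms_related} applied to $\bar\vf^{\kappa}\in\mc H_0^{\gamma/2}$ together with $\Vert\bar\vf^{\kappa}\Vert_{\gamma/2}\ls1$ — or, more simply, one uses that $\mc H_0^{\gamma/2}$ is weakly closed, so the weak limit $\bar\vf^{0}$ vanishes at $0$ and $1$ and $\bar\rho^{0}=\bar\vf^{0}+\bar\rho^{\infty}$ inherits the boundary values. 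Second, your cross term $\kappa\langle\bar\rho^{\kappa},(\bar\rho^{\infty}-\bar\rho^{\kappa})V_1\rangle$ cannot be estimated by Cauchy--Schwarz as written, since $\Vert\bar\rho^{\kappa}\Vert_{V_1}=\infty$ ($\bar\rho^{\kappa}$ does not vanish at the endpoints and $\gamma>1$); you must keep the difference grouped as $\kappa\langle\bar\rho^{\kappa}V_1-V_0,\bar\rho^{\kappa}-\bar\rho^{\infty}\rangle=\kappa\Vert\bar\vf^{\kappa}\Vert_{V_1}^2\ls\kappa\to0$, which in fact makes the norm-convergence step easier than you anticipated. Finally, your worry about $\bar\rho^{\kappa}-\bar\rho^{\infty}$ not vanishing at the boundary is unfounded: both functions equal $\alpha$ at $0$ and $\beta$ at $1$, so the difference lies in $\mc H_0^{\gamma/2}$ and the density argument applies without any near-boundary splitting.
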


\section{Proof of Theorem \ref{limHidreform}: Hydrodynamic limit}\label{sec:proof_hyd_limit}
The proof of this theorem follows the usual approach of convergence in distribution of stochastic processes: we prove tightness  of the sequence  $\lbrace\mathbb{Q}_{N}\rbrace_{N\geq 1}$ and then we prove uniqueness of the limiting point, {which we denote by $\mathbb Q$}. These two results combined give the convergence of  $\lbrace\mathbb{Q}_{N}\rbrace_{N\geq 1}$ to $\mathbb Q$, as $N\rightarrow \infty$. In order to characterize the limiting point $\mathbb Q$,  we prove that all limiting points of  the sequence $\lbrace\mathbb{Q}_{N}\rbrace_{N\geq 1}$ are concentrated on trajectories of  measures that are  absolutely continuous with respect to the Lebesgue measure and whose  density {$\rho_{t}^{\kappa}$} is a weak solution of the  hydrodynamic equation {as given in Definition \ref{Def. Dirichlet Condition}}. From the  uniqueness of the weak solutions of this equation, namely Lemma \ref{existence and uniqueness}, we conclude that $\lbrace\mathbb{Q}_{N}\rbrace_{N\geq 1}$ has a unique limit point $\mathbb{Q}$.  

First, in following subsection we explain how the item iii) in Definition \ref{Def. Dirichlet Condition} appears. In Subsection \ref{subsec:Tightness} we prove that $\{ \mathbb{Q} _{N}\}_{N\geq 1}$ is tight, then in Subsection \ref{subsec:EE} we obtain energy estimates which are crucial to ensure the uniqueness of the limiting point. We conclude this section with the characterization of the limiting point (in Subsection \ref{subsec:Characterization}). 

\subsection{Heuristics for the hydrodynamic equations} \label{sec:CL} 
 In order to make the presentation simple, let us fix a function $G:[0,1]\to\bb R$ which does not depend on time and has compact support included in $(0,1)$.  

By Dynkin's formula (see Lemma A.5.1  in \cite{KL}) we have that
\begin{equation}\label{Dynkin'sFormula}
M_{t}^{N}(G)= \langle \pi_{t}^{N},G\rangle -\langle \pi_{0}^{N},G\rangle-\int_{0}^{t}\Theta(N)L_{N}\langle \pi_{s}^{N},G\rangle ds,
\end{equation}
is a martingale with respect to the natural filtration {$\{\mc F _{t} \}_{t\geq 0}$ where  $\mathcal{F}_{t}:=\sigma(\lbrace\eta(s)\rbrace_{s \leq t})$ for all $t\in [0,T]$.}

Above, for an integrable function $G:[0,1]\rightarrow \RR$, we used the notation {$\langle \pi_{t}^{N},G\rangle$} to represent  the integral of $G$ with respect the measure $ \pi_{t}^{N}$:  
$$\left\langle \pi_{t}^{N},G\right\rangle = \dfrac{1}{N-1}\sum _{x\in \Lambda_{N}}G\left( \tfrac{x}{N}\right)\eta_{x}(t\Theta(N)).$$
In the  previous expression, we are using a measure $\pi_{t}^{N}$ and a function $G$,  therefore, this notation should not  be mistaken  with the one used for the inner product in $L^{2}$. Note that $L_{N}\eta_{x}$ is equal to
$$\sum_{y \in \Lambda_N} p(x-y) [ \eta_{y} -\eta_{x}]+\frac{\kappa}{N^\theta}\sum_{y \leq 0} p(x-y) [ \alpha -\eta_{x}]+\frac{\kappa}{N^\theta}\sum_{y \geq N} p(x-y) [ \beta -\eta_{x}].$$
Therefore, a simple computation shows that 
\begin{equation}
\label{gen_action}
\begin{split}
\Theta(N) L_N ( \langle \pi^N, G \rangle ) =& \cfrac{\Theta(N)}{N-1} \sum_{x\in \Lambda_N}  (\mc L_NG)(\tfrac{x}{N}) \eta_x \\
+& \cfrac{ \kappa \Theta(N)}{N-1} \sum_{x \in \Lambda_N}  G(\tfrac{x}{N}) \left( r_{N}^{-}(\tfrac{x}{N}) (\alpha-\eta_x )+   r_{N}^{+}(\tfrac{x}{N}){(\beta-\eta_x )}\right),
\end{split}
\end{equation}
where, we denote by $\mc L_{N} G$  the continuous function on $[0, 1]$ which is defined as the linear interpolation of the function
\begin{equation}
\label{eq:mcln}
 ({\mc L}_N G) (\tfrac{x}{N}) = \sum_{y \in \Lambda_N} p(y-x) \left[ G(\tfrac{y}{N}) -G(\tfrac{x}{N})\right],
\end{equation}
for all  $x\in \Lambda_N$ with $(\mc L_{N} G)(0) = (\mc L _{N}G)(1) = 0$.
We also define the functions $r_{N}^{\pm}: [0,1]\to \RR$ as the linear interpolation of the function 
\begin{equation}
r_N^- (\tfrac{x}{N})= \sum_{y \ge x} p(y), \quad r_N^+ (\tfrac{x}{N})= \sum_{y \le x-N} p(y),
\end{equation}   
for all $x \in \Lambda_{N}$ with $r_{N}^{\pm}(0) = r_{N}^{\pm}(\tfrac{1}{N})$ and $r_{N}^{\pm}(1) = r_{N}^{\pm}(\tfrac{N-1}{N})$.
By Lemma 3.3 in  \cite{BJ} we have that 
\begin{equation}\label{F_convergencer}
\begin{split}
&\lim _{N\to \infty} N^{\gamma} (r_N^-)(u) =r^-(u),\,\,\,\lim _{N\to \infty} N^{\gamma} (r_N^+)(u) =r^+(u)
\end{split} 
\end{equation}
uniformly in  $[a,1-a]$ for $a\in (0,1)$ and we also can deduce from that {lemma}  that 
\begin{equation}\label{F_convergenceLL}
\lim _{N\to \infty}N^{\gamma}({\mc L}_N G)(u)  = (\LL G)(u)
\end{equation}
uniformly in  $[a,1-a]$, for all functions $G$ with compact support included in $[a,1-a]$. 

Now, we are going to analyse all the terms in \eqref{gen_action} for  $\theta\leq 0$. Thus, we will be able to see how the different boundary conditions appear on the hydrodynamic equations given in Subsection \ref{subsec:hyd_eq} from the underlying particle system.

\subsubsection{\underline{The case $\theta <0$}} 
In this regime we take $\Theta(N)=N^{\gamma+\theta}$ and a function $G\in C_{c}^{\infty}(0,1)$. By using  \eqref{F_convergenceLL} we have that the first term on the right hand side of \eqref{gen_action}  vanishes since $\theta<0$. Now, the second term on the right hand side in (\ref{gen_action}) is equal to $\kappa \langle \alpha -\pi_{t}^{N},  G r^-_{N} \rangle +\kappa \langle \beta -\pi_{t}^{N},  G r^+_{N} \rangle$. By  \eqref{F_convergencer} the previous expression converges, as $N$ goes to $\infty$, to 
\begin{equation*}
\begin{split}
 & \kappa\int_{0}^{1}(\alpha -\rho_{t}^{\kappa}(u))G(u)r^{-}(u)du +\kappa\int_{0}^{1}(\beta-\rho_{t}^{\kappa}(u))G(u)r^{+}(u)du\\
&=-  \kappa\int_{0}^{1}\rho_{t}^{\kappa}(u)G(u)V_{1}(u)du+\kappa\int_{0}^{1}G(u)V_{0}(u)du.\\
\end{split}
\end{equation*}

\subsubsection{\underline{The case $\theta =0$}}
In this regime we take $\Theta(N)=N^{\gamma+\theta}$ and a function $G\in C_{c}^{\infty}(0,1)$. The first term  on the right hand side in (\ref{gen_action}) can be replaced, thanks to (\ref{F_convergenceLL}) by 
$$\langle \pi_{t}^{N}, \LL G \rangle \to \int_{0}^{1}(\LL G)(u) \rho_{t}^{\kappa}(u)du,$$
as $N$ goes to $\infty$. Similarly, the second term on the right hand side of (\ref{gen_action}) is equal to $\kappa \langle \alpha -\pi_{t}^{N},  G r^- \rangle +\kappa \langle \beta -\pi_{t}^{N},  G r^+ \rangle$ which converges, as $N$ goes to $\infty$, to 
\begin{equation*}
\begin{split}
 & \kappa\int_{0}^{1}(\alpha -\rho_{t}^{\kappa}(u))G(u)r^{-}(u)du +\kappa\int_{0}^{1}(\beta-\rho_{t}^{\kappa}(u))G(u)r^{+}(u)du\\
&=-  \kappa\int_{0}^{1}\rho_{t}^{\kappa}(u)G(u)V_{1}(u)du+\kappa\int_{0}^{1}G(u)V_{0}(u)du.\\
\end{split}
\end{equation*}

This intuitive argument is rigorously proved in Subsection \ref{subsec:Characterization}.

\subsection{Tightness} 
\label{subsec:Tightness}
In this subsection we prove that the sequence $\lbrace  \mathbb {Q}_{N} \rbrace_{N \geq 1} $ is tight. We use the usual approach (see, for example, Proposition 4.1.6 in \cite{KL}), which says that is enough to show that, for all $\ve >0$ 
\begin{equation}
\label{T1}
\displaystyle \lim _{\delta \rightarrow 0} \limsup_{N\rightarrow\infty} \sup_{\tau  \in \mathcal{T}_{T},\bar\tau \leq \delta} {\mathbb{P}}_{\mu _{N}}\Big[\eta_{\cdot}^{N}\in {\mathcal D} ( [0,T], \Omega_{N}) :\left\vert \langle\pi^{N}_{\tau+ \bar\tau},G\rangle-\langle\pi^{N}_{\tau},G\rangle\right\vert > \ve \Big]  =0, 
\end{equation}
for any function $G$ belonging to $C([0,1])$ . Above $\mathcal{T}_{T}$ is the set of stopping times bounded by $T$ and we implicitly assume that all the stopping times are bounded by $T$, thus, $\tau+ \bar\tau$ should be read as $ (\tau+ \bar\tau) \wedge T$. Indeed, we prove below that (\ref{T1}) is true for any function $G$ in $C_{c}^{2}((0,1))$, by using an $L^1$ approximation procedure(a similar argument as done in \cite{BGJO}), we can extend this class of functions to functions $G\in C([0,1])$.

\begin{prop}\label{Tightness}
The sequence of measures $\lbrace\mathbb{Q}_{N}\rbrace_{N\geq 1}$ is tight with respect to the Skorohod topology of $\mathcal D([0, T],{\mathcal{M^{+}}})$.
\begin{proof}
 Note that, we are going to prove (\ref{T1}) for functions $G$ in  $C_{c}^{2}((0,1))$. Recall from (\ref{Dynkin'sFormula})  that $M_{t}^{N}(G)$ is a martingale with respect to the natural filtration  $\{\mathcal{F}_{t}\}_{t\geq 0}$. In order to prove (\ref{T1}) it is enough to show that
\begin{eqnarray} 
\label{TC1}
\displaystyle \lim _{\delta\rightarrow 0} \limsup_{N\rightarrow\infty} \sup_{\tau  \in \mathcal{T}_{T},\bar\tau \leq \delta}\mathcal{\mathbb{E}}_{\mu _{N}}\left[ \left\vert \int_{\tau}^{\tau+ \bar\tau}N^{\gamma} L_{N}\langle \pi_{s}^{N},G\rangle ds \right\vert\right] = 0
\end{eqnarray}
and
 \begin{equation} 
 \label{TC2}
\displaystyle \lim _{\delta\rightarrow 0} \limsup_{N\rightarrow\infty} \sup_{\tau  \in \mathcal{T}_{T},\bar\tau \leq \delta}\mathcal{\mathbb{E}}_{\mu _{N}}\left[\left( M_{\tau}^{N}(G)- M_{\tau+ \bar\tau}^{N}(G) \right)^{2}  \right]=0.
\end{equation}

\vspace{0.5cm}
\noindent 
By using (\ref{F_convergencer}), (\ref{F_convergenceLL}) and the fact that $G\in C_{c}^{2}((0,1))$ we can bound the expression in (\ref{gen_action}) by a constant. By using the fact that $|\eta_x^N (s)| \le 1$  and 
\begin{equation}\label{Tx}
\displaystyle\sum_{x\geq 1}\left( r_{N}^{-}(\tfrac{x}{N})+ r_{N}^{+}(\tfrac{x}{N})\right)<\infty
\end{equation}
(since $\gamma >1$), we can bound from above the second term  at the right hand side in (\ref{gen_action}) by a constant times $\Theta(N) N^{-1-\theta}$. Considering the different values of $\theta$ we see that such term is bounded from above by a constant. Then we have that 
\begin{equation}
\vert \Theta(N) L_N ( \langle \pi^N_{s}, G \rangle )\vert \ls  1
\end{equation}
for any $s\le T$, which trivially implies \eqref{TC1}.

In order to prove \eqref{TC2}, 
 by Dynkin's formula (see Appendix 1  in \cite{KL}) we know that  $$\displaystyle\left( M^{N}_{t}(G)\right)^{2}-\int^{t}_{0} \Theta(N)\left[ L_{N} \langle\pi^{N}_{s},G \rangle^{2}- 2\langle\pi^{N}_{s},G \rangle L_{N} \langle\pi^{N}_{s},G \rangle\right]ds,$$
is a martingale with respect to the natural filtration  $\{\mathcal{F}_{t}\}_{t\ge 0}$. By Lemma \ref{lem:compA} we get that the term inside the time integral in the previous expression is equal to 
\begin{equation}\label{T5}
\begin{split}
&\dfrac{\Theta(N)}{(N-1)^{2}} \sum_{x<y\in\Lambda_{N}}\left( G\left(\tfrac{x}{N} \right) -G\left(\tfrac{y}{N} \right)\right)^{2}p(x-y)(\eta_{y}(s\Theta(N))-\eta_{x}(s\Theta(N)))^{2}\\
+&\dfrac{\kappa \Theta(N)}{(N-1)^{2}}\sum_{ x\in\Lambda_{N}} \left(G\left(\tfrac{x}{N}\right)\right)^{2}(1-2\eta_{x}(s\Theta(N))) \left(r^{-}_{N}(\tfrac{x}{N})(\alpha-\eta_{x}(s\Theta(N))) + r^{+}_{N}(\tfrac{x}{N})(\beta-\eta_{x}(s\Theta(N)))\right).\\
\end{split}
\end{equation}
Since  the first derivative of $G$ is bounded it is easy to see that the absolute value of (\ref{T5}) is bounded from above by a constant times
\begin{equation}
\label{T6}
\begin{split}
&\dfrac{\Theta(N)}{(N-1)^{4}} \sum_{x,y\in\Lambda_{N}} (x-y)^{2}p(x-y)
+\dfrac{\kappa \Theta(N)}{(N-1)^{2}}\sum_{ x\in\Lambda_{N}} \left(G\left(\tfrac{x}{N}\right)\right)^{2} \left( r_{N}^{-}(\tfrac{x}{N}) +r_{N}^{+}(\tfrac{x}{N})  \right).\\
\end{split}
\end{equation}
Note that  $ (x-y)^{2}p(x-y) \ls 1$ because $\gamma > 1$, {so that}  
$$\dfrac{\Theta(N)}{(N-1)^{4}} \sum_{x,y\in\Lambda_{N}} (x-y)^{2}p(x-y) \ls \Theta(N) N^{ -2}= \mc{O} (N^{\gamma-2}).$$ 
By \eqref{Tx}, the remaining terms in (\ref{T6}) are $\mc{O}( \Theta(N)N^{-\theta-2})$  so that \eqref{T6} is $\mc{O} (N^{\gamma-2})$.

Thus, since $\tau$ is a stopping time and $\gamma< 2$ we have that  
\begin{eqnarray*} 
& &\displaystyle \lim _{\delta\rightarrow 0} \limsup_{N\rightarrow\infty} \sup_{\tau  \in \mathcal{T}_{T},\bar\tau \leq \delta}\mathcal{\mathbb{E}}_{\mu ^{N}}\left[\left( M_{\tau}^{N,G}- M_{\tau+ \bar\tau}^{N,G} \right)^{2}  \right]\\
=& &\displaystyle \lim _{\delta\rightarrow 0} \limsup_{N\rightarrow\infty} \sup_{\tau  \in \mathcal{T}_{T},\bar\tau \leq \delta}\mathcal{\mathbb{E}}_{\mu ^{N}}\left[ \int^{\tau+\bar\tau}_{\tau}\Theta(N)\left[ L_{N} \langle\pi^{N}_{s},G \rangle^{2}- 2\langle\pi^{N}_{s},G \rangle L_{N} \langle\pi^{N}_{s},G \rangle\right]ds \right]\\
=&&0.
\end{eqnarray*}

Therefore, we have proved (\ref{T1}) for functions $G$ in $C_{c}^{2}((0,1))$ and as we have said in the beginning of the subsection this is enough to conclude tightness.
\end{proof}
\end{prop}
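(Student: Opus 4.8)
The plan is to verify the Aldous-type tightness criterion recalled above, i.e.\ \eqref{T1}, first for test functions $G\in C_c^2((0,1))$ and then to extend it to arbitrary $G\in C([0,1])$ by an $L^1$-approximation argument as in \cite{BGJO}. So fix $G\in C_c^2((0,1))$ and stopping times $\tau\le\tau+\bar\tau\le T$. Dynkin's formula \eqref{Dynkin'sFormula} gives
$$\langle\pi^N_{\tau+\bar\tau},G\rangle-\langle\pi^N_\tau,G\rangle=\int_\tau^{\tau+\bar\tau}\Theta(N)L_N\langle\pi^N_s,G\rangle\,ds+\big(M^N_{\tau+\bar\tau}(G)-M^N_\tau(G)\big),$$
so by Chebyshev's inequality it is enough to prove \eqref{TC1} for the drift term and \eqref{TC2} for the martingale increment.

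For the drift, I would start from the explicit expression \eqref{gen_action}. Since $\Theta(N)=N^{\gamma+\theta}$ and $G$ has compact support, \eqref{F_convergenceLL} shows that the first (bulk) term is of order $N^{\theta}\|\LL G\|_\infty$, hence $O(1)$ because $\theta\le 0$; the compact support of $G$ is what keeps $\mc L_N G$ under control near $0$ and $1$. For the reservoir term I would use $|\alpha-\eta_x|,|\beta-\eta_x|\le 1$ together with \eqref{F_convergencer} and the compact support of $G$ (so that $r_N^\pm(\tfrac xN)=O(N^{-\gamma})$ uniformly on $\operatorname{supp}G$), which makes it $O(1)$ as well; the summability of the contributions of the infinitely many reservoir sites is exactly \eqref{Tx}, which uses $\gamma>1$. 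Combining, $|\Theta(N)L_N\langle\pi^N_s,G\rangle|\ls 1$ uniformly in $s\le T$ and in the configuration, so that $\EE_{\mu_N}\big[\big|\int_\tau^{\tau+\bar\tau}\Theta(N)L_N\langle\pi^N_s,G\rangle\,ds\big|\big]\ls\bar\tau\le\delta$, and \eqref{TC1} follows on letting $\delta\to 0$.

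For the martingale increment I would apply Dynkin's formula to $(M^N_t(G))^2$, so that $(M^N_t(G))^2-\int_0^t\Theta(N)\big(L_N\langle\pi^N_s,G\rangle^2-2\langle\pi^N_s,G\rangle L_N\langle\pi^N_s,G\rangle\big)\,ds$ is a martingale, and then compute the integrand (the carr\'e du champ) explicitly: it is a jump part of the form $\tfrac{\Theta(N)}{(N-1)^2}\sum_{x<y}\big(G(\tfrac xN)-G(\tfrac yN)\big)^2p(x-y)(\eta_y-\eta_x)^2$ plus a reservoir part of the same structure. Using that $\partial_u G$ is bounded (so $\big(G(\tfrac xN)-G(\tfrac yN)\big)^2\ls(\tfrac{x-y}{N})^2$), that $(x-y)^2p(x-y)\ls 1$ since $\gamma>1$, and \eqref{Tx}, this integrand is bounded by a constant times $\Theta(N)N^{-2}+\kappa\,\Theta(N)N^{-2-\theta}$, which is $O(N^{\gamma-2})$ once $\theta\le 0$. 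Hence, $\tau$ being a stopping time, $\EE_{\mu_N}\big[(M^N_{\tau+\bar\tau}(G)-M^N_\tau(G))^2\big]\ls\bar\tau\,N^{\gamma-2}\le\delta\,N^{\gamma-2}$, which tends to $0$ as $N\to\infty$ because $\gamma<2$; this is \eqref{TC2}.

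The step I expect to be the real (if modest) obstacle is the bookkeeping of the boundary/reservoir contributions — both the second term of \eqref{gen_action} and the reservoir part of the carr\'e du champ — where one must simultaneously use $\gamma>1$ (finite mean of $p$, the summability \eqref{Tx}, and $(x-y)^2p(x-y)\ls 1$), $\gamma<2$ (so the quadratic variation is $O(N^{\gamma-2})\to 0$) and $\theta\le 0$ (so the drift stays bounded); the compact support of $G$ is precisely what tames the interpolated functions $\mc L_N G$ and $r_N^\pm$ near the endpoints. The $L^1$-approximation used to pass from $C_c^2((0,1))$ to $C([0,1])$ is then routine given these uniform bounds.
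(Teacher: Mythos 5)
Your proposal is correct and follows essentially the same route as the paper: the Aldous criterion is reduced to the drift bound \eqref{TC1} and the quadratic-variation bound \eqref{TC2}, the drift is controlled via \eqref{F_convergenceLL}, \eqref{F_convergencer} and \eqref{Tx}, and the carr\'e du champ is shown to be $\mc O(N^{\gamma-2})$, which vanishes since $\gamma<2$. If anything, your handling of the reservoir part of the drift (using that $r_N^{\pm}(\tfrac{x}{N})=O(N^{-\gamma})$ uniformly on the support of $G$, so that the sum of $O(N)$ such terms is $O(N^{1-\gamma})$) is slightly more careful than the paper's intermediate bound at that step.
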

\subsection{Energy Estimate}
\label{subsec:EE}
We prove in this subsection that any limit point $\mathbb{Q}$ of the sequence $\lbrace\mathbb{Q}_{N}\rbrace_{N\geq 1}$ is concentrated on trajectories $\pi_{t}^{\kappa}(u)du$ with finite energy, i.e., $\pi^{\kappa}$ belongs to $L^{2}(0,T;\mc H^{\gamma/2})$. Moreover, we prove that $\pi_{t}^{\kappa}$ satisfies item ii) in Definition \ref{Def. Dirichlet Condition}. The latter is the content of Theorem  \ref{Energy_Thm1}  stated below. Fix a limit point $\mathbb{Q}$ of the sequence $\lbrace\mathbb{Q}_{N}\rbrace_{N\geq 1}$ and  assume, without of loss of generality, that the sequence $\mathbb{Q}_{N}$ converges to $\mathbb{Q}$ as $N$ goes to $\infty$.
\begin{thm}\label{Energy_Thm1}
The probability measure $\mathbb{Q}$ is concentrated on trajectories of measures of the form $\pi_{t}^{\kappa}(u)du$, such that for any interval $I \subset [0,T]$  the density $\pi^{\kappa}$ satisfies 
\begin{enumerate}[i)] 
\item  $\int _{I} \Vert \pi^{\kappa}_{t} \Vert _{\gamma/2}^{2}dt \ls \vert I\vert (\kappa + 1)$, if $\theta=0$.  
\item $\displaystyle\int_I\int_{0}^{1}\left\lbrace \dfrac{(\alpha-\pi_{t}^{\kappa}(u))^{2}}{u^{\gamma}} + \dfrac{(\beta-\pi_{t}^{\kappa}(u))^{2}}{(1-u)^{\gamma}}\right\rbrace du\;dt \ls \vert I\vert\dfrac{\kappa+1}{\kappa}$, if $\theta\leq 0$.
\end{enumerate}
\end{thm}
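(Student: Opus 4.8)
The plan is to derive both bounds from a single microscopic estimate, namely the Dirichlet form identity for the adjoint reversible measure, evaluated along the trajectory of the process, and then to pass to the limit using lower semicontinuity of the relevant (semi)norms.

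\textbf{Step 1: The microscopic Dirichlet form bound.} First I would work with a reference product Bernoulli measure $\nu_\beta^N$ (or, more symmetrically, a slowly varying product measure interpolating between $\alpha$ and $\beta$) and compute the Dirichlet form $\langle \sqrt{f}, -\Theta(N) L_N \sqrt{f}\rangle_{\nu}$ for a density $f$ with respect to $\nu$. Decomposing $L_N = L_N^0 + \kappa N^{-\theta}(L_N^\ell + L_N^r)$, the bulk part $L_N^0$ produces, after the standard $(\sqrt{f(\sigma^{x,y}\eta)}-\sqrt{f(\eta)})^2$ manipulation, a positive term which at scale $\Theta(N)=N^{\gamma+\theta}$ controls $N^\gamma \cdot N^{-1-1}\sum_{x<y} p(x-y)(G(\tfrac xN)-G(\tfrac yN))^2 (\cdots)$, i.e. after the replacement lemma a term comparable to $\Vert \pi^N_t\Vert_{\gamma/2}^2$ when $\theta=0$; the reservoir parts $L_N^\ell, L_N^r$ produce the positive boundary terms comparable to $N^\gamma \kappa N^{-\theta}\sum_x p(x-y)(\cdots)$, which after using \eqref{F_convergencer} are of the size $\kappa\int \big(\tfrac{(\alpha-\pi^N)^2}{u^\gamma}+\tfrac{(\beta-\pi^N)^2}{(1-u)^\gamma}\big)$. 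The point is that the full Dirichlet form is nonnegative, and by the entropy inequality together with the bound $H(\mu_N|\nu)=O(N)$ and the fact that entropy is nonincreasing in time, one gets
\begin{equation*}
\EE_{\mu_N}\Big[\int_I \langle \sqrt{f_s^N},-\Theta(N)L_N\sqrt{f_s^N}\rangle_\nu\, ds\Big] \ls \frac{|I| N}{\Theta(N)} \cdot \Theta(N) \cdot \frac{1}{N}\cdot(\text{const}),
\end{equation*}
wait — more carefully: $\int_0^T \langle \sqrt{f_s},-L_N\sqrt{f_s}\rangle\,ds \le \tfrac{H(\mu_N|\nu)}{2\Theta(N)}$ when the process has generator $\Theta(N)L_N$, so $\int_I \langle\sqrt{f_s^N},-\Theta(N)L_N\sqrt{f_s^N}\rangle_\nu ds \ls H(\mu_N|\nu) \ls N$. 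Since the nonnegative bulk and boundary pieces are each bounded by this, and each carries a prefactor: the bulk piece $\gtrsim \tfrac{\Theta(N)}{N}\Vert\pi^N\Vert_{\gamma/2}^2$ up to lower order, and the boundary piece $\gtrsim \tfrac{\kappa\Theta(N)N^{-\theta}}{N}\int(\cdots)$. Dividing by $N$ and using $\Theta(N)=N^{\gamma+\theta}$, one reads off that $\EE[\int_I N^\theta\Vert\pi^N\Vert^2_{\gamma/2}]\ls$ const, and $\EE[\int_I \kappa \int(\cdots)]\ls$ const. Taking $\theta=0$ gives i) at the microscopic level; taking any $\theta\le0$ and dividing by $\kappa$ gives ii) at the microscopic level (the right side being $|I|(\kappa+1)/\kappa$ once one is more careful about the $+1$ coming from the reference-measure correction terms and the boundary being nonsymmetric).

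\textbf{Step 2: Passage to the limit.} Having the microscopic bounds in expectation, I would pass to the limit $N\to\infty$ along the subsequence for which $\mathbb Q_N\to\mathbb Q$. The seminorm $\pi\mapsto \int_I\Vert\pi_t\Vert^2_{\gamma/2}dt$ is convex and lower semicontinuous on $\mathcal D([0,T],\mathcal M^+)$ (write it as a supremum over test functions $G\in C_c^\infty((0,1))$ of linear-minus-quadratic functionals of $\pi$, exactly as in Chapter 5 of \cite{KL}); the same is true for $\int_I\int_0^1\big(\tfrac{(\alpha-\pi_t)^2}{u^\gamma}+\tfrac{(\beta-\pi_t)^2}{(1-u)^\gamma}\big)du\,dt$ after a truncation of the weights $u^{-\gamma},(1-u)^{-\gamma}$ near the boundary and a monotone limit. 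Therefore Fatou and the portmanteau inequality for l.s.c. functionals give $\EE_{\mathbb Q}[\,\cdot\,]\le \liminf_N \EE_{\mathbb Q_N}[\,\cdot\,]$, and the right-hand sides are the microscopic bounds from Step 1. Since the bounds hold in expectation and the functionals are nonnegative, $\mathbb Q$-a.s. finiteness follows, which in particular shows $\pi^\kappa\in L^2(0,T;\mathcal H^{\gamma/2})$ (item i) of Definition \ref{Def. Dirichlet Condition}) and item ii) of that definition.

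\textbf{Main obstacle.} The delicate point is Step 1: obtaining the boundary term with the \emph{correct} weight $u^{-\gamma}$ (resp. $(1-u)^{-\gamma}$) and the correct constant. The reservoir part of the Dirichlet form naturally produces $\sum_{x\in\Lambda_N} r_N^-(\tfrac xN)(\text{something})$, and one must (a) use \eqref{F_convergencer}, $N^\gamma r_N^-(u)\to r^-(u)=c_\gamma\gamma^{-1}u^{-\gamma}$, \emph{uniformly only on} $[a,1-a]$, so the region near the boundary needs a separate, crude estimate exploiting that the integrand is bounded by $1$ and that $\sum_x r_N^\pm$ is summable; and (b) control the error from replacing the true invariant-type reference measure (which is not reversible, because of the asymmetric reservoirs) — this is where the extra $+1$ in $(\kappa+1)$ and $(\kappa+1)/\kappa$ comes from, via the Radon–Nikodym derivative between $\nu$ and a genuinely reversible profile measure. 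Controlling this mismatch, together with a standard but slightly technical replacement (local ergodic / two-blocks–type) argument to turn microscopic occupation variables into functions of the empirical measure inside the quadratic forms, is the real work; everything else is soft functional analysis.
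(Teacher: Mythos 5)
There is a genuine gap in your Step 1, and it propagates into Step 2. The Dirichlet form $D_N(\sqrt f,\nu)$ is a functional of the density $f$ of the law of the process with respect to the reference measure (it is built from integrals $\int(\sqrt{f(\sigma^{x,y}\eta)}-\sqrt{f(\eta)})^2d\nu$), not of the configuration itself, so there is no lower bound of the form $D_N^0(\sqrt f)\gtrsim N^{-1}\Vert\pi^N\Vert_{\gamma/2}^2$ "after a replacement lemma": the two objects live on different sides of the duality. Worse, the microscopic analogue of the energy evaluated on the empirical measure is genuinely divergent: since $\eta_x-\eta_y\in\{0,\pm1\}$, the quantity $N^{\gamma-1}\sum_{x\neq y}(\eta_x-\eta_y)^2|x-y|^{-(1+\gamma)}$ is typically of order $N^{\gamma}$, and $\EE_{\mathbb Q_N}\big[\int_I\Vert\pi_t\Vert^2_{\gamma/2}dt\big]=+\infty$ for every $N$ because $\pi^N$ is atomic. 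Consequently your Step 2 (lower semicontinuity plus portmanteau, giving $\EE_{\mathbb Q}[\cdot]\le\liminf_N\EE_{\mathbb Q_N}[\cdot]$) yields only the bound $+\infty$. The same objection applies to the boundary term: $D_N^\ell(\sqrt f)$ involves $\int c_x(\eta;\alpha)(\sqrt{f(\sigma^x\eta)}-\sqrt f)^2d\nu$ and cannot be bounded below by $\int(\alpha-\pi_t(u))^2u^{-\gamma}du$.

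The paper's proof circumvents exactly this by duality. One writes the energy as a supremum over test functions $F$ (with the cutoff $|u-v|\ge\ve$) of linear-minus-quadratic functionals; the \emph{linear} functional $N^{\gamma-1}\sum_{|x-y|\ge\ve N}F(\tfrac xN,\tfrac yN)p(y-x)(\eta_y-\eta_x)$ is bounded in expectation via the entropy inequality and Feynman--Kac, then integrated by parts against the density $f$ and absorbed into the Dirichlet form by Young's inequality (Lemma \ref{bound_Dir}); the Dirichlet form enters only as an \emph{absorbing} term with the correct sign, never as a lower bound for the energy. After passing to the limit for each fixed $F$, Lemma 7.5 of \cite{KLO2} lets one move the supremum over a countable dense family of $F$'s inside $\EE_{\mathbb Q}$, and the variational characterization of the $L^2(d\mu_\ve)$ norm plus monotone convergence in $\ve$ gives i). Item ii) is obtained identically with the one-point functionals $\langle\eta_x-\alpha,f\rangle$ controlled by Lemma \ref{bound} with $A_x=G_t(\tfrac xN)/\kappa$, which is where the factor $(\kappa+1)/\kappa$ actually comes from (not from a Radon--Nikodym correction as you suggest). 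No replacement or two-blocks argument is used or needed. If you want to salvage your outline, you must replace the claim "the Dirichlet form controls the energy" by this dual formulation; the entropy-production bound you quote in Step 1 is then unnecessary, since the Feynman--Kac variational formula already produces the Dirichlet form with a negative sign.
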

\begin{rem}\label{Energy_Rm1}
It follows from item i) of the previous and from Theorem 8.2 of  \cite{dPV} that $\pi_{t}^{\kappa}$ is, $\PP$ almost surely,  $\tfrac{\gamma-1}{2}$-H\"older for all $t\in I$.

By taking $I= [0,T]$ in item i) of Theorem \ref{Energy_Thm1} we can see that $\pi^{\kappa} \in L^{2}(0,T;\mc H ^{\gamma/2})$. Moreover, from item ii) of Theorem \ref{Energy_Thm1}, we claim that
\begin{equation} \label{EE_3}
\int_{I}\Vert \pi^{\kappa}_{t} - \bar\rho^{\infty}\Vert_{V_{1}}^{2}dt \ls |I|\dfrac{\kappa +1}{\kappa}
\end{equation}
where $\bar\rho^{\infty}$ is given in Remark \ref{rem:Explicit_sol_rho_infty}. Note that
\begin{equation}\label{EE_R1}
\begin{split}
\int_{I}\Vert \pi^{\kappa}_{t} - \bar\rho^{\infty}\Vert_{V_{1}}^{2}dt &= c_{\gamma}\gamma^{-1}\int_{I}\int_{0}^{1}\left\lbrace \dfrac{( \pi^{\kappa}_{t}(u) - \bar\rho^{\infty}(u))^{2}}{u^{\gamma}}+\dfrac{( \pi^{\kappa}_{t}(u) - \bar\rho^{\infty}(u))^{2}}{(1-u)^{\gamma}}\right\rbrace dudt.
\end{split}
\end{equation} 
By summing and subtracting $\alpha$ inside the first square in the expression on the right hand side in (\ref{EE_R1}), $\beta$ in the second one and using the fact that $(a+b)^{2}\leq 2(a^{2}+b^{2})$ we get that  (\ref{EE_R1}) is bounded from above by 
\begin{equation}\label{EE_R2}
\begin{split}
& 2 c_{\gamma}\gamma^{-1}\int_{I}\int_{0}^{1}\left\lbrace \dfrac{( \pi^{\kappa}_{t}(u) - \alpha)^{2}}{u^{\gamma}}+\dfrac{( \pi^{\kappa}_{t}(u) -\beta)^{2}}{(1-u)^{\gamma}}\right\rbrace dudt\\
+& \, 2c_{\gamma}\gamma^{-1}\int_{I}\int_{0}^{1}\left\lbrace \dfrac{( \alpha- \bar\rho^{\infty}(u))^{2}}{u^{\gamma}}+\dfrac{( \beta - \bar\rho^{\infty}(u))^{2}}{(1-u)^{\gamma}}\right\rbrace dudt.\\
\end{split}
\end{equation}
Now, by using item ii) of Theorem \ref{Energy_Thm1} we have that the first term in the previous expression is bounded by constant times $|I|\dfrac{\kappa +1}{\kappa}$. Finally, using the definition of $\bar\rho^{\infty}$ (see Remark \ref{rem:Explicit_sol_rho_infty}) the second term in \eqref{EE_R2} is equal to 
\begin{equation*}
 2c_{\gamma}\gamma^{-1}(\beta -\alpha)^{2}|I|\int_{0}^{1}(u^{\gamma}+(1-u)^{\gamma})^{-1} du\ls 1.
\end{equation*}
\end{rem}

Before we prove Theorem \ref{Energy_Thm1}, we establish some estimates on the Dirichlet form which are needed in due course.
\subsubsection{Estimates on the  Dirichlet form}
Let 
$h : [0, 1] \rightarrow [0, 1]$ be a function such that  $\alpha \leq h(u)\leq\beta$, for all $u\in [0,1]$, and assume that $h(0) =\alpha$ and $h(1)=\beta$. Let $\nu_{h}^{N}$ be the inhomogeneous Bernoulli product measure on 
$\Omega_{N}$ with marginals given by
$$\nu_{h}^{N}\lbrace\eta: \eta_{x} = 1 \rbrace = h\left( \tfrac{x}{N}\right).$$

We denote  by $H_{N}(\mu\vert \nu_{h}^{N})$ the relative entropy of a probability measure $\mu$ on 
$\Omega_{N}$ with respect to the probability measure $\nu_{h}^{N} $. It is easy to prove the existence of a constant $C_0$, such that
\begin{equation}\label{H}
H_{N}(\mu_{N}\vert \nu_{h}^{N})\leq C_0 N.
\end{equation}
(see for example \cite{BGJO}). \pat{We remark here that the restriction $\alpha\neq 0$ and $\beta\neq 1$ comes from last estimate since the constant $C_0$ given above is given by $C_0=-\log(\alpha \wedge(1-\beta))$.} On the other hand, for a probability measure $\mu$ on $\Omega_N$ and a density function $f:\Omega_N \to [0,\infty)$ with respect to $\mu$ we introduce  
\begin{eqnarray} 
\label{left_rig_form}
D_{N}^{0}(\sqrt{f},\mu):=\cfrac{1}{2}\sum_{x,y\in\Lambda_N}p(y-x)\, I_{x,y}(\sqrt{f},\mu), 
\end{eqnarray}
\begin{eqnarray}
\label{left_dir_form}
D_{N}^{\ell}(\sqrt{f},\mu):=\sum_{x\in\Lambda_N}\sum_{y\leq 0}p(y-x)\, I^\alpha_{x}(\sqrt{f},\mu)=\sum_{x\in\Lambda_N}r_N^-(\tfrac{x}{N})I^\alpha_{x}\, (\sqrt{f},\mu) 
\end{eqnarray} 
and $D_{N}^{r}(\sqrt{f},\mu)$ is the same as $D_{N}^{\ell}(\sqrt{f},\mu)$ but in $I^\alpha_{x}(\sqrt{f},\mu)$ the parameter $\alpha$ is replaced by $\beta$ and $r_N^-$ is replaced by $r_N^+$.  Above, we used the following notation
\begin{eqnarray*}
I_{x,y}(\sqrt f,\mu)&:=& \int \left(\sqrt {f(\sigma^{x,y}\eta)}-\sqrt {f(\eta)}\right)^{2} d\mu,\\
I_{x}^{\alpha}(\sqrt f,\mu)&:=& \int  c_{x}(\eta;\alpha)\left(\sqrt {f(\sigma^{x}\eta)}-\sqrt {f(\eta)}\right)^{2} d\mu
\end{eqnarray*}
and $I_{x}^{\beta}$ is the same as $I_{x}^{\alpha}$ when the parameter $\alpha$ is replaced by $\beta$.

Our goal is to express, for the measure $\nu_{h}^{N}$,  a relation between the Dirichlet form defined by $\langle L_N\sqrt{f},\sqrt{f} \rangle_{\nu_{h}^{N}}$ and the quantity
\begin{eqnarray*}
D_{N}(\sqrt{f},\nu_{h}^{N} )&:=& (D_{N}^{0}+\kappa N^{-\theta} D_{N}^{\ell}+\kappa N^{-\theta}D_{N}^{r})(\sqrt{f},\nu_{h}^{N}). \end{eqnarray*}
 
More precisely, we have the following result.
\begin{lem}\label{bound_Dir}
For any positive constant $B$ and any density function $f$ with respect to $\nu_{h}^N$, there exists a constant $C>0$ (independent of $f$ and $N$) such that 
\begin{equation}
\label{dir_est}
\begin{split}
\frac{\Theta(N)}{NB}\langle L_{N}\sqrt{f},\sqrt{f} \rangle_{\nu_{h}^N} &\leq -\dfrac{\Theta(N)}{4NB}D_{N}(\sqrt{f},\nu_{h}^N) + \dfrac{C\Theta(N)}{N	B}\sum_{x,y\in\Lambda_N}p(y-x)\Big(h(\tfrac xN)-h(\tfrac yN)\Big)^2\\
& + \dfrac{C\kappa\Theta(N)}{N^{\theta+1}B} \sum_{x\in\Lambda_N}\left\lbrace \Big(h(\tfrac xN)-\alpha\Big)^2r^{-}_{N}(\tfrac{x}{N}) + \Big(h(\tfrac xN)-\beta\Big)^2r^{+}_{N}(\tfrac{x}{N}) \right\rbrace.
\end{split}
\end{equation}
\end{lem}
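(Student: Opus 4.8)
The plan is to exploit the standard decomposition of the Dirichlet form into a ``bulk'' part, a ``left reservoir'' part and a ``right reservoir'' part, following the scheme used in \cite{BGJO,BJ}, and to control each part separately. First I would write
\begin{equation*}
\langle L_N \sqrt f, \sqrt f\rangle_{\nu_h^N} = \langle L_N^0\sqrt f,\sqrt f\rangle_{\nu_h^N} + \kappa N^{-\theta}\langle L_N^\ell\sqrt f,\sqrt f\rangle_{\nu_h^N} + \kappa N^{-\theta}\langle L_N^r\sqrt f,\sqrt f\rangle_{\nu_h^N},
\end{equation*}
and treat the three summands in turn. For the bulk term, a change of variables $\eta \mapsto \sigma^{x,y}\eta$ together with the elementary identity $2a(a-b) = (a-b)^2 - (b^2-a^2)$ (applied to $a = \sqrt{f(\eta)}$, $b = \sqrt{f(\sigma^{x,y}\eta)}$) produces $-\tfrac12 D_N^0(\sqrt f,\nu_h^N)$ plus an error term coming from the fact that $\nu_h^N$ is not reversible for $L_N^0$; this error involves the Radon--Nikodym derivative $\tfrac{d\nu_h^N\circ\sigma^{x,y}}{d\nu_h^N}$, which differs from $1$ by an amount controlled by $h(\tfrac xN)-h(\tfrac yN)$. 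Using the Cauchy--Schwarz / Young inequality to split that error and absorbing a fraction into the negative Dirichlet term yields the term $\tfrac{C\Theta(N)}{NB}\sum_{x,y}p(y-x)(h(\tfrac xN)-h(\tfrac yN))^2$ in \eqref{dir_est}.

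For the reservoir terms I would argue analogously, using the flip map $\eta\mapsto \sigma^x\eta$ in place of the exchange map. The identity $\eta_x(1-h(\tfrac xN)) + (1-\eta_x)h(\tfrac xN) = c_x(\eta;h(\tfrac\cdot N))$ means the ``natural'' reference density at site $x$ for $L_N^\ell$ is $h(\tfrac xN)$, whereas the measure $\nu_h^N$ already has marginal $h(\tfrac xN)$ there, so the reversibility defect for $L_N^\ell$ is governed precisely by the difference $h(\tfrac xN)-\alpha$. Carrying out the same algebraic step produces $-\tfrac12 \kappa N^{-\theta} D_N^\ell(\sqrt f,\nu_h^N)$ plus an error bounded, after a Young inequality, by $\tfrac{C\kappa\Theta(N)}{N^{\theta+1}B}\sum_x(h(\tfrac xN)-\alpha)^2 r_N^-(\tfrac xN)$; the right reservoir contributes the symmetric term with $\beta$ and $r_N^+$. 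Collecting the three negative pieces gives $-\tfrac14 \tfrac{\Theta(N)}{NB} D_N(\sqrt f,\nu_h^N)$ (the constant $\tfrac14$ rather than $\tfrac12$ leaves room for the Young-inequality absorption), and collecting the error terms gives exactly the right-hand side of \eqref{dir_est}.

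I expect the main technical obstacle to be the bookkeeping of the non-reversibility errors: one must verify that the cross terms produced by $2a(a-b)$, after summation over $x,y$ (resp. over $x$ and $y\le 0$ via $r_N^-(\tfrac xN) = \sum_{y\le 0}p(y-x)$), are genuinely of order $(h(\tfrac xN)-h(\tfrac yN))^2$ (resp. $(h(\tfrac xN)-\alpha)^2$) and not merely $O(1)$, which requires a careful expansion of $\log\tfrac{d\nu_h^N\circ\sigma}{d\nu_h^N}$ to second order and the observation that the first-order term vanishes upon symmetrization in $(x,y)$. A minor point to handle is the behavior at the boundary values $u=0,1$, where $h(0)=\alpha$ and $h(1)=\beta$ were precisely imposed so that the corresponding reservoir defects vanish; one also uses $\alpha>0$, $\beta<1$ to keep all the logarithmic weights bounded. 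Once these estimates are in place, the statement follows by summing and choosing the absorbed fraction of the Dirichlet form appropriately.
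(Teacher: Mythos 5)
Your plan is correct and is essentially the argument the paper has in mind: the paper in fact omits the proof entirely, referring to Section~5 of \cite{BGJO}, and the scheme there is precisely your decomposition into $L_N^0$, $L_N^\ell$, $L_N^r$, the change of variables $\eta\mapsto\sigma^{x,y}\eta$ (resp.\ $\eta\mapsto\sigma^x\eta$) producing $-\tfrac12$ of each Dirichlet piece plus a cross term involving $\theta^{x,y}-1$ (resp.\ the detailed-balance defect $h(\tfrac xN)-\alpha$), and Young's inequality to absorb part of the cross term into the Dirichlet form, which is where the $\tfrac14$ in \eqref{dir_est} comes from. One small correction to your last paragraph: the quadratic order of the errors does not come from a cancellation of first-order terms under symmetrization in $(x,y)$ (there is no such cancellation, since $\sigma^{x,y}=\sigma^{y,x}$); it comes directly from the Young step, because the cross term pairs the first-order factor $\theta^{x,y}-1=O(|h(\tfrac xN)-h(\tfrac yN)|)$ with a difference of square roots of $f$, so after splitting one gets $(\theta^{x,y}-1)^2\int f\circ\sigma^{x,y}\,d\nu_h^N\lesssim (h(\tfrac xN)-h(\tfrac yN))^2$ (using $\alpha>0$, $\beta<1$ to bound the Radon--Nikodym derivatives) plus an absorbable multiple of $I_{x,y}(\sqrt f,\nu_h^N)$.
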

The proof of this statement is similar to the one in Section 5 of \cite{BGJO} and thus it is omitted. Moreover, note that as a consequence of the previous lemma, for a function $h$ such that $\alpha \leq h(u)\leq \beta$ and $h$ Lipschitz we have that
\begin{equation}
\label{dir_est_lip}
\begin{split}
\frac{\Theta(N)}{NB}\langle L_{N}\sqrt{f},\sqrt{f} \rangle_{\nu_{h}^N} &\leq -\dfrac{\Theta(N)}{4NB}D_{N}(\sqrt{f},\nu_{h}^N) + \Theta (N)N^{-\gamma}\frac{C(\kappa N^{-\theta}+1)}{B}.
\end{split}
\end{equation}

\begin{lem}
 \label{bound}
For any density $f$ with respect to $\nu_{h}^N$,  any $x\in \Lambda_{N}$ and any positive constant $A_x$, we have that
$$\left\vert \left\langle \eta_x-\alpha,f\right\rangle_{\nu_{h}^{N}} \right\vert \; \ls \; \dfrac{1}{4A_{x}}  I_{x}^{\alpha}(\sqrt{f},\nu_{h}^{N})+ A_{x}+\left\vert h(\tfrac{x}{N})- \alpha\right\vert.$$
 The same result holds if $\alpha$ is replaced by $\beta$.
\end{lem}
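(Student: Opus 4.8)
The plan is to rewrite the inner product $\langle \eta_x - \alpha, f\rangle_{\nu_h^N}$ by exploiting the flip $\sigma^x$, which is an involution, together with the change of variables it induces on the reference measure $\nu_h^N$. Concretely, $\nu_h^N(\sigma^x \eta) = \nu_h^N(\eta) \, R_x(\eta)$ where $R_x(\eta)$ is the explicit likelihood ratio built from the single marginal $h(x/N)$; since $\alpha \le h(u) \le \beta < 1$ and $\alpha > 0$, this ratio is bounded above and below by constants depending only on $\alpha,\beta$. First I would write
\begin{equation*}
\langle \eta_x - \alpha, f\rangle_{\nu_h^N} = \int (\eta_x - \alpha) f(\eta)\, d\nu_h^N
\end{equation*}
and symmetrize using $\sigma^x$: apply the change of variables $\eta \mapsto \sigma^x\eta$ to half of the integral, so that the integrand picks up $f(\sigma^x\eta)$ in place of $f(\eta)$ and $(1-\eta_x-\alpha)$ in place of $(\eta_x-\alpha)$, up to the likelihood ratio $R_x$. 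Averaging the two representations produces, schematically, a term proportional to $\int (\eta_x-\alpha)\big(f(\eta) - f(\sigma^x\eta)\big)\,d\nu_h^N$ plus a lower-order term coming from the discrepancy between $h(x/N)$ and $\alpha$ (this is where $|h(x/N)-\alpha|$ enters, via $R_x - 1 = O(|h(x/N)-\alpha|)$) plus a term $\int(2\alpha - 1 - \dots)f\,d\nu_h^N$ type expression that is $O(1)$ because $f$ is a density.

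Next I would handle the main term $\int (\eta_x-\alpha)\big(f(\eta)-f(\sigma^x\eta)\big)\,d\nu_h^N$. Write the difference of densities as $f(\eta)-f(\sigma^x\eta) = \big(\sqrt{f(\eta)}-\sqrt{f(\sigma^x\eta)}\big)\big(\sqrt{f(\eta)}+\sqrt{f(\sigma^x\eta)}\big)$. Insert the factor $c_x(\eta;\alpha)$ (which is bounded below on the relevant configurations, or rather: use that $\eta_x - \alpha$ and $c_x(\eta;\alpha)$ are comparable up to the $O(|h(x/N)-\alpha|)$ correction already isolated), and apply Young's inequality $ab \le \tfrac{1}{4A_x}a^2 + A_x b^2$ with $a = \sqrt{c_x(\eta;\alpha)}\big(\sqrt{f(\eta)}-\sqrt{f(\sigma^x\eta)}\big)$ and $b$ the bounded remaining factor times $\sqrt{c_x}\big(\sqrt{f(\eta)}+\sqrt{f(\sigma^x\eta)}\big)$. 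Integrating, the $a^2$ part is exactly $\tfrac{1}{4A_x} I_x^\alpha(\sqrt f, \nu_h^N)$, while the $A_x b^2$ part integrates to a constant times $A_x \int \big(f(\eta)+f(\sigma^x\eta)\big)R_x^{\pm 1}\,d\nu_h^N \lesssim A_x$, using again that $f$ is a density and the likelihood ratio is bounded. Collecting everything gives the claimed bound; the case of $\beta$ is verbatim the same with $\alpha$ replaced by $\beta$ throughout.

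The main obstacle — really the only delicate bookkeeping point — is keeping careful track of the likelihood-ratio corrections: the identity one would like, namely that $\langle \eta_x-\alpha, f\rangle$ is controlled purely by $I_x^\alpha$ and $A_x$, is only true up to the additive error $|h(x/N)-\alpha|$, and one must make sure that every place where $\nu_h^N$ is pushed forward by $\sigma^x$ contributes an error of exactly this order and no worse. Since $R_x(\eta) - 1$ is a smooth bounded function of $h(x/N)$ vanishing when $h(x/N)=\alpha$, a first-order Taylor estimate gives $|R_x - 1| \lesssim |h(x/N)-\alpha|$ uniformly (the implicit constant depending only on $\alpha$, through $1/\alpha$ and $1/(1-\beta)$), which is what makes the third term on the right-hand side appear with no extra $N$- or $x$-dependence. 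Everything else is the standard Young-inequality-against-a-Dirichlet-form argument.
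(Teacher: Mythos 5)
Your handling of the antisymmetric part is essentially the argument of Lemma 5.5 in \cite{BGJO}, to which the paper defers: since $|\eta_x-\alpha|=c_x(\eta;\alpha)$ exactly, Young's inequality with weight $A_x$ applied to $(\eta_x-\alpha)\bigl(\sqrt{f(\eta)}-\sqrt{f(\sigma^x\eta)}\bigr)\bigl(\sqrt{f(\eta)}+\sqrt{f(\sigma^x\eta)}\bigr)$ yields $\tfrac{1}{4A_x}I_x^{\alpha}(\sqrt f,\nu_h^N)$ plus $A_x$ times $\int (f+f\circ\sigma^x)\,d\nu_h^N\lesssim 1$, the latter because the flip's likelihood ratio is bounded by a constant depending only on $\alpha,\beta$. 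That part is fine.

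The symmetric part is where your proof breaks, in two concrete ways. First, the claim $R_x-1=O(|h(\tfrac xN)-\alpha|)$ is false: for a Bernoulli marginal the ratio equals $\tfrac{1-h(x/N)}{h(x/N)}$ or $\tfrac{h(x/N)}{1-h(x/N)}$ according to the value of $\eta_x$, so $R_x-1$ vanishes when $h(\tfrac xN)=\tfrac12$, not when $h(\tfrac xN)=\alpha$; at $h(\tfrac xN)=\alpha$ it equals $(1-2\alpha)/\alpha$ or $(2\alpha-1)/(1-\alpha)$, which are order one. Second, the residual term you describe as ``$O(1)$ because $f$ is a density'' (the $(1-2\alpha)\int f(\sigma^x\eta)R_x\,d\nu=1-2\alpha$ contribution from writing $1-\eta_x-\alpha=-(\eta_x-\alpha)+(1-2\alpha)$) is not tolerable: the lemma has no additive constant on its right-hand side, and it is applied with $A_x=G_t(\tfrac xN)/\kappa$ arbitrarily small and with $h$ Lipschitz, $h(0)=\alpha$, so $|h(\tfrac xN)-\alpha|\lesssim \tfrac xN$. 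Testing $f\equiv 1$ and $h(\tfrac xN)=\alpha$ gives left-hand side $0$ and $I_x^{\alpha}=0$, while your separate estimates of the two spurious pieces leave a bound of order $|1-2\alpha|$; in fact those two pieces cancel each other exactly, but your proposal bounds them individually and so destroys the cancellation. The correct route avoids the change of variables entirely: split $\int(\eta_x-\alpha)f\,d\nu=\tfrac12\int(\eta_x-\alpha)(f-f\circ\sigma^x)\,d\nu+\tfrac12\int(\eta_x-\alpha)(f+f\circ\sigma^x)\,d\nu$, observe that $f+f\circ\sigma^x$ is $\sigma^x$-invariant and hence does not depend on $\eta_x$, and use that $\nu_h^N$ is a product measure to factorize the second integral as $(h(\tfrac xN)-\alpha)\cdot\tfrac12\int(f+f\circ\sigma^x)\,d\nu$, which is bounded by a constant times $|h(\tfrac xN)-\alpha|$. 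That product-structure factorization, absent from your proposal, is the only mechanism producing the third term of the lemma.
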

The proof of Lemma \ref{bound} is omitted since is similar to  the one of Lemma 5.5 in \cite{BGJO}. Note that in the case $\alpha \le h \le \beta$ and  Lipschitz  we get 
$$\left\vert \left\langle \eta_x-\alpha,f\right\rangle_{\nu_{h}^{N}} \right\vert \; \ls\; \dfrac{1}{4A_{x}}  I_{x}^{\alpha}(\sqrt{f},\nu_{h}^{N})+ A_{x}+\dfrac{x}{N}.$$

\subsubsection{Proof of of Theorem \ref{Energy_Thm1}}
\textbf{First step:} $\pi^{\kappa} \in L^{2}(0,T;\mc H ^{\gamma/2})$ $\mathbb{Q}$ almost surely. Recall that in this case ($\theta=0$) the system is speeded up in the sub-diffusive time scale $\Theta(N)=N^{\gamma}$. Let $\ve>0$ be a small real number. Let $F \in C_c^{0,\infty} (I\times[0,1]^{2})$, where the I is a  subinterval of $[0,T]$. By the  entropy and Jensen's inequality and Feynman-Kac's formula (see Lemma A.7.2 in \cite{KL}),  we have that  
\begin{equation}\label{eq:varfor}
\begin{split}
&{\mathbb E}_{\mu_N} \Big[\int_I  \;  N^{\gamma -1} \sum_{\substack{x,y\in \Lambda_N\\ |x-y|\geq \varepsilon N} } F_{t}( \tfrac{x}{N}, \tfrac{y}{N})p(y-x) (\eta_y(t\Theta(N))-\eta_x(t\Theta(N)))\Big]dt \\
&\leq C_{0} + \int_{I}\sup_{f} \Big\{ N^{\gamma-1}  \sum_{\substack{x,y\in \Lambda_N\\ |x-y|\geq \varepsilon N} } F_{t}( \tfrac{x}{N}, \tfrac{y}{N})p(y-x) \int(\eta_y-\eta_x)f(\eta)d\nu^{N}_{h}+  N^{\gamma-1} \left\langle  L_N   {\sqrt f} , {\sqrt f} \right\rangle_{\nu^{N}_{h}}  \Big\}dt
\end{split}
\end{equation}

 where the supremum is taken over all  densities $f$ on $\Omega_N$ with respect to $\nu_{h}^{N}$. 
Note that, by a change of variables, we have that
\begin{equation}
\label{eq:pat67}
\begin{split}
&N^{\gamma-1} \sum_{\substack{x,y \in \Lambda_N\\ |x-y| \ge \ve N}}  F_{t}( \tfrac{x}{N}, \tfrac{y}{N} ) p(y-x) \int(\eta_y -\eta_x) f(\eta) d\nu^{N}_{h} \\
=& N^{\gamma-1} \sum_{\substack{x,y \in \Lambda_N\\ |x-y| \ge \ve N}}  F_{t}^a( \tfrac{x}{N}, \tfrac{y}{N} ) p(y-x) \int (\eta_y -\eta_x) f(\eta) d\nu^{N}_{h}\\
=&N^{\gamma-1}\sum_{\substack{x,y \in \Lambda_N\\ |x-y| \ge \ve N}}  F_{t}^a( \tfrac{x}{N},\tfrac{y}{N} )  p(y-x) \int \eta_y \left( f(\eta) - f(\sigma^{x,y}\eta)\right) d\nu^{N}_{h}\\
+&N^{\gamma-1}\sum_{\substack{x,y \in \Lambda_N\\ |x-y| \ge \ve N}} F_{t}^a( \tfrac{x}{N}, \tfrac{y}{N} ) p(y-x) \int \eta_x f(\eta)\left(\theta^{x,y}(\eta) -1\right)d\nu^{N}_{h}
\end{split}
\end{equation}
where $\theta^{x,y}(\eta)=\tfrac{d\nu_{h}^{N}(\sigma^{x,y}\eta)}{d\nu_{h}^{N}(\eta)}$ and  $F^a$ is the antisymmetric part of $F$, i.e. for all $t\in I$ and $(u,v) \in [0,1]^2$ 
$$F_{t}^a (u,v) =\cfrac{1}{2} \Big[ F_{t}(u,v) -F_{t}(v,u) \Big].$$
Observe that $F_{t}^{a} (u,u)=0$. 
By  Young's inequality, the fact that $f$ is a density and $|\eta_y| \le 1$, we have that, for any $A>0$,  the third term in \eqref{eq:pat67} is bounded from above by a constant times
\begin{equation*}
\begin{split}
& N^{\gamma-1}A \sum_{\substack{x,y \in \Lambda_N\\ |x-y| \ge \ve N}} \Big( F_{t}^a \Big( \tfrac{x}{N}, \tfrac{y}{N} \Big)\Big)^2 p(y-x) +\frac{N^{\gamma-1}}{A} \sum_{\substack{x,y \in \Lambda_N\\ |x-y| \ge \ve N}} p(y-x)  I_{x,y}(\sqrt {f}, \nu^N_{h})\\
 \leq&\frac {c_\gamma A}{N^2}   \sum_{\substack{x,y \in \Lambda_N\\ |x-y| \ge \ve N}}  \cfrac{\Big( F_{t}^a \Big( \tfrac{x}{N}, \tfrac{y}{N} \Big)\Big)^2}{ | \tfrac{x}{N} -\tfrac{y}{N}|^{1+\gamma}} \; + \; \dfrac{2N^{\gamma-1}}{A}D^{0}_{N}(\sqrt {f}, \nu^N_{h}). 
 \end{split}
\end{equation*}
Since $h$ is Lipschitz we have that $\sup_{\eta \in \Omega_N} \, \left|\theta^{x,y}(\eta) -1\right| =  \mc{O} \left( \tfrac{|x -y|}{N}\right)$.
By Young's inequality and the fact that $f$ is a density, for any $A^{'}>0$, the last term in \eqref{eq:pat67} is bounded from above by 
\begin{equation*}
\begin{split}
& \frac{N^{\gamma-1}}{A^{'}}\sum_{\substack{x,y \in \Lambda_N\\ |x-y| \ge \ve N}}   \Big( F_{t}^a \Big( \tfrac{x}{N}, \tfrac{y}{N} \Big)\Big)^2 p(y-x)  \; + \; A^{'}N^{\gamma-1}  \sum_{\substack{x,y \in \Lambda_N\\ |x-y| \ge \ve N}}  p(y-x) \left(\tfrac{|x-y|}{N}\right)^{2}\\
&=\frac {c_\gamma}{A^{'}N^2}   \sum_{\substack{x,y \in \Lambda_N\\ |x-y| \ge \ve N}}  \cfrac{\Big( F_{t}^a \Big( \tfrac{x}{N}, \tfrac{y}{N} \Big)\Big)^2}{ | \tfrac{x}{N} -\tfrac{y}{N}|^{1+\gamma}}   \; + \;    \cfrac{A^{'} c_{\gamma}}{N^{2}}  \sum_{\substack{x,y \in \Lambda_N\\ |x-y| \ge \ve N}} \frac{1}{|\tfrac{x}{N}-\tfrac{y}{N}|^{\gamma -1}}.
 \end{split}
\end{equation*}
Recall \eqref{dir_est_lip}, so that by choosing $A=8$ and $B=1$ and using the two results above we have just proved that \eqref{eq:varfor} is bounded from above by $C_0$ plus 
\begin{equation*}
  \cfrac{c_{\gamma}(8+ \tfrac{1}{A^{'}})}{N^2} \sum_{x\ne y \in \Lambda_N}  \cfrac{\left[ F_{t}^a( \tfrac{x}{N},\tfrac{y}{N} ) \right]^2 }{| \tfrac{x}{N} -\tfrac{y}{N}|^{1+\gamma}}  + C(\kappa +1)+c_{\gamma} A^{'} A^{''},
\end{equation*}
where 
$$A^{''}:=\sup_{\ve >0} \sup_{N \ge 1} \cfrac{1}{N^2}  \sum_{\substack{x,y \in \Lambda_N\\ |x-y| \ge \ve N}} \frac{1}{|\tfrac{x}{N}-\tfrac{y}{N}|^{\gamma -1}} <\infty$$
since $\gamma<2$. Therefore, we have proved that there exist constants $A^{'''}$ and  $B^{'}$ (independent of $\ve > 0$, $N\geq 1$, and $F\in C_{c}^{\infty}(I\times[0,1]^{2})$) such that
\begin{equation}\label{eq:imp}
\begin{split}
&{\mathbb E}_{\mu_N} \bigg[ \int_I   N^{\gamma -1}  \sum_{\substack{x,y\in \Lambda_N\\ |x-y|\geq \varepsilon N} } F_{t}( \tfrac{x}{N}, \tfrac{y}{N})p(y-x) (\eta_y^{N}(t)-\eta_x^{N}(t))\,dt\bigg]\\
&={\mathbb E}_{\mu_N} \left[ \int_I -2c_{\gamma}\langle\pi^N_t,g^N_t\rangle\,dt\right] \\
&\le \int _I\cfrac{A^{'''} }{N^2} \sum_{\substack{x,y \in \Lambda_N\\ |x-y| \ge \ve N}}  \cfrac{c_\gamma \left( F_{t}^a ( \tfrac{x}{N}, \tfrac{y}{N} ) \right)^2}{| \tfrac{x}{N} -\tfrac{y}{N}|^{1+\gamma}}  dt+ B^{'}| I|(\kappa  +1).
\end{split}
\end{equation}
Above the function $g^{N}$ is defined on $ I\times[0,1]$ by
$$g_t^{N} (u) = \cfrac{1}{N} \sum_{y \in \Lambda_N} \, \textbf{1}_{ \big|\tfrac{y}{N}- u \big| \ge \ve }\cfrac{F_{t}^a \big(u, \tfrac{y}{N}\big)}{\vert u -\tfrac{y}{N}\vert^{1+\gamma}}$$
and it is a discretization of the smooth function  $g$ defined on $(t,u) \in I\times[0,1]$  by
$$ \quad g_{t}(u) = \int_0^1\textbf{1}_{\{ |v-u| \ge \ve\}} \cfrac{F_{t}^a (u,v)}{|u-v|^{1+\gamma}} \, dv.$$
Let $Q_{\ve}=\{(u,v)\in [0,1]^2\; ; \; |u-v| \ge \ve\}$. Observe first that for symmetry reasons we have that for any integrable function $\pi$, 
$$\int_0^1 \pi (u) g_{t}(u) du =   \iint_{Q_\ve}\cfrac{(\pi (v) -\pi (u)) F_{t}^{a} (u,v) }{|u-v|^{1+\gamma}}\; du dv.$$

 By taking the limit as $N \to \infty$ in \eqref{eq:imp}, we conclude that there exist constants $C>0$ independent of $F \in C_c^{0,\infty} (I\times[0,1]^2)$ and $\ve>0$ such that
\begin{equation*}
\begin{split}
&{\bb E}_{\mathbb{Q}} \left[  \int _{I}\iint_{Q_\ve}\cfrac{(\pi_{t}^{\kappa}  (v) -\pi_{t} ^{\kappa} (u)) F_{t}^{a} (u,v) }{|u-v|^{1+\gamma}}\; - C\cfrac{ \big( F^{a}_{t} (u,v) \big)^2 }{|u-v|^{1+\gamma}} \; du dv dt  \right] \ls |I|(\kappa +1).
\end{split}
\end{equation*}
From Lemma 7.5 in \cite{KLO2} we can insert the  supremum over $F$ inside the expectation above, so that
\begin{equation*}
\begin{split}
&{\bb E}_{\mathbb{Q}}  \left[ \sup_F \left\{ \int _{I}  \iint_{Q_{\ve} }\cfrac{(\pi_{t}^{\kappa}  (v) -\pi_{t} ^{\kappa} (u)) F_{t}^{a} (u,v) }{|u-v|^{1+\gamma}}\;
 -  C \cfrac{ \big( F_{t}^{a} (u,v) \big)^2 }{|u-v|^{1+\gamma}} \; du dv dt    \right\} \right]\ls  |I|(\kappa +1).
\end{split}
\end{equation*}
Since the function $(u,v) \in [0,1]^2 \to \pi(v) -\pi (u)$ is antisymmetric we may replace $F^a$ by $F$ in the previous variational formula, i.e.
\begin{equation}
\label{eq:A3}
\begin{split}
&{\bb E}_{\mathbb{Q}} \left[ \sup_F \left\{ \int _{I}  \iint_{Q_{\ve} }\cfrac{(\pi_{t}^\kappa (v) -\pi_{t}^{\kappa} (u)) F _{t}(u,v) }{|u-v|^{1+\gamma}}\;  - C \cfrac{ \big( F _{t}(u,v) \big)^2 }{|u-v|^{1+\gamma}} \; du dv dt    \right\} \right] \ls|I|(\kappa+1).
\end{split}
\end{equation}

Consider the Hilbert space ${\mathbb L}^2 ([0,1]^2, d\mu_{\ve})$ where $\mu_{\ve}$ is the measure whose density with respect to Lebesgue measure is given by
$ (u,v) \in [0,1]^2 \to {\bb 1}_{|u-v| \ge \ve} \, |u-v|^{-(1+\gamma)}.$
By taking 
$$\Pi^{\kappa}: (t;u,v) \in I \times[0,1]^2 \to \pi_{t}^{\kappa}(v) -\pi_{t} ^{\kappa}(u),$$ 
the previous formula implies that 
\begin{equation}
\label{1S}
\EE_{\mathbb{Q}}  \left[ \int _{I}\iint_{[0,1]^2} \left(\Pi^{\kappa} _{t}(u,v)\right)^2 \, d\mu_{\ve} (u,v)dt \right] \ls |I|(\kappa+1).
\end{equation}
Letting $\ve \to 0$, by the monotone convergence theorem, we conclude that
\begin{equation*}
 \int_{I}\iint_{[0,1]^2}\cfrac{(\pi_{t}^{\kappa} (v) -\pi_{t}^{\kappa} (u))^2}{|u-v|^{1+\gamma}}\; du dvdt <\infty
  \end{equation*}
 $\mathbb{Q}$ almost surely. 
 
 \vspace{.5cm}
 
\noindent\textbf{Second step:} $\displaystyle\int_I\int_{0}^{1}\left\lbrace \dfrac{(\alpha-\pi_{t}^{\kappa}(u))^{2}}{u^{\gamma}} + \dfrac{(\beta-\pi_{t}^{\kappa}(u))^{2}}{(1-u)^{\gamma}}\right\rbrace du\;dt <\infty$ $\mathbb{Q}$ almost surely. Now we have to prove that the function $ (t,u)\to \pi_{t}^{\kappa} (u) -\alpha$ is in the space $L^{2}(I \times (0,1),dt\otimes d\mu)$, where $ \mu$ is the measure whose density with respect to the Lebesgue measure is given by 
$$u \in (0,1) \to \dfrac{1}{u^{\gamma}}.$$ 
 A similar argument {shows} that the function  $(t,u) \to \pi_{t}^{\kappa}(u) -\beta$ belongs to $L^{2}([0,T]\times (0,1), dt \otimes d\mu')$, where $\mu'$ is the measure whose density with respect to the Lebesgue measure is given by
$$u\in[0,1]\rightarrow\frac{1}{(1-u)^{\gamma}}.$$

Let $\nu _{h}^{N}$ be the Bernoulli product measure corresponding to a profile $h$ which is Lipschitz such that $h(0)= \alpha \leq h(u)\leq \beta =h(1)$ for all $u\in [0,1]$. Let $G\in C_{c}^{\infty}(I\times[0,1])$. Using the entropy and Jensen's inequalities and the Feynman-Kac's formula we get that

\begin{equation}\label{eq:varfor2}
\begin{split}
&{\mathbb E}_{\mu_N} \left[ \int_I  \;  N^{\gamma -1} \sum_{x\in \Lambda_N}  G_{t} r_N^-\Big(\tfrac{x}{N}\Big) (\eta_x(t\Theta(N))-\alpha)\right]dt\\
&\le C_0 +\int_I \sup_{f} \left\{ N^{\gamma-1}  \sum_{x \in \Lambda_N}  (G_{t}r_N^-)\Big(\tfrac{x}{N}\Big) \langle \eta_{x}-\alpha,f\rangle_{\nu^{N}_{h}}+ \Theta(N) N^{-1} \left\langle  L_N   {\sqrt f} , {\sqrt f} \right\rangle_{\nu^{N}_{h}}  \right\}dt,
\end{split}
\end{equation}
where the supremun is  taken over all the densities $f$ on $\Omega_{N}$ with respect to $\nu _{h}^{N}$. 
Using (\ref{dir_est_lip}) with $B=1$ we can bound from above the second term on the right hand side of  (\ref{eq:varfor2}) by 

$$-\dfrac{\Theta(N)}{4N}D_{N}(\sqrt{f},\nu _{h}^{N}) +C \Theta(N)N^{-\gamma}(\kappa N^{-\theta} +1),$$
and from   \ref{bound} with $A_{x} = \tfrac{G_{t}\left(\tfrac{x}{N}\right)}{\kappa}$
the term on the right side of (\ref{eq:varfor2}) is bounded from above by 
 \begin{eqnarray*}
\dfrac{C N^{\gamma-1}}{\kappa}\sum_{x\in\Lambda_N}r_N^-\Big(\tfrac{x}{N}\Big)\left( G_{t}\Big(\tfrac{x}{N}\Big)\right)^{2}+C(\kappa +1).
\end{eqnarray*}
Taking  $N \to \infty$ we can conclude that there exists a constant $C'>0$ independent of $G$ and of $t$ such that
\begin{equation*}
{\bb E}_{\mathbb{Q}} \left[  \int_I \int_0^1\left(\cfrac{(\pi_t^{\kappa}(u) -\alpha) G _{t}(u) }{|u|^{\gamma}}\;  \; -\;   \dfrac{C'}{\kappa} \cfrac{G^2_{t}(u) }{|u|^{\gamma}} \;\right) du dt  \right] \ls\vert I \vert(\kappa +1).
\end{equation*}
From Lemma 7.5 in \cite{KLO2} we can insert the  supremum over $G$ inside the expectation above, and we get
\begin{equation}
\label{2S}
{\bb E}_{\mathbb{Q}} \left[ \sup_G \left\{  \int_I \int_0^1\left(\cfrac{(\pi_t^{\kappa}(u) -\alpha) G _{t}(u) }{|u|^{\gamma}}\; \; -\;   \dfrac{C'}{\kappa} \cfrac{G^2_{t}(u) }{|u|^{\gamma}} \; \right) du dt\right\}  \right] \ls \vert I \vert (\kappa +1) .
\end{equation}
The previous formula implies that 
$$ \int_I\int_0^1 \frac{ (\pi_t^{\kappa}(u)-\alpha)^2}{|u|^\gamma}\, du dt < \infty$$
$\mathbb{Q}$ almost surely. Similarly, we get 
$$ \int_I\int_0^1 \frac{ (\pi_t^{\kappa}(u)-\beta)^2}{|u|^\gamma}\, du dt < \infty$$
$\mathbb{Q}$ almost surely.

\vspace{.5cm}

\noindent\textbf{Final step.}  By Definition \ref{Def. Dirichlet Condition}, the two steps above allow us to show that $\mathbb{Q}$ is concentrated on trajectories  of measures whose density is a weak solution of the corresponding hydrodynamic equation (see  Proposition \ref{prop:weak_sol_car}). By uniqueness of the weak solution (see Lemma \ref{lem:uniquess}) we get that $\mathbb{Q}$ is unique. Indeed, we have that $\mathbb{Q}= \delta_{\{\rho_{t}^{\kappa}(u)du\}}$ (Dirac mass). Then, by using the latter, we compute the expectation in \eqref{1S} and \eqref{2S} and we are done.
  
\begin{flushright}
$\qed$
\end{flushright}
\subsection{Characterization of  limit points}
\label{subsec:Characterization}
In the present subsection we characterize all limit points $\mathbb{Q}$ of the sequence $\lbrace\mathbb{Q} _{N} \rbrace _{N\geq 1}$, { which we know that exist from the results of Subsection \ref{subsec:Tightness}}. Let us assume without lost of generality, that  $\lbrace\mathbb{Q} _{N} \rbrace _{N\geq 1}$ converges to $\mathbb{Q}$. Since there is at most one particle per site, it is easy to show that $\mathbb{Q}$ is concentrated on trajectories {of measures} absolutely continuous with respect to the Lebesgue measure, i.e. $\pi_{t}^{\kappa}(du)=\rho_{t}^{\kappa}(u)du$ (for details see \cite{KL}). In Proposition \ref{prop:weak_sol_car} below we prove, for each range of $\theta$, that  $\mathbb{Q}$ is concentrated on trajectories of measures whose density 
satisfies a weak form of the corresponding hydrodynamic equation. Moreover, we have seen in Theorem \ref{Energy_Thm1} that $\mathbb{Q}$ is concentrated on trajectories of measures whose density satisfies the energy estimate, i.e. $\rho^{\kappa}\in L^{2}(0,T;\mc H^{\gamma/2})$ and
$$\int^{T}_{0}\int_{0}^{1} \left\{ \dfrac{(\alpha - \rho_{t}^{\kappa}(u))^{2}}{u^{\gamma}}+\dfrac{(\beta - \rho_{t}^{\kappa}(u))^{2}}{(1-u)^{\gamma}}\right\}dudt<\infty.$$
Since a weak solution of the hydrodynamic equation \eqref{eq:Dirichlet Equation} is unique we have that $\mathbb{Q}$ is unique and takes the form of a Dirac mass.
\begin{prop}
\label{prop:weak_sol_car}
If $\bb Q$ is a limit point of $ \{\bb Q_{N}\}_{N\geq 1}$  then 
\begin{enumerate}[1.]
\item if $\theta < 0$:
\end{enumerate}$$\bb Q\left( {\pi _{\cdot}}: F_{Reac}(t, \rho^{\kappa},G,g)= 0, \forall t\in [0,T],\, \forall G \in C_c^{1,2} ([0,T]\times[0,1])\,\right)=1.
$$
\begin{enumerate}[2.]
\item if $\theta = 0$:
\end{enumerate}
$$\bb Q\left( {\pi _{\cdot}}: F_{Dir}(t, \rho^{\kappa},G,g)= 0, \forall t\in [0,T],\, \forall G \in C_c^{1,2} ([0,T]\times[0,1])\,\right)=1.
$$
\end{prop}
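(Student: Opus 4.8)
The plan is to fix a limit point $\mathbb Q$ of $\{\mathbb Q_N\}_{N\ge1}$ (passing to a subsequence, assume $\mathbb Q_N\to\mathbb Q$), fix $t\in[0,T]$ and a test function $G$, and prove that the relevant weak‑formulation functional — $F_{Dir}(t,\rho^\kappa,G,g)$ from \eqref{eq:Dirichlet Equation} if $\theta=0$, and $F_{Reac}(t,\rho^\kappa,G,g)$ from \eqref{eq:Dirichlet Equation_infty} if $\theta<0$ — vanishes $\mathbb Q$‑almost surely. An argument over a countable family of pairs $(t,G)$, together with right‑continuity in $t$ and density of test functions, upgrades this to the full‑measure statement. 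Since there is at most one particle per site we already know that $\mathbb Q$ is concentrated on trajectories of the form $\pi_t(du)=\rho^\kappa_t(u)\,du$, so the point is only to identify the equation solved by $\rho^\kappa$. The strategy is the classical one: start from Dynkin's martingale formula, show the martingale term is negligible, identify the limit of the drift using \eqref{F_convergencer} and \eqref{F_convergenceLL}, and conclude by a continuity/Portmanteau argument.

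\emph{Reduction via Dynkin's formula.} For $G\in C^{1,2}_c([0,T]\times(0,1))$, say with $\mathrm{supp}\,G_s\subset[a,1-a]$ for all $s$, the time‑dependent Dynkin formula gives that
$$M^N_t(G)=\langle\pi^N_t,G_t\rangle-\langle\pi^N_0,G_0\rangle-\int_0^t\big(\partial_s+\Theta(N)L_N\big)\langle\pi^N_s,G_s\rangle\,ds$$
is a martingale. By Doob's inequality and the quadratic‑variation bound of order $\mc O(N^{\gamma-2})$ established in the proof of Proposition \ref{Tightness}, $\mathbb E_{\mu_N}\big[\sup_{0\le t\le T}(M^N_t(G))^2\big]\to0$, so the martingale term may be dropped. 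The drift $\Theta(N)L_N\langle\pi^N_s,G_s\rangle$ is given by \eqref{gen_action}; as in the power‑counting of Subsection \ref{sec:CL}, its bulk part equals $N^\theta\langle\pi^N_s,N^\gamma\mc L_NG_s\rangle$, while its reservoir part, since $G_s$ is supported on the compact set $[a,1-a]$ on which $N^\gamma r^\pm_N\to r^\pm$ uniformly by \eqref{F_convergencer}, converges (as $N\to\infty$) to $\kappa\int_0^1(\alpha-\rho^\kappa_s(u))G_s(u)r^-(u)\,du+\kappa\int_0^1(\beta-\rho^\kappa_s(u))G_s(u)r^+(u)\,du$, which by $\alpha r^-+\beta r^+=V_0$ and $r^-+r^+=V_1$ equals $\kappa\langle G_s,V_0\rangle-\kappa\langle\rho^\kappa_sG_s,V_1\rangle$.

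\emph{The two regimes.} If $\theta<0$ then $N^\theta\to0$ while $\langle\pi^N_s,N^\gamma\mc L_NG_s\rangle$ stays bounded uniformly in $N$ (by \eqref{F_convergenceLL} together with the boundary estimate below), so the bulk part is negligible and the limiting drift is the reservoir contribution alone; together with $\langle\pi^N_0,G_0\rangle\to\langle g,G_0\rangle$ (the hypothesis that $\mu_N$ is associated to $g$) and $\int_0^t\langle\pi^N_s,\partial_sG_s\rangle\,ds\to\int_0^t\langle\rho^\kappa_s,\partial_sG_s\rangle\,ds$, Dynkin's formula passes in the limit into $F_{Reac}(t,\rho^\kappa,G,g)=0$ (with parameter $\hat\kappa=\kappa$). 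If $\theta=0$ then $N^\theta=1$ and one must in addition pass to the limit in $\langle\pi^N_s,N^\gamma\mc L_NG_s\rangle$: split the sum over $x\in\Lambda_N$ into the sites with $x/N\in[a',1-a']$, for which \eqref{F_convergenceLL} gives the uniform convergence $N^\gamma\mc L_NG_s\to\LL G_s$ so that this part tends to $\langle\rho^\kappa_s,\LL G_s\rangle$, and the remaining $\mc O(a'N)$ sites near $\{0,1\}$; on those sites $G_s$ vanishes, hence $|\mc L_NG_s(x/N)|=\big|\sum_y p(y-x)G_s(y/N)\big|\lesssim N^{-\gamma}$, so their contribution is $\lesssim a'$ uniformly in $N$ and disappears upon letting $a'\to0$. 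The limiting drift is then $\langle\rho^\kappa_s,\LL G_s\rangle-\kappa\langle\rho^\kappa_sG_s,V_1\rangle+\kappa\langle G_s,V_0\rangle=\langle\rho^\kappa_s,\bb L_\kappa G_s\rangle+\kappa\langle G_s,V_0\rangle$, and Dynkin's formula becomes $F_{Dir}(t,\rho^\kappa,G,g)=0$.

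\emph{Passage to the limit and the main difficulty.} To make this rigorous at the level of the measures, fix $\varepsilon>0$ and, after performing the uniform replacements of $\mc L_N$ by $\LL$ and of $N^\gamma r^\pm_N$ by $r^\pm$ on $\mathrm{supp}\,G$, write the left‑hand side of the Dynkin identity as $\Phi_{t,G}(\pi^N_\cdot)+\mathrm{err}_N$, where $\Phi_{t,G}$ is bounded and continuous on $\mathcal D([0,T],\mathcal M^+)$ at every trajectory continuous at $t$ (all but countably many $t$ are $\mathbb Q$‑a.s. of this kind, and one extends afterwards by right‑continuity and the continuity of the time integrals) and $\mathrm{err}_N\to0$ in $L^1(\mathbb P_{\mu_N})$ up to the $\mc O(a')$ boundary error; since $\mathbb E_{\mu_N}\big[|\Phi_{t,G}(\pi^N_\cdot)+\mathrm{err}_N-M^N_t(G)|\big]\to0$ and $\mathbb E_{\mu_N}[(M^N_t(G))^2]\to0$, the Portmanteau theorem gives $\mathbb E_{\mathbb Q}[|\Phi_{t,G}|]\lesssim a'$, and letting $a'\to0$ yields $\Phi_{t,G}=0$ $\mathbb Q$‑a.s., i.e. the asserted vanishing of $F_{Dir}$ (resp. $F_{Reac}$). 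I expect the only genuine obstacle to be the treatment of the non‑local operator $\mc L_N$ near the boundary in the case $\theta=0$: $\mc L_NG$ does not vanish at the endpoints although $G$ does, and \eqref{F_convergenceLL} is uniform only on compact subsets of $(0,1)$; the boundary‑layer bound $|\mc L_NG(x/N)|\lesssim N^{-\gamma}$ combined with the count $\mc O(a'N)$ of such sites — which uses $\gamma>1$ — is exactly what closes the gap. Together with the energy estimates of Theorem \ref{Energy_Thm1} (supplying items i)–ii) of the relevant definitions) and the uniqueness Lemma \ref{lem:uniquess}, this forces $\mathbb Q$ to be the Dirac mass at $t\mapsto\rho^\kappa_t(u)\,du$.
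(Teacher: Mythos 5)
Your proposal is correct and follows essentially the same route as the paper: Dynkin's formula, the quadratic-variation bound $\mc{O}(N^{\gamma-2})$ with Doob's inequality to kill the martingale, the replacement of $N^{\gamma}\mc L_N G$ and $N^{\gamma}r_N^{\pm}$ by $\LL G$ and $r^{\pm}$ via \eqref{F_convergenceLL}--\eqref{F_convergencer}, and a Portmanteau/continuous-mapping step (the paper packages this as openness of the event $\{\sup_t|F_\theta|>\delta\}$ in the Skorohod topology via Proposition A.3 of \cite{FGN}, where you use continuity points of the trajectory plus a countable dense set of times — an equivalent device). Your explicit boundary-layer estimate $|\mc L_N G_s(\tfrac{x}{N})|\lesssim N^{-\gamma}$ for the $\mc O(a'N)$ sites outside $[a',1-a']$ is a welcome refinement of the step where the paper simply invokes \eqref{F_convergenceLL} to dispose of \eqref{DC1}, since the sum there runs over all of $\Lambda_N$ while the cited convergence is only uniform on compacts of $(0,1)$.
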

\begin{proof}
Note that in order to prove the proposition, it is enough to verify, for $\delta > 0$ and $ G$ in the corresponding space of test functions,  that  
\begin{eqnarray} \nonumber
&&\bb Q\left(\pi _{\cdot}\in \mc D_{\mc M^+}^{T}: \sup_{0\le t \le T} \left\vert F_{\theta}(t,\rho^{\kappa},G,g) \right\vert>\delta\right)=0,
\end{eqnarray}
for each $\theta$, where $F_\theta$ stands for $F_{Reac}$ if $\theta<0$ and $F_{Dir}$ if $\theta=0$ . Indeed, we have that
\begin{equation}
\label{def_F_theta}
\begin{split}
F_{\theta}(t, \rho^{\kappa},G,g)=&\left\langle \rho^{\kappa}_{t},  G_{t} \right\rangle -\left\langle g,   G_{0}\right\rangle - \int_0^t\left\langle \rho^{\kappa}_{s},\Big(\partial_s + \mathbb{1}_{\{\theta = 0\}}\bb L \Big) G_{s}  \right\rangle ds \\
+&\mathbb{1}_{\{\theta\leq 0\}}\kappa \int^{t}_{0} \left\langle \rho_{s}^{\kappa}, G_s \right\rangle_{ V_1 } ds 
-\mathbb{1}_{\{\theta\leq 0\}}\kappa \int^{t}_{0}\left\langle G_s , V_0 \right\rangle\,ds=0.
\end{split}   
\end{equation}

 From here on, in order to simplify notation, we will erase $\pi_\cdot$ from the sets that we have to look at. 

\vspace{0.5cm}
 By definition of $F_{\theta}$ above  we can bound from above the previous probability by the sum of
\begin{equation}\label{RD6}
\bb Q\left(  \sup_{0\le t \le T} \left|F_{\theta}(t,\rho^{\kappa},G,\rho_{0}) \right|>\dfrac{\delta}{2}\right)
\end{equation}
and
\begin{equation*}
\bb Q \left(  \left| \left\langle \rho_{0}-g, G_{0}\right\rangle \right|>\dfrac{\delta}{2}\right).
\end{equation*}
We note that {last probability} is equal to zero since $\mathbb Q$ is a limit point of $\{\mathbb Q_N\}_{N\geq 1}$ and $\mathbb Q_N$ is induced by $\mu_N$ which is associated to $g$. {Now we deal with} \eqref{RD6}. Since for $\theta\leq 0$ the function $G_s$ has compact support included in $(0,1)$ the singularities of $V_0$ and $V_1$ are not present, thus  from Proposition A.3 of \cite{FGN}, the set inside the probability in \eqref{RD6} is an open set in the Skorohod topology. Therefore, from Portmanteau's Theorem we bound \eqref{RD6} from above by
\begin{equation*}
\liminf_{N\to\infty}\,\bb Q_{N}\left( \sup_{0\le t \le T} \left|F_{\theta}(t,\rho^{\kappa},G,\rho_{0}) \right|>\dfrac{\delta}{2}\right).
\end{equation*}
Summing and subtracting $\displaystyle\int_{0}^{t} \Theta(N) L_{N}\langle \pi_{s}^{N},G_{s}\rangle ds$ to the term inside the previous absolute {value}, recalling \eqref{Dynkin'sFormula} and  the definition of $\mathbb Q_N$,  we can bound the previous probability   from above by the sum of the next two terms 
\begin{equation*}
 \bb P_{\mu_{N}} \left(\sup_{0\le t \le T} \left\vert M_{t}^{N}(G) \right\vert>\dfrac{\delta}{4}\right)
\end{equation*}
and
\begin{equation}
\label{CLP2}
\begin{split}
&\bb P_{\mu_{N}}  \left( \sup_{0\le t \le T} \left| \int_{0}^{t} \Theta(N) L_{N}\langle \pi_{s}^{N},G_{s}\rangle ds -\int_0^t\left\langle \pi_{s}^{N},\mathbb{1}_{\{\theta = 0\}}\LL  G_{s} \right\rangle \,ds \right.\right.\\
& +\mathbb{1}_{\{\theta\leq 0\}}{ \kappa}\int^{t}_{0}\left\langle  \rho_s,G_{s} \right\rangle_{V_{1}}\,ds
\left.\left.- \mathbb{1}_{\{\theta\leq 0\}}{ \kappa}\int^{t}_{0}\left\langle  G_{s},V_0 \right\rangle\,ds  \right|>\dfrac{\delta}{4}\right). 
\end{split}
\end{equation}
By Doob's inequality we have that
\begin{equation*}
\begin{split}
&\bb P_{\mu_{N}} \left(\sup_{0\le t \le T} \left\vert M_{t}^{N}(G) \right\vert>\dfrac{\delta}{4}\right) 
\ls   \dfrac{1}{\delta^{2}} \bb E_{\mu _{N}} \left[\int_{0}^{T}\Theta(N)\left[ L_{N} \langle\pi^{N}_{s},G \rangle^{2}- 2\langle\pi^{N}_{s},G \rangle L_{N} \langle\pi^{N}_{s},G \rangle\right]ds \right].
\end{split}
\end{equation*}
In the proof of Proposition \ref{Tightness} we have proved that the term inside the time integral in the previous expression is $\mc{O} (N^{\gamma-2})$. Then, using the fact that $\gamma<2$ we have that last probability vanishes as $N\to\infty$. It remains to prove that (\ref{CLP2}) vanishes as $N\to\infty$. For that purpose, we recall (\ref{gen_action})  and we bound (\ref{CLP2}) from above by the sum of the following terms
\begin{equation}
\label{DC1}
\bb P_{\mu_{N}}  \left(\sup_{0\le t \le T} \left| \int_{0}^{t}\cfrac{\Theta(N)}{N-1} \sum_{x\in \Lambda_N}\mathcal{L}_NG_{s}(\tfrac{x}{N})\eta_x^{N}(s) ds-  \int_{0}^{t}\left\langle \pi_{s}^{N},\mathbb{1}_{\{\theta= 0\}}\LL G_{s} \right\rangle   \, ds \right|>\dfrac{\delta}{2^{4}}\right),
\end{equation}
\begin{multline}
\label{DC2}
\bb P_{\mu_{N}}\left(\sup_{0\le t \le T} \left|   \int_{0}^{t} \left\{ \dfrac{ \kappa \Theta(N)}{N^{\theta}(N-1)} \sum_{x \in \Lambda_N}  (G_{s} r_{N}^{-})(\tfrac{x}{N}){(\alpha-\eta_x^{N}(s))}  \right.\right.\right.\\
  \left.\left. \left.- \mathbb{1}_{\{\theta\leq 0\}} \kappa \int_{0}^{1}  (G_{s}r^{-})(u)(\alpha - \rho_{s}^{\kappa}(u))du \right\}\, ds\right | > \dfrac{\delta}{2^{4}}\right)
\end{multline}
and
\begin{multline}
\label{DC3}
\bb P_{\mu_{N}}\left(\sup_{0\le t \le T} \left|   \int_{0}^{t} \left\{ \dfrac{ \kappa \Theta(N)}{N^{\theta}(N-1)} \sum_{x \in \Lambda_N}  (G_{s} r_{N}^{+})(\tfrac{x}{N}){(\beta-\eta_x^{N}(s))} \right.\right.\right.\\
  \left. \left.\left.- \mathbb{1}_{\{\theta\leq 0\}}\kappa \int_{0}^{1}  (G_{s}r^{+})(u)(\beta - \rho_{s}^{\kappa}(u))du \right\}\, ds\right | > \dfrac{\delta}{2^{4}}\right).
\end{multline}
For $\theta=0$   from  \eqref{F_convergenceLL}  we have that (\ref{DC1}) goes to $0$ as $N\to \infty$. For $\theta\leq 0$ we have that from  \eqref{F_convergenceLL} and \ref{F_convergencer} the boundary terms   (\ref{DC2}) and (\ref{DC3}) go to $0$ as $N\to \infty$. This finishes the proof Proposition \ref{prop:weak_sol_car}. 
\end{proof}

\section{Proof of Theorem \ref{convergence_rho^k_to_rho^0}} 
\label{sec: Study of solution}

For easy understanding of the proof of items i) and ii) of Theorem \ref{convergence_rho^k_to_rho^0}, we first {establish} some notation and prove some lemmata.

Recall the function $\bar\rho^{\infty}$ introduced  in Remark \ref{rem:Explicit_sol_rho_infty} which can be {rewritten} as 
$$\bar\rho^{\infty}(u)=\dfrac{ \beta u^{\gamma} + \alpha(1-u)^\gamma}{u^{\gamma}+(1-u)^{\gamma}}. $$
It is easy to see that $\bar\rho^{\infty}(0)=\alpha$ and $\bar\rho^{\infty}(1)=\beta$. {Moreover, it is not difficult to see that $\bar\rho^{\infty} \in C^{1}([0,1])$ and that
$$\lim _{u\to 0} (\bar\rho^{\infty}(u))^{\prime}u^{2-\gamma} = \lim _{u\to 1} (\bar\rho^{\infty}(u))^{\prime}(1-u)^{2-\gamma} = 0,$$ and from Lemma 7.2 of \cite{GM} we conclude that
\begin{equation}\label{bar_rho^infty_in_H}
\Vert \bar\rho^{\infty} \Vert_{\gamma/2}<\infty.
\end{equation}}

By the fractional Hardy's inequality (see e.g. \cite{Dy}) and the fact that $V_{1}(\tfrac{1}{2})\leq V_{1}(u)$ for all $u\in (0,1)$ we know that 
\begin{equation} \label{norms_related}
\Vert g \Vert \ls \Vert g \Vert _{V_{1}}\ls \Vert g\Vert_{\gamma/2}
\end{equation}
for any $g\in \mc H^{\gamma/2}_{0}$. 

In order to prove items i) and ii) of Theorem \ref{convergence_rho^k_to_rho^0} we first guarantee the existence of weak solutions of equation (\ref{eq:Dirichlet Equation}) with $\kappa=0$ and \eqref{eq:Dirichlet Equation_infty}, (see Lemma \ref{lem1_existence_rho^0} and \ref{lem1_existence_rho^infty} below), then we establish the convergence in $L^{2}(0,T;L^{2})$ (see Lemma \ref{lem2_rho^kappa_to_rho^0_L^2} and \ref{lem2_rho^kappa_to_rho^infty_L^2}) which will allow us to conclude.
 
\begin{lem} \label{lem1_existence_rho^0}
Let $\rho_{0}:[0,1]\to[0,1]$ be a measurable function. Then, there exists a weak solution of (\ref{eq:Dirichlet Equation})  with $\hat\kappa =0$ and  initial condition $\rho_{0}$.
\begin{proof}
The strategy of the proof is to construct the solution as  the limit of $\rho^{\kappa}$, {as $k\to 0$, where $\rho^k$ is the} weak solution of (\ref{eq:Dirichlet Equation}) with initial condition $\rho_{0}$ and $\hat\kappa = \kappa$.

By item i) in Theorem \ref{Energy_Thm1} and since $\kappa>0$ we know  that  
\begin{equation}\label{EE00}
\int_{I}\Vert \rho_{t}^{\kappa} \Vert_{\gamma/2}^{2}dt \lesssim |I|(\kappa +1) 
\end{equation}
for any interval $I\subset[0,T]$. We define 
\begin{equation}\label{dvf}
\forall t \in [0,T], \quad \forall u \in [0,1], \quad \vf^{\kappa}_{t}(u):= \rho^{\kappa}_{t}(u) - \bar\rho^{\infty}(u).
\end{equation}
Since we are interested in small values of $\kappa$, say $\kappa \leq 1$, from \eqref{EE00}, (\ref{bar_rho^infty_in_H}) and the fact $(a +b)^{2}\leq 2a^{2}+2b^{2}$, it is not difficult to see that
\begin{equation}\label{EE}
\int_{I}\Vert \vf_{t}^{\kappa} \Vert_{\gamma/2}^{2}dt \lesssim |I|,
\end{equation}
thus we have that $ \vf^{\kappa}\in L^{2}(0,T;\mc H ^{\gamma/2}_{0}).$
It is also easy to see that $\vf^{\kappa}$ satisfies
\begin{multline}\label{vf^k}
\langle \vf_{t}^{\kappa} , G_{t} \rangle  - \langle \vf_{0}, G_{0} \rangle 
- \int_0^t\left\langle \vf_{s}^{\kappa} ,\left(\LL  + \partial_s\right) G_{s} \right\rangle \, ds  + {\kappa}\int_0^t\langle \vf_{s}^{\kappa},G_s\rangle_{V_{1}} ds-\int_0^t\langle \bar\rho^{\infty},\LL G_s\rangle ds=0   
\end{multline}
for all $t\in [0,T]$, for any function $G \in C^{1,\infty}_{c}([0,T]\times(0,1))$ and where $\vf_{0}(u)= \rho_{0}(u) - \bar\rho^{\infty}(u)$. { From \eqref{EE} we conclude that} there exists a subsequence of $(\vf^{\kappa})_{\kappa\in (0,1)}$ converging weakly to some element $\vf^{0}\in L^{2}(0,T;\mc H^{\gamma/2}_{0})$ as $\kappa\to 0$.  We claim that $\rho^{0}:= \bar\rho^{\infty} + \vf^{0}$ is the desired solution. Indeed, first note that since the norm $\Vert\cdot \Vert_{\gamma/2}$ is weakly lower-semicontinuous we have that 
\begin{equation}\label{EE0}
\int_{I}\Vert \vf_{t}^{0} \Vert_{\gamma/2}^{2}dt \lesssim |I|.
\end{equation}
 By using $(a +b)^{2}\leq 2a^{2}+2b^{2}$ we have that 
\begin{equation*}
\begin{split}
\int_{I}\Vert \rho^{0}_{t} \Vert_{\gamma/2}^{2}dt &\leq 2\int_{I}\Vert \bar\rho^\infty \Vert_{\gamma/2}^{2}dt +2\int_{I}\Vert \vf^{0}_{t} \Vert_{\gamma/2}^{2}dt \lesssim|I|.
\end{split}
\end{equation*}
Taking $I = [0,T]$, we have  that $\rho^{0}$ satisfies item i) of Definition \ref{Def. Dirichlet Condition}. Since $\vf^{0}\in L^{2}(0,T;\mc H^{\gamma/2}_{0})$, it is easy to see that $\rho^{0}_{t}(0)= \bar\rho^{\infty}(0)=\alpha$ and $\rho^{0}_{t}(1)=\bar \rho^{\infty}(1)=\beta$ for almost every $t \in [0,T]$. Then, item ii) for $\hat\kappa=0$ in Definition \ref{Def. Dirichlet Condition} is satisfied. In order to verify that $\rho^{0}$ satisfies item iii) in Definition \ref{Def. Dirichlet Condition} we first integrate (\ref{vf^k}) over $[0,t]$. Thus we have that
\begin{equation*}
\begin{split}
&\int_{0}^{t}\langle \vf_{s}^{\kappa} , G_{s} \rangle ds  - t\langle \vf_{0}, G_{0} \rangle 
- \int_{0}^{t}\int_0^s\left\langle \vf_{r}^{\kappa} ,\left(\LL  + \partial_r\right) G_{r} \right\rangle \, drds\\  +& {\kappa}\int_{0}^{t}\int_0^s\langle \vf_{r}^{\kappa},G_r\rangle_{V_{1}} drds-\int_{0}^{t}\int_0^s\langle \bar\rho^{\infty},\LL G_r\rangle drds=0   
\end{split}
\end{equation*}
for any function $G \in C^{1,\infty}_{c}([0,T]\times(0,1)).$
Taking $\kappa \to 0$, by weak convergence and Lebesgue's dominated convergence theorem  we get from the previous equality that 
\begin{multline*}
\int_{0}^{t}\langle \vf_{s}^{0} , G_{s} \rangle ds  - t\langle \vf_{0}, G_{0} \rangle 
- \int_{0}^{t}\int_0^s\left\langle \vf_{r}^{0} ,\left(\LL  + \partial_r\right) G_{r} \right\rangle  -\langle \bar\rho^{\infty},\LL G_r\rangle drds =0.   
\end{multline*}
Now, taking the derivative with respect {to $t$ in the previous equality} we get that $\vf^{0}$ satisfies 
\begin{equation}\label{vf^0}
\begin{split}
&\langle \vf_{t}^{0} , G_{t} \rangle  - \langle \vf_{0}, G_{0} \rangle 
- \int_0^t\langle \vf_{s}^{0} ,\Big(\LL  + \partial_s\Big) G_{s} \rangle \, ds -\int_0^t\langle \bar\rho^{\infty},\LL G_s\rangle ds =0,
\end{split}   
\end{equation}
 for all $t \in [0,T]$. Then,  item iii) with $\kappa = 0$ in Definition \ref{Def. Dirichlet Condition} follows from (\ref{vf^0}), the definition of $\rho^{0}$ and $\bar\rho^{\infty}$
\end{proof}
\end{lem}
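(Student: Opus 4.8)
The plan is to follow the strategy of realizing $\rho^0$ as a weak limit, as $\kappa\downarrow 0$, of the weak solutions $\rho^\kappa$ of \eqref{eq:Dirichlet Equation} with $\hat\kappa=\kappa>0$ and the same initial datum $\rho_0$. The only uniform input is the energy bound of item i) of Theorem \ref{Energy_Thm1}, $\int_0^T\|\rho_t^\kappa\|_{\gamma/2}^2\,dt\lesssim T(\kappa+1)$, which for $\kappa\le 1$ is uniform in $\kappa$ but controls only the seminorm. To upgrade it to a genuine Sobolev bound I would recentre by the boundary profile $\bar\rho^\infty$ of Remark \ref{rem:Explicit_sol_rho_infty}: set $\varphi_t^\kappa:=\rho_t^\kappa-\bar\rho^\infty$. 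Since $\bar\rho^\infty\in\mc H^{\gamma/2}$ by \eqref{bar_rho^infty_in_H}, and since item ii) for $\hat\kappa>0$ forces $\rho_t^\kappa(0)=\alpha=\bar\rho^\infty(0)$ and $\rho_t^\kappa(1)=\beta=\bar\rho^\infty(1)$ for a.e.\ $t$, each $\varphi_t^\kappa$ lies in $\mc H_0^{\gamma/2}$ and $\int_0^T\|\varphi_t^\kappa\|_{\gamma/2}^2\,dt\lesssim T$ uniformly for $\kappa\le 1$. As $L^2(0,T;\mc H_0^{\gamma/2})$ is a Hilbert space, a subsequence of $(\varphi^\kappa)$ converges weakly there to some $\varphi^0$, and I claim $\rho^0:=\bar\rho^\infty+\varphi^0$ is the desired solution.

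The next step is to write down the equation solved by $\varphi^\kappa$. Because $\bar\rho^\infty=V_0/V_1$ one has the algebraic identity $\LL_\kappa\bar\rho^\infty+\kappa V_0=\LL\bar\rho^\infty$, so $F_{Dir}(t,\rho^\kappa,G,\rho_0)=0$ rewrites, for $G\in C_c^{1,\infty}([0,T]\times(0,1))$, as
\begin{equation*}
\langle\varphi_t^\kappa,G_t\rangle-\langle\varphi_0,G_0\rangle-\int_0^t\langle\varphi_s^\kappa,(\LL+\partial_s)G_s\rangle\,ds+\kappa\int_0^t\langle\varphi_s^\kappa,G_s\rangle_{V_1}\,ds-\int_0^t\langle\bar\rho^\infty,\LL G_s\rangle\,ds=0,
\end{equation*}
with $\varphi_0:=\rho_0-\bar\rho^\infty$. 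From here, the first two items of Definition \ref{Def. Dirichlet Condition} for $\rho^0$ with $\hat\kappa=0$ are routine: item i) follows from weak lower semicontinuity of the seminorm (so $\int_I\|\varphi_t^0\|_{\gamma/2}^2\,dt\lesssim|I|$) together with the triangle inequality and $\bar\rho^\infty\in\mc H^{\gamma/2}$; item ii) for $\hat\kappa=0$ holds because elements of $\mc H_0^{\gamma/2}$ vanish at $0$ and $1$, so $\rho^0_t(0)=\alpha$ and $\rho^0_t(1)=\beta$ for a.e.\ $t$, while the $[0,1]$-constraint survives the weak limit since that constraint set is convex and strongly closed in $L^2$.

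For item iii) I would pass to the limit $\kappa\to0$ in the displayed identity. The term $\kappa\int_0^t\langle\varphi_s^\kappa,G_s\rangle_{V_1}\,ds$ is $\mc{O}(\kappa)$, because $V_1$ is bounded on the compact support of $G$ and, by the fractional Hardy inequality \eqref{norms_related}, $\|\varphi_s^\kappa\|_{V_1}\lesssim\|\varphi_s^\kappa\|_{\gamma/2}$ is bounded in $L^2(0,T)$; the term $\int_0^t\langle\bar\rho^\infty,\LL G_s\rangle\,ds$ carries no $\kappa$. The genuine obstacle is that weak convergence in $L^2(0,T;\mc H_0^{\gamma/2})$ only controls time-integrated pairings of $\varphi^\kappa$ against $L^2$-in-time test fields, whereas the identity contains the pointwise-in-$t$ bracket $\langle\varphi_t^\kappa,G_t\rangle$. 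I would resolve this in the usual way: integrate the whole identity once more over $[0,t]$ so that every occurrence of $\varphi^\kappa$ sits inside a time integral, pass to the limit there by weak convergence together with Fubini and dominated convergence, and then differentiate the resulting absolutely continuous identity in $t$ to obtain
\begin{equation*}
\langle\varphi_t^0,G_t\rangle-\langle\varphi_0,G_0\rangle-\int_0^t\langle\varphi_s^0,(\LL+\partial_s)G_s\rangle\,ds-\int_0^t\langle\bar\rho^\infty,\LL G_s\rangle\,ds=0
\end{equation*}
for every $t$ (modulo the harmless time-regularity point needed to pass from ``a.e.\ $t$'' to ``all $t$'').

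Unwinding $\rho^0=\bar\rho^\infty+\varphi^0$ and $\LL\bar\rho^\infty=\LL_0\bar\rho^\infty$ then turns the last identity into $F_{Dir}(t,\rho^0,G,\rho_0)=0$ with $\hat\kappa=0$, which is item iii), and the lemma follows. I expect the only delicate point to be exactly the pointwise-in-$t$ issue just mentioned; everything else is soft functional analysis together with the already-established energy estimates. (Alternatively one could solve the linear parabolic Cauchy problem $\partial_t\varphi=\LL\varphi+\LL\bar\rho^\infty$ with zero Dirichlet data directly by Galerkin or semigroup methods, the forcing $\LL\bar\rho^\infty$ being a fixed bounded functional on $\mc H_0^{\gamma/2}$; but the limiting construction is preferable here because it is precisely the one that Theorem \ref{convergence_rho^k_to_rho^0} builds on.)
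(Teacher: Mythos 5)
Your proposal is correct and follows essentially the same route as the paper: recentre by $\bar\rho^\infty$, use the uniform energy bound from item i) of Theorem \ref{Energy_Thm1} to extract a weak limit $\vf^0$ in $L^{2}(0,T;\mc H_0^{\gamma/2})$, and handle the pointwise-in-$t$ bracket by integrating the identity once more in time before passing to the limit and differentiating back. Your explicit justification of the equation for $\vf^\kappa$ via the identity $\bb L_\kappa\bar\rho^\infty+\kappa V_0=\bb L\bar\rho^\infty$ is a welcome detail that the paper leaves as "easy to see."
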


\begin{lem}\label{lem2_rho^kappa_to_rho^0_L^2} Let $\rho_{0}:[0,1]\to[0,1]$ be a measurable function. Let $\rho^{\kappa}$ {be} the weak solution of (\ref{eq:Dirichlet Equation})  with initial condition $\rho_{0}$ and $\hat \kappa = \kappa$. Then, $\rho^{\kappa}$ converges strongly to $\rho^{0}$ in $L^{2}(0,T;L^{2})$ as $\kappa$ goes to $0$, where $\rho^{0}$ is the weak solution of (\ref{eq:Dirichlet Equation}) with $\hat\kappa = 0$ and initial condition $\rho_{0}$. \begin{proof}
Note that is enough to show that 
\begin{equation*}
\int_0^t  \Vert \rho^{\kappa}_{s}-\rho^{0}_{s} \Vert^{2} \,ds \lesssim t^{2}{\kappa},
\end{equation*}
for all $t\in [0,T]$. By Lemma \ref{lem1_existence_rho^0} we know that  $\rho^{0} = \bar\rho^{\infty} + \vf^{0}$. {Then,}  last inequality is equivalent to 
\begin{equation}\label{Imp}
\int_0^t  \Vert \vf_{s}^{\kappa} -\vf_{s}^{0} \Vert^{2} \,ds \lesssim  t^{2}{\kappa}.
\end{equation}
 By subtracting  (\ref{vf^0}) from (\ref{vf^k}) and calling  $\delta_t^k:=\vf_{t}^{\kappa} -\vf_{t}^{0}$ we obtain that
\begin{equation}\label{vf^k-vf^0}
\begin{split}
&\langle \delta_{t}^{\kappa}, G_{t} \rangle 
- \int_0^t \left\langle \delta_{s}^{\kappa},\left(\LL  + \partial_s\right) G_{s} \right\rangle \, ds  =- {\kappa}\int_0^t\langle \vf_{s}^{\kappa},G_s\rangle_{V_{1}} ds
\end{split}   
\end{equation}
for any function $G \in C^{1,\infty}_{c}([0,T]\times(0,1))$. Let $\lbrace H_{n}^{\kappa}\rbrace_{n\geq 1}$ be a sequence of functions in $C_{c}^{1,\infty}([0,T]\times(0,1))$ converging to $\delta^{\kappa}$ as $n\to \infty$ with respect to the norm of $L^{2}(0,T;\mc H ^{\gamma/2}_{0})$ and {for $n\geq 1$,} let $G_{n}^{\kappa}(s,u) = \int^{t}_{s} H_{n}^{\kappa}(r,u)dr $. We claim that by plugging $G_{n}$ into (\ref{vf^k-vf^0}) and taking $n\to \infty$ we get that 
\begin{equation} \label{Approx1}
\begin{split}
& 
 \int_{0}^t \Vert \delta_{s}^{\kappa} \Vert^{2} \, ds  +  \dfrac{1}{2}  \left \Vert \int_{0}^{t} \delta_{s}^{\kappa}ds \right\Vert_{\gamma/2}^{2}  = {-\kappa} \int_{0}^t\left\langle \vf_{s}^{\kappa}, \int_{s}^{t} \delta_{r}^{\kappa} dr\right\rangle_{V_{1}} ds.
\end{split}   
\end{equation}
We leave the justification of the equality above to the end of the proof. Now, by using successively the Cauchy-Schwarz's inequality we have that 
\begin{equation} \label{B1}
\begin{split}
\int_{0}^t \Vert \delta_{s}^{\kappa} \Vert^{2} \, ds  +  \dfrac{1}{2} \left \Vert \int_{0}^{t} \delta_{s}^{\kappa}ds \right\Vert_{\gamma/2}^{2}  
&\leq {\kappa}\int_{0}^t \Vert \vf_{s}^{\kappa}\Vert_{V_{1}} \left\Vert \int_{s}^{t} \delta_{r}^{\kappa} dr\right\Vert_{V_{1}} ds  
\\ &\lesssim  {\kappa} \sqrt{\int_{0}^t \Vert \vf_{s}^{\kappa}\Vert^{2}_{\gamma/2} ds}\sqrt{\int_{0}^t\left\Vert \int_{s}^{t} \delta_{r}^{\kappa} dr\right\Vert_{\gamma/2}^{2} ds}.
\end{split}   
\end{equation}
In the last inequality of the previous expression we used (\ref{norms_related}). By the triangular inequality we have that $\sqrt{\int_{0}^t\Big\Vert \int_{s}^{t} \delta_{r}^{\kappa} dr\Big\Vert_{\gamma/2}^{2} ds}$ is bounded from above by
\begin{equation}\label{B2}
\begin{split}
\sqrt{\int_{0}^t \left(\int_{s}^{t} \Vert\delta_{r}^{\kappa}\Vert_{\gamma/2}dr\right)^{2} ds} \leq  \sqrt{t\int_{0}^t \int_{0}^{t} \Vert\delta_{r}^{\kappa}\Vert_{\gamma/2}^{2}dr ds}
\lesssim \sqrt{t^{2} \int_{0}^{t} \left(\Vert\vf_{r}^{\kappa}\Vert_{\gamma/2}^{2} + \Vert\vf_{r}^{0}\Vert_{\gamma/2}^{2} \right) dr }.
\end{split}
\end{equation}
In the first inequality in the previous display we used the Cauchy-Schwarz's inequality and in the second inequality we used the Minkowski's inequality and the inequality $(a +b)^{2}\leq 2(a^{2}+b^{2})$. Using (\ref{EE}) and (\ref{EE0}), we get from (\ref{B1}) and (\ref{B2})  the result. 

We conclude this proof  justifying \eqref{Approx1}. Note that it is enough to show 
\begin{enumerate}[i)]
\item $\displaystyle\lim_{n \to \infty}  \int_0^t \langle \delta_{s}^{\kappa}, (\partial_s G_n^{\kappa}) (s,\cdot)\rangle  ds = - \int_0^t  \Vert {\delta}^{\kappa}_{s}\Vert^{2} ds$. 
\item $\displaystyle\lim_{n \to \infty} \int_0^t \langle { \delta}_{s}^{\kappa},  \bb L G_{n}^{\kappa} (s,\cdot)  \rangle ds =- \cfrac{1}{2} \; \Big\| \int_0^t {\delta}_s^{\kappa} ds \Big\|^2_{\gamma/2}.$ 
\item $\displaystyle\lim_{n \to \infty} \int^{t}_{0}  \left\langle 
\vf_{s}^{\kappa}, G_n^{\kappa} (s,\cdot) \right \rangle_{V_{1}} ds =\int^{t}_{0}  \left\langle 
\vf_{s}^{\kappa}, \int_{s}^{t}\delta_{r}^{\kappa}dr\right \rangle_{V_{1}} ds.$
\end{enumerate}
For i) we rewrite  $\int_0^t \langle \delta_{s}^{\kappa}, (\partial_s G_n^{\kappa}) (s,\cdot)\rangle  ds$ as 
\begin{equation*}
\begin{split}
 & - \int_0^t  \langle {\delta}_s^{\kappa} \, , \, H_n^{\kappa} (s, \cdot) \rangle \, ds = -\int_0^t  \big\langle {\delta}_s^{\kappa} \, , \, H_n ^{\kappa}(s, \cdot) - { \delta}_s^{\kappa} \big\rangle \, ds 
 - \; \int_0^t \| {\delta}_s ^{\kappa}\|^2 \, ds.
\end{split}
\end{equation*}
Observe then that by the Cauchy-Schwarz's inequality we have
\begin{equation*}
\begin{split}
& \left| \int_0^T  \big\langle {\delta}_s^{\kappa} \, , \, H_n^{\kappa} (s, \cdot) - {\delta}_s ^{\kappa}\big\rangle \, ds \right| \le \int_0^T \| {\delta}_s^{\kappa} \| \, \| H_n ^{\kappa}(s, \cdot) - {\delta}_s^{\kappa} \| \, ds\\
&\le \sqrt{ \int_0^T  \| {\delta}_s^{\kappa} \|^2 \, ds} \; \sqrt{ \int_0^T  \| H_n ^{\kappa}(s, \cdot) - {\delta}_s^{\kappa} \|^2 \, ds } 
\end{split} 
\end{equation*}
which goes to $0$ as $n \to \infty$ since $H_{n}^{\kappa}\to \delta_{s}^{\kappa}$ in $L^{2}(0,T;\mc H^{\gamma/2}_{0})$. For ii), since $G_{n}$ has compact support included in $(0,1)$, we can use the integration by parts formula for the regional fractional Laplacian (see Theorem 3.3 in \cite{GM}) which permits to write
$$ \int_0^t \langle {\delta}_{s}^{\kappa} , \bb L G_{n}^{\kappa}(s,\cdot)\rangle ds = - \int_0^t \Big\langle \delta_s^{\kappa} , G_n^{\kappa} (s, \cdot) \, \Big\rangle_{\gamma/2}\, ds.$$ 
Then we have
\begin{equation*}
\begin{split}
&\int_0^t \Big\langle \delta_s ^{\kappa}\,  , G_n^{\kappa} (s, \cdot) \, \Big\rangle_{\gamma/2}\, ds = \int_0^t \Big\langle \delta_s ^{\kappa}\,  , \int_s^t {\delta}_r^{\kappa} dr  \, \Big\rangle_{\gamma/2}\, ds + \int_0^t \Big\langle \delta_s^{\kappa} \,  , G_n ^{\kappa}(s, \cdot) - \int_s^t {\delta}_r^{\kappa} dr \, \Big\rangle_{\gamma/2}\, ds\\
&=\iint_{0 \le s < r \le t} \langle {\delta}_s^{\kappa} \, , \, {\delta}_r^{\kappa} \rangle_{\gamma/2} \, ds dr \; + \;  \int_0^t \Big\langle \delta_s^{\kappa} \,  ,  \int_s^t \left( H_n ^{\kappa}(r, \cdot) -{\delta}_r^{\kappa}\right)  dr \, \Big\rangle_{\gamma/2}\, ds\\
&= \cfrac{1}{2} \, \iint_{[0,t]^2} \langle {\delta}_s^{\kappa} \, , \, {\delta}_r^{\kappa} \rangle_{\gamma/2} \, ds dr \; + \;  \int_0^t \Big\langle \delta_s ^{\kappa}\,  , \int_s^t \left( H_n^{\kappa} (r, \cdot) -{\delta}_r^{\kappa}\right)  dr \, \Big\rangle_{\gamma/2}\, ds\\
&=\cfrac{1}{2} \; \Big\| \int_0^t {\delta}_s ^{\kappa}ds \Big\|^2_{\gamma/2}+ \;  \int_0^t \Big\langle \delta_s^{\kappa} \,  , \int_s^t \left( H_n ^{\kappa}(r, \cdot) -{\delta}_r^{\kappa}\right)  dr \, \Big\rangle_{\gamma/2}\, ds.
\end{split}
\end{equation*}
To conclude the proof of ii) it is sufficient to show that the term at  the right hand side of last expression vanishes as $n$ goes to $\infty$. This is a consequence of a successive use of Cauchy-Schwarz's inequalities:
\begin{equation}\label{Con1}
\begin{split}
&\left| \int_0^t \Big\langle \delta_s ^{\kappa}\,  , \int_s^t \left( H_n^{\kappa} (r, \cdot) -{\delta}_r^{\kappa}\right)  dr \, \Big\rangle_{\gamma/2}\, ds \right| \le \int_0^t \Big\| \delta_s^{\kappa} \Big\|_{\gamma/2} \; \Big\| \int_s^t \left( H_n^{\kappa} (r, \cdot) -{\delta}_r^{\kappa} \right) dr \Big\|_{\gamma/2}\, ds\\
&\le \int_0^t \Big\| \delta_s^{\kappa} \Big\|_{\gamma/2} \; \int_s^t \Big\| H_n^{\kappa} (r, \cdot) -{\delta}_r^{\kappa} \Big\|_{\gamma/2}\, dr \,  ds  \le \int_0^t \Big\| \delta_s^{\kappa} \Big\|_{\gamma/2} \; \int_0^t \Big\|  H_n^{\kappa} (r, \cdot) -{\delta}_r^{\kappa} \Big\|_{\gamma/2}\, dr \,  ds \\
&= \left(\int_0^t \Big\|\delta_s^{\kappa} \Big\|_{\gamma/2} ds \right) \, \left(  \int_0^t \Big\|  H_n^{\kappa} (r, \cdot) -{\delta}_r^{\kappa}  \Big\|_{\gamma/2}\, dr \right)\\
& \le t \, \sqrt{ \int_0^t \Big\| \delta_s^{\kappa} \Big\|^2_{\gamma/2} ds} \; \sqrt{  \int_0^t \Big\|  H_n ^{\kappa}(r, \cdot) -{\delta}_r^{\kappa} \Big\|_{\gamma/2}^2\, dr} \; \xrightarrow[n \to \infty]{} \; 0.
\end{split}
\end{equation}
To prove iii) we rewrite  $\int_0^t \langle \vf_{s}^{\kappa},  G_n^{\kappa} (s,\cdot)\rangle_{V_{1}}  ds$ as 
\begin{equation*}
\begin{split}
 &  \int_0^t  \left\langle {\vf}_s^{\kappa} \, , \int^{t}_{s} \left( H_n^{\kappa} (r, \cdot)-\delta_{r}^{\kappa}\right) dr\right\rangle_{V_{1}} \, ds +\int_0^t  \left\langle {\vf}_s^{\kappa} \, , \int^{t}_{s}\delta_{r}^{\kappa}dr \right\rangle_{V_{1}} \, ds 
\end{split}
\end{equation*}
and, to conclude the proof it is sufficient to show that the term at the left hand side of last expression vanishes as $n\to\infty$.
This is a consequence of a successive use of the Cauchy-Schwarz's inequality as in \eqref{Con1}, with $\Vert \cdot\Vert_{\gamma/2}$ replaced by $\Vert \cdot\Vert_{V_{1}}$ and Hardy's inequality:
\begin{equation*}
\begin{split}
&\left| \int_0^t \Big\langle \vf_s^{\kappa} \,  , \int_s^t \{ H_n^{\kappa} (r, \cdot) -{\delta}_r^{\kappa}\}  dr \, \Big\rangle_{V_{1}}\, ds \right| \le \int_0^t\Big\| \vf_s^{\kappa} \Big\|_{V_{1}} \; \Big\| \int_s^t \left( H_n ^{\kappa}(r \cdot) -{\delta}_r^{\kappa} \right) dr \Big\|_{V_{1}}\, ds\\
&\le \int_0^t \Big\| \vf_s^{\kappa} \Big\|_{V_{1}} \; \int_s^t \Big\| H_n^{\kappa} (r, \cdot) -{\delta}_r^{\kappa}  \Big\|_{V_{1}}\, dr \,  ds  \le \int_0^t \Big\| \vf_s ^{\kappa}\Big\|_{V_{1}} \; \int_0^t \Big\|  H_n ^{\kappa}(r, \cdot) -{\delta}_r^{\kappa}  \Big\|_{V_{1}}\, dr \,  ds \\
&= \left(\int_0^t \Big\| \vf_s^{\kappa} \Big\|_{V_{1}} ds \right) \, \left(  \int_0^t \Big\|  H_n^{\kappa} (r, \cdot) -{\delta}_r^{\kappa}  \Big\|_{V_{1}}\, dr \right)\\
& \le t \, \sqrt{ \int_0^t \Big\| \vf_s^{\kappa} \Big\|^2_{V_{1}} ds} \; \sqrt{ \int_0^t \Big\|  H_n ^{\kappa}(r, \cdot) -{\delta}_r^{\kappa}  \Big\|_{V_{1}}^2\, dr} \\
& \le C t \, \sqrt{ \int_0^t \Big\| \vf_s ^{\kappa}\Big\|^2_{\gamma/2} ds} \; \sqrt{  \int_0^t \Big\|  H_n ^{\kappa}(r, \cdot) -{\delta}_r^{\kappa} \Big\|_{\gamma/2}^2\, dr} \; \xrightarrow[n \to \infty]{} \; 0
\end{split}
\end{equation*}
{where in the last inequality} we used the fractional Hardy's inequality (see \eqref{norms_related}).

\end{proof}
\end{lem}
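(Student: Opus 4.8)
The plan is to prove the quantitative bound
\begin{equation*}
\int_0^t \Vert \rho^{\kappa}_{s}-\rho^{0}_{s}\Vert^{2}\,ds \;\lesssim\; t^{2}\kappa,\qquad t\in[0,T],
\end{equation*}
from which strong $L^{2}(0,T;L^{2})$ convergence as $\kappa\to 0$ is immediate. By Lemma~\ref{lem1_existence_rho^0} we may write $\rho^{0}=\bar\rho^{\infty}+\vf^{0}$, where $\vf^{0}$ is the weak limit in $L^{2}(0,T;\mc H^{\gamma/2}_{0})$ of (a subsequence of) $\vf^{\kappa}:=\rho^{\kappa}-\bar\rho^{\infty}$, so the bound above is equivalent to \eqref{Imp} for $\delta^{\kappa}:=\vf^{\kappa}-\vf^{0}$. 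Subtracting the weak formulation \eqref{vf^0} from \eqref{vf^k} produces the linear identity \eqref{vf^k-vf^0} solved by $\delta^{\kappa}$, whose only inhomogeneity is the term $-\kappa\int_0^t\langle\vf^{\kappa}_{s},G_{s}\rangle_{V_{1}}\,ds$, which is of order $\kappa$.

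The first real step is to convert \eqref{vf^k-vf^0} into an energy identity for $\delta^{\kappa}$. I would test it against $G^{\kappa}_{n}(s,u)=\int_{s}^{t}H^{\kappa}_{n}(r,u)\,dr$, where $\{H^{\kappa}_{n}\}\subset C^{1,\infty}_{c}([0,T]\times(0,1))$ approximates $\delta^{\kappa}$ in $L^{2}(0,T;\mc H^{\gamma/2}_{0})$, and then let $n\to\infty$. The $\partial_{s}$-term yields $-\int_0^t\Vert\delta^{\kappa}_{s}\Vert^{2}ds$; the $\LL$-term, after the integration-by-parts formula for the regional fractional Laplacian (Theorem~3.3 of \cite{GM}) and a symmetrisation of the resulting double time integral, yields $-\tfrac12\Vert\int_0^t\delta^{\kappa}_{s}ds\Vert^{2}_{\gamma/2}$; and the source term converges to $-\kappa\int_0^t\langle\vf^{\kappa}_{s},\int_{s}^{t}\delta^{\kappa}_{r}dr\rangle_{V_{1}}ds$. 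This is exactly \eqref{Approx1}.

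From \eqref{Approx1} the conclusion follows by soft estimates. By Cauchy--Schwarz and the norm comparison \eqref{norms_related} (a fractional Hardy inequality, giving $\Vert g\Vert\lesssim\Vert g\Vert_{V_{1}}\lesssim\Vert g\Vert_{\gamma/2}$ on $\mc H^{\gamma/2}_{0}$), the right-hand side of \eqref{Approx1} is at most $\kappa\big(\int_0^t\Vert\vf^{\kappa}_{s}\Vert^{2}_{\gamma/2}ds\big)^{1/2}\big(\int_0^t\Vert\int_{s}^{t}\delta^{\kappa}_{r}dr\Vert^{2}_{\gamma/2}ds\big)^{1/2}$; a second application of Cauchy--Schwarz, Minkowski's inequality and $(a+b)^{2}\le2(a^{2}+b^{2})$ bounds the last factor by $t\big(\int_0^t(\Vert\vf^{\kappa}_{r}\Vert^{2}_{\gamma/2}+\Vert\vf^{0}_{r}\Vert^{2}_{\gamma/2})\,dr\big)^{1/2}$. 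Inserting the uniform-in-$\kappa$ (for $\kappa\le1$) energy bounds \eqref{EE} and \eqref{EE0}, each $\lesssim|I|$, then gives $\int_0^t\Vert\delta^{\kappa}_{s}\Vert^{2}ds\lesssim t^{2}\kappa$, i.e. \eqref{Imp}.

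The step I expect to be the main obstacle, and where I would be most careful, is the passage $n\to\infty$ in the derivation of \eqref{Approx1}: for each of the three terms one must check that replacing $\delta^{\kappa}$ by $H^{\kappa}_{n}$, respectively $G^{\kappa}_{n}$ by $\int_{s}^{t}\delta^{\kappa}_{r}dr$, introduces only errors of the schematic form $\big(\int_0^t\Vert\delta^{\kappa}_{s}\Vert_{\#}^{2}ds\big)^{1/2}\big(\int_0^t\Vert H^{\kappa}_{n}(s,\cdot)-\delta^{\kappa}_{s}\Vert_{\#}^{2}ds\big)^{1/2}$ with $\#$ ranging over the $L^{2}$-, the $\gamma/2$- and the $V_{1}$-norms, all of which vanish by the $L^{2}(0,T;\mc H^{\gamma/2}_{0})$-convergence of $H^{\kappa}_{n}$; for the $\LL$-term this uses that $G^{\kappa}_{n}(s,\cdot)$ is compactly supported in $(0,1)$ so the regional integration-by-parts identity applies, and for the $V_{1}$-term it uses once more the fractional Hardy inequality \eqref{norms_related} to control the $V_{1}$-norm by the $\gamma/2$-seminorm. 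The symmetrisation of the double time integral and the use of Fubini's theorem are routine.
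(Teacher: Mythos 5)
Your proposal is correct and follows essentially the same route as the paper's proof: the same decomposition $\delta^{\kappa}=\vf^{\kappa}-\vf^{0}$, the same test functions $G_{n}^{\kappa}(s,u)=\int_{s}^{t}H_{n}^{\kappa}(r,u)\,dr$ leading to the energy identity \eqref{Approx1}, and the same closing estimates via Cauchy--Schwarz, the fractional Hardy inequality \eqref{norms_related}, Minkowski, and the uniform bounds \eqref{EE} and \eqref{EE0}. The three limit verifications you flag as the delicate step are exactly the ones the paper carries out, in the same way.
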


\begin{lem} \label{lem1_existence_rho^infty}
Let $\rho_{0}:[0,1]\to[0,1]$ be a measurable function. Consider the function 
$\rho^{\infty}_{t} = \bar \rho^{\infty} + (\rho_{0} -\bar\rho^{\infty})e^{-tV_{1}}$. 
If $g^{\infty}:=\rho_{0} - \bar\rho^{\infty} \in \mc H^{\gamma/2}$, then
\begin{enumerate}[i)]
\item  $\rho^{\infty} \in L^{2}(0,T;\mc H^{\gamma/2})$ .
\item $\rho^{\infty}$ is a weak solution of (\ref{eq:Dirichlet Equation_infty}) with initial condition $\rho_{0}$.
\end{enumerate}
\begin{proof}
For $i)$ note that by using the inequality $(a+b)^2\leq 2a^{2}+2b^{2}$ we get that
\begin{equation*}
 \int^{T}_{0}\Vert \rho_{t}^{\infty} \Vert_{\gamma/2}^{2}dt \leq   2T\Vert \bar\rho^{\infty} \Vert_{\gamma/2}^{2} + 2\int_{0}^{T} \left \Vert g^{\infty}e^{-tV_{1}} \right\Vert_{\gamma/2}^{2}dt.
\end{equation*}
Since $\Vert \bar\rho^{\infty} \Vert_{\gamma/2}<\infty$ ({see \eqref{bar_rho^infty_in_H}})  it is enough to prove that the term on the right hand side of {last expression} is finite. Note that  $\left \Vert g^{\infty}e^{-tV_{1}} \right\Vert_{\gamma/2}^{2}$ is equal to 
\begin{equation*}
\begin{split}
 &\dfrac{c_{\gamma}}{2}\iint_{[0,1]^{2}}\dfrac{\left( g^{\infty}(u)e^{-tV_{1}(u)} - g^{\infty}(v)e^{-tV_{1}(v)}\right)^{2}}{\vert u-v\vert ^{\gamma+1}}\,dudv\\
 &= \dfrac{c_{\gamma}}{2}\iint_{[0,1]^{2}}\dfrac{\left( g^{\infty}(u)\left(e^{-tV_{1}(u)}-e^{-tV_{1}(v)}\right) + \left(g^{\infty}(u)- g^{\infty}(v)\right) e^{-tV_{1}(v)}\right)^{2}}{\vert u-v\vert ^{\gamma+1}}\,dudv.
\end{split}
\end{equation*}
 Using the fact that $(a+b)^{2} \leq 2a^{2}+2b^{2}$ and that $\vert g^{\infty}(u)\vert \leq 2 $ for any $u\in [0,1]$  we get that last expression is less than $
8\Vert e^{-tV_{1}}\Vert_{\gamma/2}^{2} + 2\Vert g^{\infty}\Vert_{\gamma/2}^{2}.$ Note that the term $8\Vert e^{-tV_{1}}\Vert_{\gamma/2}^{2}$ can be written as
\begin{equation*}
\begin{split}
&4c_{\gamma}\iint_{[0,1]^{2}}\dfrac{\left(\int_{v}^{u} -t V_{1}'(w)e^{-tV_{1}(w)}dw \right)^{2}}{\vert u - v \vert^{\gamma+1}}dudv\\
=&4c_{\gamma}\iint_{[0,1]^{2}}\dfrac{\left(\int_{v}^{u} t \left(\tfrac{\gamma}{w}r^{-}(w) -\tfrac{\gamma}{1-w}r^{+}(w) \right) e^{-tV_{1}(w)}dw \right)^{2}}{\vert u - v \vert^{\gamma+1}}dudv.
\end{split}
\end{equation*}
 Using again $(a+b)^{2} \leq 2a^{2}+2b^{2}$ and the fact that  $e^{-tV_{1}(w)}\leq e^{-tr^{\pm}(w)}$ for any $w \in [0,1]$, we get that last expression is bounded from above by 
\begin{equation*}
\begin{split}
& 8c_{\gamma}\iint_{[0,1]^{2}}\dfrac{\left(\int_{v}^{u} \tfrac{\gamma}{w} tr^{-}(w)e^{-tr^{-}(w)}dw \right)^{2}}{\vert u - v \vert^{\gamma+1}}+\dfrac{\left(\int_{v}^{u} \tfrac{\gamma}{1-w} tr^{+}(w)e^{-tr^{+}(w)}dw \right)^{2}}{\vert u - v \vert^{\gamma+1}}dudv \\
=&  16c_{\gamma}\iint _{[0,1]^{2}}\dfrac{\left(\int_{v}^{u} \tfrac{\gamma}{w} tr^{-}(w)e^{-tr^{-}(w)}dw \right)^{2}}{\vert u - v \vert^{\gamma+1}}dudv .
\end{split}
\end{equation*}
In the last equality we used a symmetry argument. We can write {last expression} as 
\begin{equation*}
\begin{split}
& C_{\gamma} t^{\tfrac{2-2\gamma}{\gamma}}\iint _{[0,1]^{2}}\dfrac{ \Big(\int_{v}^{u} w^{\gamma-2} (tr^{-}(w))^{\tfrac{2\gamma-1}{\gamma}}e^{-tr^{-}(w)}dw \Big)^{2}}{\vert u - v \vert^{\gamma+1}}dudv, \\
\end{split}
\end{equation*}
where $C_{\gamma} = 16 c_{\gamma}^{\tfrac{2-\gamma}{\gamma}}\gamma^{\tfrac{4\gamma-2}{\gamma}}$.  Since the function  $E_{\gamma}:[0,\infty)\to [0,\infty) $ defined as $E_{\gamma}(z)= z^{\tfrac{2\gamma-1}{\gamma}}e^{-z}$  is bounded from above by  $E_{\gamma}\left(\tfrac{2\gamma-1}{\gamma}\right)$ we can bound {last expression} from above by 
\begin{equation*}
\begin{split}
& C_{\gamma} t^{\tfrac{2-2\gamma}{\gamma}}E^{2}_{\gamma}(\tfrac{2\gamma-1}{\gamma})\iint_{[0,1]^{2}}\dfrac{\left(\int_{v}^{u} w^{\gamma-2} dw \right)^{2}}{\vert u - v \vert^{\gamma+1}}dudv\\ &= C_{\gamma} t^{\tfrac{2-2\gamma}{\gamma}}E^{2}_{\gamma}(\tfrac{2\gamma-1}{\gamma})  (\gamma -2)^{-2}\iint_{[0,1]^{2}}\dfrac{\left( u^{\gamma-1} - v^{\gamma-1} \right)^{2}}{\vert u - v \vert^{\gamma+1}}dudv,
 \end{split}
\end{equation*}
which is finite from (7.2) in the proof of Lemma 7.2 of \cite{GM}. Thus, we have that 
\begin{equation}\label{eq:rhoinfty8}
8\Vert e^{-tV_{1}}\Vert^{2}_{\gamma/2} \lesssim t^{\tfrac{2-2\gamma}{\gamma}}.
\end{equation}
Therefore, if $g^{\infty}\in \mc H^{\gamma/2}$ then we conclude that 
\begin{equation*}
\begin{split}
 \int^{T}_{0}\Vert \rho_{t}^{\infty} \Vert_{\gamma/2}^{2}dt &\lesssim   T \Vert \bar\rho^{\infty} \Vert_{\gamma/2}^{2} + T \left \Vert g^{\infty} \right\Vert_{\gamma/2}^{2} + \int_{0}^{T} t^{\tfrac{2-2\gamma}{\gamma}}dt\\
&\lesssim T \Vert \bar\rho^{\infty} \Vert_{\gamma/2}^{2} + T \left \Vert g^{\infty} \right\Vert_{\gamma/2}^{2} + T^{\tfrac{2-\gamma}{\gamma}} ,
\end{split}
\end{equation*}
which is finite since $\gamma < 2$. 

For $ii)$, since $\rho ^{\infty}$ is the solution of (\ref{eq:Dirichlet Equation_infty}) {then it satisfies} item $ii)$ of Definition \ref{Def. Dirichlet Condition_kappa^infty}. In order to see that $\rho^{\infty}$ satisfies item $i)$ of Definition \ref{Def. Dirichlet Condition_kappa^infty}, note that using $(a + b)^{2}\leq 2a^{2}+2b^{2}$ we have that 
\begin{equation*}
\begin{split}
&\int_{0}^{T}\int^{1}_{0}\left(\dfrac{\left( \alpha - \rho^{\infty}_{t}(u) \right)^{2}}{u^{\gamma}}+\dfrac{\left( \beta- \rho^{\infty}_{t}(u) \right)^{2}}{(1-u)^{\gamma}}\right) dudt\\
&\leq 2T\int^{1}_{0}\left(\dfrac{\left( \alpha - \bar\rho^{\infty}(u) \right)^{2}}{u^{\gamma}}+\dfrac{\left( \beta- \bar\rho^{\infty}(u) \right)^{2}}{(1-u)^{\gamma}}\right) du + \dfrac{8\gamma}{c_{\gamma}}\int_{0}^{T}\Vert e^{-tV_{1}} \Vert_{V_{1}}^{2}dt\\
&= 2T(\beta -\alpha)^{2}\int^{1}_{0}\left( u^{\gamma} +(1-u)^{\gamma}\right) du + \dfrac{8\gamma}{c_{\gamma}}\int_{0}^{T}\Vert e^{-tV_{1}} \Vert_{V_{1}}^{2}dt\\
&\leq 2^\gamma(\beta -\alpha)^{2}T+ \dfrac{8\gamma}{c_{\gamma}}\int_{0}^{T}\Vert e^{-tV_{1}} \Vert_{V_{1}}^{2}dt.
\end{split}
\end{equation*}
For the term on the right hand side of last expression we first see that we can extend continuously the function $e^{-tV_{1}}$ in such a way that it vanishes at $0$ and at $1$.  There exists a constant $C_{2}$ (see \ref{norms_related}) such that  the previous expression is bounded from above by
\begin{equation}
 2^\gamma(\beta -\alpha)^{2}T+ \dfrac{8\gamma C_{2}^{2}}{c_{\gamma}}\int_{0}^{T}\Vert e^{-tV_{1}} \Vert_{\gamma/2}^{2}dt.
\end{equation}
 Thus, we obtain the desired result by using (\ref{eq:rhoinfty8}). 
\end{proof}
\end{lem}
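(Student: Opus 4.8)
The plan is to establish both items by direct estimates on the explicit formula $\rho^{\infty}_{t}=\bar\rho^{\infty}+g^{\infty}e^{-tV_{1}}$, relying on three facts that are already available: $\bar\rho^{\infty}\in\mc H^{\gamma/2}$ (see \eqref{bar_rho^infty_in_H}), the bounds $0\le\bar\rho^{\infty}\le 1$ which give $|g^{\infty}|\le 2$, and the closed form $\bar\rho^{\infty}(u)=(\beta u^{\gamma}+\alpha(1-u)^{\gamma})/(u^{\gamma}+(1-u)^{\gamma})$. For item i) I would bound $\int_{0}^{T}\Vert\rho^{\infty}_{t}\Vert_{\gamma/2}^{2}\,dt$ by splitting off the $\bar\rho^{\infty}$-contribution with $(a+b)^{2}\le 2a^{2}+2b^{2}$ (finite by \eqref{bar_rho^infty_in_H}), and then reducing $\int_{0}^{T}\Vert g^{\infty}e^{-tV_{1}}\Vert_{\gamma/2}^{2}\,dt$ as follows: inside the Gagliardo double integral write $g^{\infty}(u)e^{-tV_{1}(u)}-g^{\infty}(v)e^{-tV_{1}(v)}=g^{\infty}(u)(e^{-tV_{1}(u)}-e^{-tV_{1}(v)})+(g^{\infty}(u)-g^{\infty}(v))e^{-tV_{1}(v)}$; the second term contributes at most $\Vert g^{\infty}\Vert_{\gamma/2}^{2}<\infty$, and the first, using $|g^{\infty}|\le 2$, reduces everything to controlling $\Vert e^{-tV_{1}}\Vert_{\gamma/2}^{2}$.

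To estimate $\Vert e^{-tV_{1}}\Vert_{\gamma/2}^{2}$ I would write $e^{-tV_{1}(u)}-e^{-tV_{1}(v)}=\int_{v}^{u}t\big(\tfrac{\gamma}{w}r^{-}(w)-\tfrac{\gamma}{1-w}r^{+}(w)\big)e^{-tV_{1}(w)}\,dw$, split the two terms with $(a+b)^{2}\le 2a^{2}+2b^{2}$, use the pointwise bound $e^{-tV_{1}(w)}\le e^{-tr^{\pm}(w)}$, and invoke the symmetry $u\leftrightarrow 1-u$ to reduce to a single double integral. Rescaling in $t$ then makes the bounded function $z\mapsto z^{(2\gamma-1)/\gamma}e^{-z}$ appear explicitly, and what remains is a constant multiple of $\iint_{[0,1]^{2}}(u^{\gamma-1}-v^{\gamma-1})^{2}|u-v|^{-(1+\gamma)}\,du\,dv$, which is finite by (7.2) in the proof of Lemma 7.2 of \cite{GM}. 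This yields $\Vert e^{-tV_{1}}\Vert_{\gamma/2}^{2}\lesssim t^{(2-2\gamma)/\gamma}$, and since $\gamma<2$ the exponent $(2-2\gamma)/\gamma$ exceeds $-1$, so $\int_{0}^{T}t^{(2-2\gamma)/\gamma}\,dt<\infty$, which proves item i).

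For item ii), the weak formulation in item ii) of Definition \ref{Def. Dirichlet Condition_kappa^infty} holds by construction, since by Remark \ref{rem:Explicit_sol_rho_infty} the function $\rho^{\infty}$ is the explicit solution of \eqref{eq:Dirichlet Equation_infty}; one simply substitutes it and checks $F_{Reac}(t,\rho^{\infty},G,\rho_{0})=0$. For item i) of that definition I would again split with $(a+b)^{2}\le 2a^{2}+2b^{2}$: using $\alpha-\bar\rho^{\infty}(u)=(\alpha-\beta)u^{\gamma}/(u^{\gamma}+(1-u)^{\gamma})$ (and symmetrically for $\beta-\bar\rho^{\infty}$), the $\bar\rho^{\infty}$-part has a bounded integrand on $(0,1)$ and contributes $O(T)$, while the $g^{\infty}e^{-tV_{1}}$-part is controlled, via $|g^{\infty}|\le 2$, by a constant times $\int_{0}^{T}\Vert e^{-tV_{1}}\Vert_{V_{1}}^{2}\,dt$. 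Extending $e^{-tV_{1}}$ continuously so that it vanishes at $0$ and $1$ and using the norm equivalence \eqref{norms_related}, this is at most a constant times $\int_{0}^{T}\Vert e^{-tV_{1}}\Vert_{\gamma/2}^{2}\,dt$, finite by the bound obtained in item i).

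\textbf{Main obstacle.} The delicate point is the behaviour of $\Vert e^{-tV_{1}}\Vert_{\gamma/2}^{2}$ as $t\to 0$: the naive bound diverges, and one must use simultaneously the decay $e^{-tV_{1}}\le e^{-tr^{\pm}}$ and the $t$-rescaling in order to isolate the integrable singularity $t^{(2-2\gamma)/\gamma}$ and to reduce the leftover double integral to the already-known finite quantity $\iint(u^{\gamma-1}-v^{\gamma-1})^{2}|u-v|^{-(1+\gamma)}$. Identifying the exact exponent and performing this reduction is the technical heart of the proof.
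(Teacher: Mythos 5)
Your proposal is correct and follows essentially the same route as the paper's own proof: the same splitting $\rho^{\infty}_t=\bar\rho^{\infty}+g^{\infty}e^{-tV_1}$, the same decomposition of the Gagliardo difference, the same use of $e^{-tV_1(w)}\le e^{-tr^{\pm}(w)}$ together with the $t$-rescaling to produce the bounded function $z^{(2\gamma-1)/\gamma}e^{-z}$ and the finite integral $\iint(u^{\gamma-1}-v^{\gamma-1})^2|u-v|^{-(1+\gamma)}\,du\,dv$ from Lemma 7.2 of \cite{GM}, and the same treatment of item ii) via the extension of $e^{-tV_1}$ and the fractional Hardy inequality \eqref{norms_related}. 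You correctly identify the key technical point, namely the integrable singularity $t^{(2-2\gamma)/\gamma}$ of $\Vert e^{-tV_1}\Vert_{\gamma/2}^2$ as $t\to 0$.
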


\begin{lem}\label{lem2_rho^kappa_to_rho^infty_L^2}Let $\rho_{0}:[0,1]\to[0,1]$ be a measurable function, such that $\rho_0 - \bar\rho^{\infty} \in \mc H^{\gamma/2}$. Furthermore, let  $\rho^{\kappa}$ and $\rho^{\infty}$ {be} the  weak solutions of (\ref{eq:Dirichlet Equation}) and (\ref{eq:Dirichlet Equation_infty}), respectively, and with the same initial condition $\rho_{0}$. {Let $\hat\rho^{\kappa}_{t} := \rho^{\kappa}_{t/\kappa} $ , for all $t\in [0,T]$}. Then  $\hat\rho^{\kappa}$ converges strongly to $\rho^{\infty}$ in $L^{2}(0,T;L^{2})$, as $\kappa$ goes to $\infty$.
\begin{proof}
It is enough to show that 
\begin{equation}\label{Imp_Inf} 
\int_0^t  \Vert \hat\rho^{\kappa}_{s}-\rho^{\infty}_{s} \Vert^{2} \,ds =\int_0^t  \Vert \hat\vf_{s}^{\kappa} -\vf_{s}^{\infty} \Vert^{2} \,ds  \ls \dfrac{1}{\sqrt{\kappa}},
\end{equation}
for all $t\in [0,T]$  where  $\hat \varphi ^{\kappa}_{t}= \hat\rho_{t}^{\kappa} -\bar\rho^{\infty}$ and $\vf^{\infty}_{t} = (\rho_{0}-\bar\rho^{\infty})e^{-tV_{1}}$. {It is not difficult to see that $\hat \varphi ^{\kappa}_{t}$ satisfies
\begin{equation}\label{hatvf^k}
\begin{split}
&\langle \hat\vf_{t}^{\kappa} , G_{t} \rangle  - \langle \vf_{0}, G_{0} \rangle 
- \int_0^t\langle \hat\vf_{s}^{\kappa} , \partial_sG_{s} \rangle \, ds + \int_0^t\langle \hat\vf_{s}^{\kappa},G_s\rangle_{V_{1}} ds-\dfrac{1}{\kappa}\int_0^t\langle \hat\rho^{\kappa}_{s},\LL G_s\rangle ds =0
\end{split}   
\end{equation}
for all functions $G \in C^{1,\infty}_{c}([0,T]\times(0,1))$}. Then, calling  $\hat\delta^k:=\hat\vf^{\kappa} -\vf^{\infty}$ we have that
\begin{equation}\label{vf^k-vf^infty}
\begin{split}
&\langle \hat\delta_{t}^{\kappa}, G_{t} \rangle 
- \int_0^t\left\langle \hat\delta_{s}^{\kappa}, \left(\dfrac{1}{\kappa}\LL +\partial_s\right) G_{s} \right\rangle \, ds + \int_0^t\left\langle \hat\delta_{s}^{\kappa}, G_{s} \right\rangle_{V_{1}}  = \dfrac{1}{\kappa}\int_0^t\langle \rho_{s}^{\infty},G_s\rangle_{\gamma/2} ds
\end{split}   
\end{equation}
for any function $G \in C^{1,\infty}_{c}([0,T]\times(0,1))$. Let $\lbrace\hat H_{n}^{\kappa}\rbrace_{n\geq 1}$, be a sequence of functions in $C_{c}^{1,\infty}([0,T],(0,1))$ converging to $\hat \delta^{\kappa}$ with respect to the norm of $L^{2}(0,T;\mc H^{\gamma/2}_{0})$. Now, for $n\geq 1$ we define the test function $\hat G_{n}^{\kappa}(s,u) = \int^{t}_{s} \hat H_{n}^{\kappa}(r,u)dr $. Plugging $\hat G_{n}^{\kappa}$ into (\ref{vf^k-vf^infty}) and {using a similar argument as in proof of Lemma \ref{lem2_rho^kappa_to_rho^0_L^2}} we get that  

\begin{equation*}
\begin{split}
& 
 \int_{0}^t \Vert \hat\delta_{s}^{\kappa} \Vert^{2} \, ds  + \dfrac{1}{2\kappa}  \left \Vert \int_{0}^{t} \hat\delta_{s}^{\kappa}ds \right\Vert_{\gamma/2}^{2} +  \dfrac{1}{2}  \left \Vert \int_{0}^{t} \hat\delta_{s}^{\kappa}ds \right\Vert_{V_{1}}^{2} = \dfrac{1}{\kappa}\int_{0}^t\left\langle \rho^{\infty}_{s}, \int_{s}^{t} \hat\delta_{r}^{\kappa} dr\right\rangle_{\gamma/2} ds.
\end{split}   
\end{equation*}
By neglecting terms we get that 
\begin{equation*}
   \int_{0}^t \Vert \hat\rho_{s}^{\kappa}-\rho^{\infty}_{s} \Vert^{2} \, ds =\int_{0}^t \Vert \hat\delta_{s}^{\kappa} \Vert^{2} \, ds \leq \dfrac{1}{\kappa}\int_{0}^t\left\langle \rho^{\infty}_{s}, \int_{s}^{t} \hat\delta_{r}^{\kappa} dr\right\rangle_{\gamma/2} ds.
\end{equation*}
Then it is suffices to show that
\begin{equation*}
\dfrac{1}{\kappa}\int_{0}^t\left\langle \rho^{\infty}_{s}, \int_{s}^{t} \hat\delta_{r}^{\kappa} dr\right\rangle_{\gamma/2} ds\ls\dfrac{1}{\sqrt{\kappa}}
\end{equation*}
Indeed, by using twice the Cauchy-Schwarz's inequality we have that the term at the left hand side of the previous expression 
is bounded from above by
\begin{equation*} 
\dfrac{1}{\kappa}\int_{0}^t \Vert \rho_{s}^{\infty}\Vert_{\gamma/2} \left\Vert \int_{s}^{t} \hat\delta_{r}^{\kappa} dr\right\Vert_{\gamma/2} ds  \leq \dfrac{1}{\kappa} \sqrt{\int_{0}^t \Vert \rho_{s}^{\infty}\Vert^{2}_{\gamma/2} ds}\sqrt{\int_{0}^t\left\Vert \int_{s}^{t} \hat\delta_{r}^{\kappa} dr\right\Vert_{\gamma/2}^{2} ds}.  
\end{equation*}
 Since {by hypothesis} $\rho_{0}-\bar\rho^{\infty} \in \mc H^{\gamma/2}$ we know from item $i)$ of Lemma \ref{lem1_existence_rho^infty} that $\rho^{\infty}\in L^{2}(0,T;\mc H^{\gamma/2})$. Thus, from the latter and by the triangular inequality, the right hand side in the previous expression can be bounded from above {by a constant times}
\begin{equation*} 
\dfrac{1}{\kappa} \sqrt{\int_{0}^t\left( \int_{s}^{t} \Vert\hat\delta_{r}^{\kappa}\Vert_{\gamma/2} dr\right)^{2} ds} \,\ls\, \dfrac{1}{\kappa} \sqrt{t\left( \int_{0}^{t} \Vert\hat\delta_{r}^{\kappa}\Vert_{\gamma/2} dr\right)^{2}}.
\end{equation*}
By using again the Cauchy-Schwarz's inequality, the term on the right hand side in the last expression is bounded from above by
\begin{equation*}
\begin{split}
\dfrac{1}{\kappa} \sqrt{t^{2} \int_{0}^{t} \Vert\hat\delta_{r}^{\kappa}\Vert_{\gamma/2}^{2} dr}
&= \dfrac{1}{\kappa} \sqrt{t^{2} \int_{0}^{t} \Vert\hat\rho_{r}^{\kappa}-\rho^{\infty}_{r}\Vert_{\gamma/2}^{2} dr}\\
& \ls\, \dfrac{1}{\kappa} \sqrt{2t^{2} \int_{0}^{t} \Vert\hat\rho_{r}^{\kappa}\Vert_{\gamma/2}^{2}+\Vert\rho^{\infty}_{r}\Vert_{\gamma/2}^{2} dr}.
\end{split}
\end{equation*}
In the last inequality we used the Minkowski's inequality and the fact that $(a+b)^{2} \leq 2a^{2}+2b^{2}$. Now, since $ \int_{0}^{t} \Vert\hat\rho_{r}^{\kappa}\Vert_{\gamma/2}^{2}dr \ls\kappa$ {(this is due to item i) of Theorem \ref{Energy_Thm1} and a change of variables)} and $\rho^{\infty}\in L^{2}(0,T;\mc H^{\gamma/2})$  we can see that 
\begin{equation*}
\dfrac{1}{\kappa} \sqrt{2t^{2} \int_{0}^{t} \Vert\hat\rho_{r}^{\kappa}\Vert_{\gamma/2}^{2}+\Vert\rho^{\infty}_{r}\Vert_{\gamma/2}^{2} dr}\ls \dfrac{1}{\kappa} \sqrt{ \kappa + 1 } \ls \dfrac{1}{\sqrt{\kappa}},
\end{equation*}
 and we are done.
\end{proof}
\end{lem}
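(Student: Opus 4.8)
The plan is to reduce the claimed $L^2(0,T;L^2)$ convergence to the quantitative estimate
\begin{equation*}
\int_0^t \| \hat\rho^\kappa_s - \rho^\infty_s\|^2\,ds \;\lesssim\; \frac{1}{\sqrt{\kappa}}, \qquad t\in[0,T],
\end{equation*}
which forces convergence as $\kappa\to\infty$. The starting point is to rewrite the defining weak equation for $\rho^\kappa$ after the time change $\hat\rho^\kappa_t=\rho^\kappa_{t/\kappa}$: since $\partial_t\hat\rho^\kappa_t = \kappa^{-1}(\partial_s\rho^\kappa)_{s=t/\kappa}$ and $\LL_\kappa=\LL-\kappa V_1$, the operator $\kappa^{-1}\LL_\kappa$ produced by the rescaling becomes $\kappa^{-1}\LL - V_1$ while the source $\kappa V_0$ becomes $V_0$. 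Using $V_1\bar\rho^\infty=V_0$, the shifted function $\hat\varphi^\kappa_t:=\hat\rho^\kappa_t-\bar\rho^\infty$ then solves a weak formulation whose diffusive part carries the small prefactor $\kappa^{-1}$ and whose reaction part $\langle\cdot,\cdot\rangle_{V_1}$ is unscaled. Comparing against the explicit solution $\varphi^\infty_t=(\rho_0-\bar\rho^\infty)e^{-tV_1}$ of the pure reaction equation and setting $\hat\delta^\kappa:=\hat\varphi^\kappa-\varphi^\infty$, subtraction leaves a weak equation for $\hat\delta^\kappa$ whose only inhomogeneity is the term $\kappa^{-1}\langle\rho^\infty_s,G_s\rangle_{\gamma/2}$ arising from the regional Laplacian applied to $\rho^\infty$.

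The core step is an energy identity obtained exactly as in Lemma \ref{lem2_rho^kappa_to_rho^0_L^2}: approximate $\hat\delta^\kappa$ in $L^2(0,T;\mc H_0^{\gamma/2})$ by smooth functions $\hat H_n^\kappa$, insert the primitive test function $\hat G_n^\kappa(s,u)=\int_s^t \hat H_n^\kappa(r,u)\,dr$ (which vanishes at time $t$ and satisfies $\partial_s\hat G_n^\kappa=-\hat H_n^\kappa$), and pass to the limit $n\to\infty$. Treating the $\partial_s$, $\LL$ and $\langle\cdot,\cdot\rangle_{V_1}$ contributions by repeated Cauchy--Schwarz estimates, and using the norm equivalence \eqref{norms_related} to control the $V_1$-pairings by the $\gamma/2$ seminorm when bounding the approximation errors, yields
\begin{equation*}
\int_0^t \|\hat\delta^\kappa_s\|^2\,ds + \frac{1}{2\kappa}\Big\|\int_0^t\hat\delta^\kappa_s\,ds\Big\|_{\gamma/2}^2 + \frac{1}{2}\Big\|\int_0^t\hat\delta^\kappa_s\,ds\Big\|_{V_1}^2 = \frac{1}{\kappa}\int_0^t\Big\langle \rho^\infty_s, \int_s^t\hat\delta^\kappa_r\,dr\Big\rangle_{\gamma/2}\,ds.
\end{equation*}
Dropping the two non-negative seminorm terms on the left bounds $\int_0^t\|\hat\delta^\kappa_s\|^2\,ds$ by the right-hand side.

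It remains to estimate that right-hand side, and this is where the delicate point lies. Two successive applications of Cauchy--Schwarz together with Minkowski's inequality bound it by $\frac{t}{\kappa}\big(\int_0^t\|\rho^\infty_s\|_{\gamma/2}^2\,ds\big)^{1/2}\big(\int_0^t\|\hat\delta^\kappa_r\|_{\gamma/2}^2\,dr\big)^{1/2}$, after which $\|\hat\delta^\kappa_r\|_{\gamma/2}^2\le 2\|\hat\rho^\kappa_r\|_{\gamma/2}^2+2\|\rho^\infty_r\|_{\gamma/2}^2$. The hypothesis $\rho_0-\bar\rho^\infty\in\mc H^{\gamma/2}$ is precisely what Lemma \ref{lem1_existence_rho^infty} requires to ensure $\rho^\infty\in L^2(0,T;\mc H^{\gamma/2})$, so the $\rho^\infty$ contributions are $O(1)$. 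The main obstacle is that $\hat\rho^\kappa$ does \emph{not} have a $\gamma/2$ seminorm bounded uniformly in $\kappa$; it in fact grows with $\kappa$. The resolution is to combine item i) of Theorem \ref{Energy_Thm1} (valid for $\theta=0$) with the time change: writing $\int_0^t\|\hat\rho^\kappa_r\|_{\gamma/2}^2\,dr=\kappa\int_0^{t/\kappa}\|\rho^\kappa_u\|_{\gamma/2}^2\,du$ and applying the energy estimate on the \emph{shrunken} interval $[0,t/\kappa]$ gives $\lesssim \kappa\cdot\frac{t}{\kappa}(\kappa+1)=t(\kappa+1)\lesssim\kappa$. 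Substituting this back produces $\int_0^t\|\hat\delta^\kappa_s\|^2\,ds\lesssim \kappa^{-1}\sqrt{\kappa}=\kappa^{-1/2}$, the desired bound. The only remaining technicality, handled exactly as in Lemma \ref{lem2_rho^kappa_to_rho^0_L^2}, is the $n\to\infty$ justification of the energy identity, which follows from Cauchy--Schwarz and the $L^2(0,T;\mc H_0^{\gamma/2})$-convergence of $\hat H_n^\kappa$.
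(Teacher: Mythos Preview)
Your proposal is correct and follows essentially the same approach as the paper: the same time-rescaled weak formulation for $\hat\varphi^\kappa$, the same difference equation for $\hat\delta^\kappa$, the same primitive test function $\hat G_n^\kappa(s,u)=\int_s^t\hat H_n^\kappa(r,u)\,dr$ leading to the identical energy identity, and the same Cauchy--Schwarz/Minkowski chain closing with $\int_0^t\|\hat\rho^\kappa_r\|_{\gamma/2}^2\,dr\lesssim\kappa$ via the change of variables and item i) of Theorem \ref{Energy_Thm1}. Your explicit remark that the energy estimate must be applied on the shrunken interval $[0,t/\kappa]$ is exactly the mechanism the paper uses (and leaves implicit).
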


\subsection{Proof of item i) of Theorem \ref{convergence_rho^k_to_rho^0}.}
Recall $\vf_{t}^{\kappa}$ defined in \eqref{dvf}. Note that it is enough to show \eqref{Imp} with $\Vert \cdot \Vert$ replaced with $\Vert \cdot \Vert_{\gamma/2}$. 
{From (\ref{vf^k-vf^0}) we obtain, for $\ve >0$,} that
\begin{equation}\label{Diff}
\begin{split}
&\langle \delta_{t+\ve}^{\kappa} , G_{t+\ve} \rangle -\langle \delta_{t}^{\kappa}, G_{t} \rangle 
- \int_t^{t+\ve}\langle \delta_{s}^{\kappa} ,\left(\LL  + \partial_s\right) G_{s} \rangle \, ds  =- {\kappa}\int_t^{t+\ve}\langle \vf_{s}^{\kappa},G_s\rangle_{V_{1}} ds
\end{split}   
\end{equation}
for any function $G\in C_{c}^{1,\infty}([0,T]\times[0,1])$. Let $\lbrace H_{n}^{\kappa}\rbrace_{n\geq 1}$ be a sequence of functions in $C_{c}^{1,\infty}([0,T],(0,1))$ converging to $\delta^{\kappa}$ with respect to the norm of $L^{2}(0,T;\mc H^{\gamma/2}_{0})$ as $n\to \infty$. Now, for $n\geq 1$, we define the test function $G_{n}^{\kappa}(u) = \tfrac{1}{\ve}\int_{t}^{t+\ve} H_{n}^{\kappa}(r,u)dr $. Plugging $G_{n}^{\kappa}$ into last equality and taking $n\to\infty$, {a similar argument to the one of the proof of  Lemma \ref{lem2_rho^kappa_to_rho^0_L^2}} allows to get  
\begin{equation*}
\begin{split}
&\frac{1}{\ve}\left\langle \delta_{t+\ve}^{\kappa} -\delta_{t}^{\kappa}, \int_{t} ^{t+\ve} \delta_{r}^{\kappa} dr \right\rangle +\ve  \left \Vert\frac{1}{\ve} \int_{t}^{t+\ve}\delta_{r}^{\kappa} dr \right\Vert^{2}_{\gamma/2} 
= {\kappa}\int_t^{t+\ve}\left\langle \vf_{s}^{\kappa},\frac{1}{\ve}\int_{t} ^{t+\ve} \delta_{r}^{\kappa} dr\right\rangle_{V_{1}} ds.
\end{split}   
\end{equation*}
Integrating {last equality} over $[0,\tilde{t}]$ we get:
\begin{multline}\label{Fvf1}
\ve \int_0^{\tilde{t}} \left \Vert\frac{1}{\ve} \int_{t}^{t+\ve}\delta_{r}^{\kappa} dr \right\Vert^{2}_{\gamma/2} dt = {\kappa}\int_0^{\tilde{t}}\int_t^{t+\ve}\left\langle \vf_{s}^{\kappa},\frac{1}{\ve}\int_{t} ^{t+\ve} \delta_{r}^{\kappa}dr\right\rangle_{V_{1}} ds \, dt\\-\frac{1}{\ve}\int_0^{\tilde{t}}\left\langle \delta_{t+\ve}^{\kappa} -\delta_{t}^{\kappa}, \int_{t} ^{t+\ve} \delta_{r}^{\kappa} dr \right\rangle  dt . 
\end{multline}   
Now we use the Cauchy-Schwarz's inequality, Hardy's inequality and  (\ref{EE}) {to get that}  
\begin{equation}\label{B4}
\begin{split}
&  {\kappa}\int_0^{\tilde{t}}\int_t^{t+\ve}\left\langle \vf_{s}^{\kappa},\frac{1}{\ve}\int_{t} ^{t+\ve} \delta_{r}^{\kappa}dr\right\rangle_{V_{1}} ds \, dt\ls {\kappa} \int_0^{\tilde{t}}\int_t^{t+\ve}\Vert \vf_{s}^{\kappa}\Vert_{\gamma/2} \left\Vert \frac{1}{\ve}\int_{t} ^{t+\ve} \delta_{r}^{\kappa}dr\right\Vert_{\gamma/2} ds \, dt\\
\ls & {\kappa} \sqrt{\int_0^{\tilde{t}}\int_t^{t+\ve}\Vert \vf_{s}^{\kappa}\Vert_{\gamma/2}^{2} ds dt} \sqrt{ \int_0^{\tilde{t}}\int_t^{t+\ve}\left\Vert \frac{1}{\ve}\int_{t} ^{t+\ve} \delta_{r}^{\kappa}dr\right\Vert_{\gamma/2}^{2} ds \, dt }\\
\ls &\kappa \ve \sqrt {{\tilde{t}}}\sqrt {\int_0^{\tilde{t}} \left \Vert\frac{1}{\ve} \int_{t}^{t+\ve}\delta_{r}^{\kappa} dr \right\Vert^{2}_{\gamma/2} dt}.
\end{split}   
\end{equation}
Let us estimate the second term on the right hand side  (\ref{Fvf1}). First note that by  changing variables we have that
\begin{equation}\label{B01}
\begin{split}
-\frac{1}{\ve}\int_0^{\tilde{t}}\left\langle \delta_{t+\ve}^{\kappa} -\delta_{t}^{\kappa}, \int_{t} ^{t+\ve} \delta_{r}^{\kappa} dr \right\rangle  dt& =  \frac{1}{\ve}\int_0^{\tilde{t}}\int_{t} ^{t+\ve}\langle \delta_{t}^{\kappa},  \delta_{r}^{\kappa}  \rangle dr dt-\frac{1}{\ve}\int_0^{\tilde{t}}\int_{t} ^{t+\ve} \langle \delta_{t+\ve}^{\kappa}, \delta_{r}^{\kappa} \rangle  dr dt\\
&= \frac{1}{\ve}\int_0^{\tilde{t}}\int_{r} ^{r+\ve}\langle \delta_{t}^{\kappa},  \delta_{r}^{\kappa}  \rangle dt dr-\frac{1}{\ve}\int_\ve^{\tilde{t}+\ve}\int_{t -\ve} ^{t} \langle \delta_{t}^{\kappa}, \delta_{r}^{\kappa} \rangle  dr dt \\
\end{split}
\end{equation}
The term $\frac{1}{\ve}\int_0^{\tilde{t}}\int_{r} ^{r+\ve}\langle \delta_{t}^{\kappa},  \delta_{r}^{\kappa}  \rangle dt dr$ can be split as 
\begin{equation*}
\frac{1}{\ve} \left( \int_0^{\ve}\int_{r} ^{\ve} \langle \delta_{t}^{\kappa}, \delta_{r}^{\kappa} \rangle  dt dr + \int_0^{\ve}\int_{\ve}^{r+\ve} \langle \delta_{t}^{\kappa}, \delta_{r}^{\kappa} \rangle  dt dr + \int_{\ve}^{\tilde{t}}\int_{r}^{r+\ve}\langle \delta_{t}^{\kappa}, \delta_{r}^{\kappa} \rangle  dt dr\right).
\end{equation*}
{By Fubini's theorem}, we have that the term $\frac{1}{\ve} \int_\ve^{\tilde{t}+\ve}\int_{t -\ve} ^{t} \langle \delta_{t}^{\kappa}, \delta_{r}^{\kappa} \rangle  dr dt$ which appears in (\ref{B01}) is equal to
\begin{equation*}
\frac{1}{\ve} \left( \int_0^{\ve}\int_{\ve} ^{r+\ve} \langle \delta_{t}^{\kappa}, \delta_{r}^{\kappa} \rangle  dt dr + \int_\ve^{\tilde{t}}\int_{r}^{r+\ve} \langle \delta_{t}^{\kappa}, \delta_{r}^{\kappa} \rangle  dt dr + \int_{\tilde{t}}^{\tilde{t}+\ve}\int_{r}^{\tilde{t}+\ve}\langle \delta_{t}^{\kappa}, \delta_{r}^{\kappa} \rangle  dt dr\right).
\end{equation*}

Therefore we can write the second term on the right hand side of  (\ref{Fvf1}) as
\begin{equation}\label{B4_1}
\begin{split}
&-\frac{1}{\ve}\int_{{\tilde{t}}}^{{\tilde{t}}+\ve}\int_r^{{\tilde{t}}+\ve}\langle \delta_{t}^{\kappa},\delta_{r}^{\kappa} \rangle dt\, dr +\frac{1}{\ve}\int_0^{\ve}\int_r^{\ve}\langle \delta_{t}^{\kappa},\delta_{r}^{\kappa}\rangle dt\, dr\\
&\leq \frac{1}{\ve}\int_{{\tilde{t}}}^{{\tilde{t}}+\ve}\int_{\tilde{t}}^{{\tilde{t}}+\ve} \| \delta_{t}^{\kappa}\| \|\delta_{r}^{\kappa}\| dt\, dr + \frac{1}{\ve}\int_0^{\ve}\int_0^{\ve}\| \delta_{t}^{\kappa}\| \|\delta_{r}^{\kappa}\| dt\, dr\\
& =\frac{1}{\ve}\left(\int_{\tilde{t}}^{{\tilde{t}}+\ve}  \|\delta_{t}^{\kappa}\|\, dt \right)^{2}+ \frac{1}{\ve}\left(\int_0^{\ve} \|\delta_{t}^{\kappa}\|\, dt\right)^{2}\\ 
& \leq \int_{{\tilde{t}}}^{{\tilde{t}}+\ve}\| \delta_{t}^{\kappa}\|^{2} dt +\int_0^{\ve}\| \delta_{t}^{\kappa}\|^{2}  dt.\\
\end{split}
\end{equation}
{where in the inequalities above we used} the Cauchy-Schwarz's inequality. Then, using (\ref{B4}) and (\ref{B4_1}) in (\ref{Fvf1}) we obtain that 
\begin{equation}
\begin{split}
\int_0^{\tilde{t}} \left \Vert\frac{1}{\ve} \int_{t}^{t+\ve}\delta_{r}^{\kappa} dr \right\Vert^{2}_{\gamma/2} dt &\ls  \kappa \sqrt {{\tilde{t}}}\sqrt {\int_0^{\tilde{t}} \left \Vert\frac{1}{\ve} \int_{t}^{t+\ve}\delta_{r}^{\kappa} dr \right\Vert^{2}_{\gamma/2} dt} \\
&+ \dfrac{1}{\ve}\int_{{\tilde{t}}}^{{\tilde{t}}+\ve}\| \delta_{t}^{\kappa}\|^{2} dt +\dfrac{1}{\ve}\int_0^{\ve}\| \delta_{t}^{\kappa}\|^{2}  dt.
\end{split}   
\end{equation}
Taking $\ve\to 0 $, using  Lebesgue's differentiation theorem (see Theorem 1.35 in  \cite{Roub})  and the fact that $\delta_{0}^{\kappa} = 0$ {(since the initial condition for $\rho^\kappa$ and $\rho^0$ is the same)} we get that 
\begin{equation*}
\begin{split}
\int_0^{\tilde{t}} \Vert\delta_{t}^{\kappa}  \Vert^{2}_{\gamma/2} dt &\ls \kappa \sqrt {{\tilde{t}}}\sqrt {\int_0^{\tilde{t}}  \Vert \delta_{t}^{\kappa}\Vert^{2}_{\gamma/2} dt} +\| \delta_{\tilde{t}}^{\kappa}\|^{2},
\end{split}   
\end{equation*}
for all $\tilde{t} \in [0,T]$. Integrating {last inequality} over $[0,T]$ and using the Cauchy-Schwarz's inequality and using (\ref{Imp}) we conclude that
\begin{equation}
\begin{split}\label{B6}
\int^{T}_{0}\int_0^{\tilde{t}} \Vert\delta_{t}^{\kappa}  \Vert^{2}_{\gamma/2} dtd\tilde{t} 
&\ls \kappa T \sqrt {\int_{0}^{T}\int_0^{\tilde{t}}  \Vert \delta_{t}^{\kappa}\Vert^{2}_{\gamma/2} dtd\tilde{t}} + \kappa T^{2},
\end{split}   
\end{equation}
in the last inequality we have used (\ref{Imp}). Then, by a simple computation  we have that
\begin{equation}\label{B6.1}
\begin{split}
\int_{0}^{T}\int_0^{\tilde{t}} \Vert\delta_{t}^{\kappa}  \Vert^{2}_{\gamma/2} dtd\tilde{t} &\ls \kappa T^{2}.
\end{split}   
\end{equation}
{By Fubini's theorem}, we get that 
\begin{equation}\label{B7}
\int_0^{T} \int_{0}^{\tilde{t}} \Vert \delta_{t}^{\kappa} \Vert^{2}_{\gamma/2} \, dt d\tilde{t}=\int_0^T(T-t) \Vert \delta_{t}^{\kappa} \Vert^{2}_{\gamma/2} \, dt \geq\frac{T}{2}\int_0^{T/2} \Vert \delta_{t}^{\kappa}\Vert^{2}_{\gamma/2} \, dt.
\end{equation}
The result now follows from (\ref{B6.1}) and (\ref{B7}). 
\begin{flushright}
$\qed$
\end{flushright}

\subsection{Proof of item ii) of Theorem \ref{convergence_rho^k_to_rho^0}} 
Recall $\hat\vf_{t}^{\kappa}$ and $\vf_{t}^{\infty}$ defined in Lemma \ref{lem2_rho^kappa_to_rho^infty_L^2}. It is enough to show (\ref{Imp_Inf}) with $\Vert \cdot\Vert$ replaced with $\Vert \cdot\Vert_{V_{1}}$:
\begin{equation}\label{Imp2_infty}
\int_0^T  \Vert \hat\vf_{t}^{\kappa} -\vf_{t}^{\infty} \Vert^{2}_{V_{1}} \,dt \ls \dfrac{1}{\sqrt{\kappa}}.
\end{equation}
{From (\ref{vf^k-vf^infty}), we obtain, for $\ve >0$, that}
\begin{multline}\label{Diff_infty}
\langle \hat\delta_{t+\ve}^{\kappa} , G_{t+\ve} \rangle -\langle \hat\delta_{t}^{\kappa}, G_{t} \rangle 
- \int_t^{t+\ve}\langle \hat\delta_{s}^{\kappa} ,\Big(\dfrac{1}{\kappa}\LL  + \partial_s\Big) G_{s} \rangle \, ds\\ +\int_t^{t+\ve}\langle \hat\delta_{s}^{\kappa} , G_{s} \rangle_{V_{1}} \, ds  = \dfrac{1}{\kappa}\int_t^{t+\ve}\langle \rho_{s}^{\infty},G_s\rangle_{\gamma/2} ds  
\end{multline}
for any function $G\in C_{c}^{1,\infty}([0,T]\times[0,1])$. Let $\lbrace\hat H_{n}^{\kappa}\rbrace_{n\geq 1}$ be a sequence of functions in $C_{c}^{1,\infty}([0,T],(0,1))$ converging to $\hat\delta^{\kappa}$ with respect to the norm of $L^{2}(0,T;\mc H^{\gamma/2}_{0})$ as $n\to\infty$. Now, for $n\geq 1$ we define the test functions $\hat G_{n}^{\kappa}(u) = \tfrac{1}{\ve}\int_{t}^{t+\ve} \hat H_{n}^{\kappa}(r,u)dr $. Plugging $\hat G_{n}^{\kappa}$ into (\ref{Diff_infty}) and taking $n\to\infty$, a similar argument to the one of the {proof of  Lemma \ref{lem2_rho^kappa_to_rho^0_L^2} allows to get}  
\begin{multline} \label{Fvf_infty}
\frac{1}{\ve}\left\langle \hat\delta_{t+\ve}^{\kappa} -\hat\delta_{t}^{\kappa}, \int_{t} ^{t+\ve} \hat\delta_{r}^{\kappa} dr \right\rangle +\dfrac{\ve}{\kappa}  \left \Vert\frac{1}{\ve} \int_{t}^{t+\ve}\hat\delta_{r}^{\kappa} dr \right\Vert^{2}_{\gamma/2} \\ +\ve \left \Vert\frac{1}{\ve} \int_{t}^{t+\ve}\hat\delta_{r}^{\kappa} dr \right\Vert^{2}_{V_{1}}
= \dfrac{1}{\kappa}\int_t^{t+\ve}\left\langle \rho_{s}^{\infty},\frac{1}{\ve}\int_{t} ^{t+\ve} \hat\delta_{r}^{\kappa} dr\right\rangle_{\gamma/2} ds.   
\end{multline}
By neglecting the term $\dfrac{\ve}{\kappa}  \left \Vert\frac{1}{\ve} \int_{t}^{t+\ve}\hat\delta_{r}^{\kappa} dr \right\Vert^{2}_{\gamma/2} $ in (\ref{Fvf_infty}) and then integrating over $[0,\tilde{t}]$   we get that
\begin{multline}\label{Fvf1_infty}
 \ve\int_0^{\tilde{t}} \left \Vert\frac{1}{\ve} \int_{t}^{t+\ve}\hat\delta_{r}^{\kappa} dr \right\Vert^{2}_{V_{1}} dt \leq \dfrac{1}{\kappa}\int_0^{\tilde{t}}\int_t^{t+\ve}\left\langle \rho_{s}^{\infty},\frac{1}{\ve}\int_{t} ^{t+\ve} \hat\delta_{r}^{\kappa}dr\right\rangle_{\gamma/2} ds \, dt\\-\frac{1}{\ve}\int_0^{\tilde{t}}\left\langle \hat\delta_{t+\ve}^{\kappa} -\hat\delta_{t}^{\kappa}, \int_{t} ^{t+\ve} \hat\delta_{r}^{\kappa} dr \right\rangle  dt. 
\end{multline}   
Now we use twice the Cauchy-Schwarz's inequality in order to get that the first term on the right hand side in the previous expression is bounded from above by 
\begin{equation}\label{B4_infty}
\begin{split}
& \dfrac{1}{\kappa} \int_0^{\tilde{t}}\int_t^{t+\ve}\Vert \rho_{s}^{\infty}\Vert_{\gamma/2} \left\Vert \frac{1}{\ve}\int_{t} ^{t+\ve} \hat\delta_{r}^{\kappa}dr\right\Vert_{\gamma/2} ds \, dt\\
&\leq \dfrac{1}{\kappa} \sqrt{\int_0^{\tilde{t}}\int_t^{t+\ve}\Vert \rho_{s}^{\infty}\Vert_{\gamma/2}^{2} ds dt} \sqrt{ \int_0^{\tilde{t}}\int_t^{t+\ve}\left\Vert \frac{1}{\ve}\int_{t} ^{t+\ve} \hat\delta_{r}^{\kappa}dr\right\Vert_{\gamma/2}^{2} ds \, dt }\\
&\leq \dfrac{\sqrt{\ve}}{\kappa}\sqrt{\int_0^{\tilde{t}}\int_t^{t+\ve}\Vert \rho_{s}^{\infty}\Vert_{\gamma/2}^{2} ds dt}\sqrt {\int_0^{\tilde{t}} \left \Vert\frac{1}{\ve} \int_{t}^{t+\ve}\hat\delta_{r}^{\kappa} dr \right\Vert^{2}_{\gamma/2} dt}.
\end{split}   
\end{equation}
By a similar argument  {as the one} in the proof of item i) of Theorem \ref{convergence_rho^k_to_rho^0}  we have  that the second term on the right hand side in (\ref{Fvf1_infty}) is bounded from above by 
\begin{equation}\label{B41_infty}
\dfrac{1}{\ve}\int_{{\tilde{t}}}^{{\tilde{t}}+\ve}\| \hat\delta_{t}^{\kappa}\|^{2} dt +\dfrac{1}{\ve}\int_0^{\ve}\| \hat\delta_{t}^{\kappa}\|^{2}  dt.
\end{equation}
Therefore, by using (\ref{B4_infty}) and (\ref{B41_infty}) in (\ref{Fvf1_infty}) we get that  
\begin{equation}
\begin{split}
\int_0^{\tilde{t}} \left \Vert\frac{1}{\ve} \int_{t}^{t+\ve}\hat\delta_{r}^{\kappa} dr \right\Vert^{2}_{V_{1}} dt &\leq  \dfrac{1}{\kappa}\sqrt{\int_0^{\tilde{t}}\dfrac{1}{\ve}\int_t^{t+\ve}\Vert \rho_{s}^{\infty}\Vert_{\gamma/2}^{2} ds dt}\sqrt {\int_0^{\tilde{t}} \left \Vert\frac{1}{\ve} \int_{t}^{t+\ve}\hat\delta_{r}^{\kappa} dr \right\Vert^{2}_{\gamma/2} dt} \\
&+ \dfrac{1}{\ve}\int_{{\tilde{t}}}^{{\tilde{t}}+\ve}\| \hat\delta_{t}^{\kappa}\|^{2} dt +\dfrac{1}{\ve}\int_0^{\ve}\| \hat\delta_{t}^{\kappa}\|^{2}  dt.
\end{split}   
\end{equation}
Taking $\ve\to 0 $, using  Lebesgue's differentiation theorem (see Theorem 1.35 in  \cite{Roub}) and the fact that $\hat\delta_{0}^{\kappa} = 0$ we get that 
\begin{equation*}
\begin{split}
\int_0^{\tilde{t}} \Vert\hat\delta_{t}^{\kappa}  \Vert^{2}_{V_{1}} dt &\leq  \dfrac{1}{\kappa}\sqrt{\int_0^{\tilde{t}} \Vert \rho_{t}^{\infty}\Vert_{\gamma/2}^{2}  dt}\sqrt {\int_0^{\tilde{t}}  \Vert \hat\delta_{t}^{\kappa}\Vert^{2}_{\gamma/2} dt} +\| \hat\delta_{\tilde{t}}^{\kappa}\|^{2},
\end{split}   
\end{equation*}
for all $\tilde{t} \in [0,T]$. Integrating the {previous expression} over $[0,T]$ and using the Cauchy-Schwarz's inequality we get that
\begin{equation}
\begin{split}\label{B6_infty}
\int^{T}_{0}\int_0^{\tilde{t}} \Vert\hat\delta_{t}^{\kappa}  \Vert^{2}_{V_{1}} dtd\tilde{t} &\leq \dfrac{1}{\kappa}\sqrt {\int_{0}^{T}\int_0^{\tilde{t}}  \Vert \rho_{t}^{\infty}\Vert^{2}_{\gamma/2} dtd\tilde{t}} \sqrt {\int_{0}^{T}\int_0^{\tilde{t}}  \Vert \hat\delta_{t}^{\kappa}\Vert^{2}_{\gamma/2} dtd\tilde{t}} + \int_{0}^{T}\| \hat\delta_{\tilde{t}}^{\kappa}\|^{2}d\tilde{t}\\
&\ls  \dfrac{1}{\kappa} \sqrt {\int_{0}^{T}\int_0^{T}  \Vert \hat\delta_{t}^{\kappa}\Vert^{2}_{\gamma/2} dtd\tilde{t}} +\dfrac{1}{\sqrt{\kappa}},\\
&\ls  \dfrac{1}{\kappa} \sqrt {2T \int_0^{T}  \Vert \hat\rho_{t}^{\kappa}\Vert^{2}_{\gamma/2}+ \Vert \rho_{t}^{\infty}\Vert^{2}_{\gamma/2} dt} +\dfrac{1}{\sqrt{\kappa}},\\
&\ls  \dfrac{1}{\kappa} \sqrt {(\kappa +2) } +\dfrac{1}{\sqrt{\kappa}}.
\end{split}   
\end{equation}
In the second inequality {above} we used the fact that $\rho^{\infty}\in L^{2}(0,T;\mc H^{\gamma/2})$ (see item $i)$ of Lemma \ref{lem1_existence_rho^infty}) and  (\ref{Imp2_infty}), {while} in the third inequality of we used Minkoski's inequality and the fact that $(a+b)^{2}\leq 2a^{2} +2b^{2}$. And finally, the last 
inequality of (\ref{B6_infty}) is true since $\rho^{\infty}\in L^{2}(0,T;\mc H^{\gamma/2})$ and item i) of Theorem \ref{Energy_Thm1}.

Then, by a simple computation  we have that
\begin{equation}\label{B6.1_infty}
\begin{split}
\int_{0}^{T}\int_0^{\tilde{t}} \Vert\hat\delta_{t}^{\kappa}  \Vert^{2}_{V_{1}} dtd\tilde{t} &\ls \dfrac{ 1}{\sqrt{\kappa}}.
\end{split}   
\end{equation}
By Fubini's theorem, we have that  
\begin{equation}\label{B7_infty}
\int_0^{T} \int_{0}^{\tilde{t}} \Vert \hat\delta_{t}^{\kappa} \Vert^{2}_{V_{1}} \, dt d\tilde{t} =\int_0^T(T-t) \Vert \hat\delta_{t}^{\kappa} \Vert_{V_{1}}^{2} \, dt \geq \frac{T}{2}\int_0^{T/2} \Vert \hat\delta_{t}^{\kappa}\Vert^{2}_{V_{1}} \, dt. 
\end{equation}
The result now follows from (\ref{B6.1_infty}) and (\ref{B7_infty}). 

\begin{flushright}
$\qed$
\end{flushright}
\section{Proof of Theorem \ref{Existence_uniqueness_convergence_stationary}} 
\label{sec:Exis_unq_sta_sol}
In this section  we prove items i) and ii) of  Theorem \ref{Existence_uniqueness_convergence_stationary}.
Now we are interested in analyzing the convergence of the stationary solution $\bar\rho^\kappa$ as $\kappa\to 0$ and $\kappa\to\infty$. From Definition \ref{Def. Stationary_RFRD}, for $\kappa \geq 0,$ and  for $\bar\vf^{\kappa} = \bar\rho^{\kappa}-\bar\rho^{\infty}$ we have that $\bar\vf^{\kappa} \in \mc H^{\gamma/2}_{0}$ and 
\begin{equation}\label{weak_sta_eq}
\langle \bar\vf^{\kappa} ,-\LL G\rangle + \kappa \langle \bar\vf^{\kappa},G \rangle_{V_{1}} = I_{\bar\rho^{\infty}}(G),  
\end{equation}
 for any test function $G$ of compact support included in $(0,1)$. Above $I_{\bar\rho^{\infty}}:\mathcal{ H}^{\gamma/2}_{0}\to \RR$ is a linear form defined by $I_{\bar\rho^{\infty}}(G)=\langle \bar\rho^{\infty},\LL G \rangle$. Moreover, this linear form is continuous. Indeed, using integration by parts given in Proposition 3.3 in \cite{GM} we have that 
 \begin{equation}\label{ST1}
 \begin{split}
\vert I_{\bar\rho^{\infty}}(G)\vert &=  \left\vert \int_{0}^{1} \bar \rho^{\infty} (u) \LL G(u) du \right\vert \\
&=\dfrac{c_{\gamma}}{2}\left\vert  \iint_{[0,1]^{2}} \dfrac{( \bar\rho^{\infty} (u) - \bar\rho^{\infty} (v) )(G(u)-G(v))}{\vert u-v\vert^{\gamma+1}} dv du \right\vert  \\
&\leq \Vert  \bar\rho^{\infty}\Vert_{\gamma/2} \Vert  G \Vert_{\gamma/2} <\infty.
 \end{split}
 \end{equation}
Above we used the Cauchy-Schwarz's inequality and the fact that $\Vert  \bar\rho^{\infty}\Vert_{\gamma/2}$ is finite (see (\ref{bar_rho^infty_in_H})). Therefore, $ \vert I_{\rho^{\infty}}(G)\vert \ls \Vert G \Vert_{\mathcal{ H}^{\gamma/2}_{0}}.$ 

Then it is enough to analyze the behavior of $\bar\vf^{\kappa}$. We claim that we can take $G=\bar\vf^\kappa$ in (\ref{weak_sta_eq}). The justification is postponed to the end of the proof. Whence, from \eqref{ST1} we have that 
\begin{equation}\label{prop_u_kappa}
\|\bar\vf^\kappa\|_{\gamma/2}^2 + \kappa \|\bar\vf^\kappa\|_{V_{1}}^2 = I_{\bar\rho^{\infty}}(\bar\vf^\kappa)\ls\|\bar\vf^\kappa\|_{\gamma/2},
\end{equation}
from where we conclude that $ \|\bar\vf^\kappa\|_{\gamma/2} <\infty.$ Plugging this back into \eqref{prop_u_kappa} we get that
\begin{equation}\label{bound_u_kappa_V1}
 \|\bar\vf^\kappa\|_{V_{1}} \ls \frac{1}{\sqrt \kappa}. 
\end{equation}
Now, note that $\bar\vf^0 \in \mc H ^{\gamma/2}_{0} $ satisfies $\langle \bar\vf^{0} ,-\LL G\rangle = I_{\bar\rho^{\infty}}(G),$ for any function $G\in C^{\infty}_{c}((0,1))$.
Then $\bar\vf^\kappa-\bar\vf^0$ satisfies
\begin{equation*}
\langle \bar\vf^{\kappa}-\bar\vf^0 ,-\LL G\rangle + \kappa \langle \bar\vf^{\kappa},G \rangle_{V_{1}} = 0,  
\end{equation*}
for any function $G\in C^{\infty}_{c}((0,1))$. We claim that we can take $G=\bar\vf^\kappa-\bar\vf^0$ in the previous equality. The proof is analogous to the one done at the end of this section. Thus, we get that 
\begin{equation*}
\|\bar\vf^\kappa-\bar\vf^0\|_{\gamma/2}^2 = k\langle \bar\vf^\kappa, \bar\vf^0-\bar\vf^\kappa\rangle_{V_1}\leq {\kappa \|\bar\vf^\kappa\|_{V_1}\|\bar\vf^\kappa-\bar\vf^0\|_{V_1} }.
\end{equation*}
From \eqref{bound_u_kappa_V1} and fractional Hardy's inequality given in (\ref{norms_related}) we have that
\begin{equation*}
\|\bar\vf^\kappa-\bar\vf^0\|_{\gamma/2}^2 \ls {\sqrt{ \kappa} \|\bar\vf^\kappa-\bar\vf^0\|_{V_1} }\ls   \sqrt \kappa  \|\bar\vf^\kappa-\bar\vf^0\|_{\gamma/2},
\end{equation*}
from where we conclude that $ \|\bar\vf^\kappa-\bar\vf^0\|_{\gamma/2}\ls \sqrt \kappa.$  
  Then 
$\bar\vf^\kappa$ converges to $\bar\vf^0$, as $k\to 0$ in the $\|\cdot\|_{\gamma/2}$ norm. So far we proved item i).
\begin{rem}
From fractional Hardy's inequality (see \ref{norms_related}) the convergence is also true in $L^2_{V_1}$ and since  $$\|\bar\vf^\kappa-\bar\vf^0\|_{V_1}\geq  V_1(\tfrac{1}{2})\|\bar\vf^\kappa-\bar\vf^0\|$$ we conclude  that the convergence also holds in $L^2$. 
\end{rem}
For item ii),  by (\ref{bound_u_kappa_V1}) we get that $\Vert \bar\vf^{\kappa}\Vert_{V_{1}} \to 0$ and so $\Vert \bar\vf^{\kappa}\Vert\to 0$ as $k\to\infty$.

We conclude this proof by showing that we can take  $G = \bar\vf^{\kappa}$ in (\ref{weak_sta_eq}). Indeed, since $C_{c}^{\infty}((0,1))$ is dense in $\mathcal{H}^{\gamma/2}_{0}$, there exists a sequence $\lbrace\bar H_{n}^{\kappa}\rbrace_{n\ge 1}$ in $C_{c}^{\infty}((0,1))$ converging to $\bar\vf^{\kappa}$, i.e, $\Vert \bar H_{n}^{\kappa}-\bar\vf^{\kappa}\Vert_{\gamma/2} \to 0$ as $n\to \infty$. Observe that as a result of the latter and (\ref{norms_related}) we  also have $\Vert \bar H_{n}^{\kappa}-\bar\vf^{\kappa}\Vert_{V_{1}} \to 0$ as $n\to \infty$. Using the Cauchy-Schwarz's inequality we have that

\begin{equation*}
\begin{split} 
 \langle \bar\vf^{\kappa},\bar H^{\kappa}_{n} - \bar\vf^{\kappa}\rangle_{\gamma/2}& \leq \Vert \bar\vf^{\kappa}\Vert_{\gamma/2} \Vert \bar H^{\kappa}_{n} -\bar\vf^{\kappa}\Vert_{\gamma/2},\\  
\langle \bar\vf^{\kappa},\bar H^{\kappa}_{n} - \bar\vf^{\kappa}\rangle_{V_{1}} &\leq \Vert \bar\vf^{\kappa}\Vert_{V_{1}} \Vert \bar H^{\kappa}_{n} -\bar\vf^{\kappa}\Vert_{V_{1}},\\
 I_{\bar\rho^{\infty}}(\bar H^{\kappa }_{n}-\bar\vf^{\kappa}) &\leq \Vert \bar\rho^{\infty}\Vert_{\gamma/2} \Vert \bar H^{\kappa}_{n} -\bar\vf^{\kappa}\Vert_{\gamma/2}, 
\end{split}
\end{equation*} 
 all going to $0$ as $n\to\infty$.
Thus, we can rewrite (\ref{weak_sta_eq}) as 
\begin{eqnarray*}
\langle \bar\vf^{\kappa} ,-\LL \bar\vf^{\kappa}\rangle + \langle \bar\vf^{\kappa} ,-\LL ( \bar H^{\kappa}_{n} - \bar\vf^{\kappa})\rangle + \kappa ( \langle \bar\vf^{\kappa},\bar\vf^{\kappa} \rangle_{V_{1}} +\langle \bar\vf^{\kappa},\bar H^{\kappa}_{n} -\bar\vf^{\kappa} \rangle_{V_{1}} )= I_{\bar\rho^{\infty}}(\bar\vf^{\kappa}) + I_{\bar\rho^{\infty}}(\bar H^{\kappa}_{n}-\bar\vf^{\kappa}).  
\end{eqnarray*}
Now it is enough to take $n\to\infty$.
\begin{flushright}
$\qed$
\end{flushright}

\section{Uniqueness of weak solutions}
\label{sec:Uniqueness}
In this section we prove Lemmas \ref{lem:uniquess} and  \ref{existence and uniqueness}. For Lemma \ref{lem:uniquess}, we only focus  in the  proof of the uniqueness for the weak solutions of  (\ref{eq:Dirichlet Equation}) for $\hat\kappa =\kappa$. The proof of the uniqueness of the weak solutions of \eqref{eq:Dirichlet Equation} for $\kappa =0$ and \eqref{eq:Dirichlet Equation_infty} is analogous, the difference is that only the first two items in Lemma \ref{lem:unique-conv-007} below are required. Finally, in  Subsection \ref{subsec:EU} we prove Lemma \ref{existence and uniqueness}. 
\subsection{Proof of Lemma \ref{Existence_uniqueness_convergence_stationary}}\label{subsec:app-unique}
 Let $\rho^{\kappa,1}$ and $\rho^{\kappa,2}$ two weak solutions of (\ref{eq:Dirichlet Equation}) with the same initial condition and let us denote $\tilde\rho^{\kappa} = \rho^{\kappa,1} -\rho^{\kappa,2}$. For almost every $t\in [0,T]$, we identify ${ \tilde\rho^{\kappa}}_t$ with its continuous representation on $[0,1]$. Therefore, by Remark \ref{use:rem_dir} we have ${\tilde \rho}_t ^{\kappa}(0)={\tilde \rho}_t ^{\kappa} (1)=0$. Since ${\mc H}_0^{\gamma/2}$ is equal to the set of functions in ${\mc H}^{\gamma/2}$ vanishing at $0$ and $1$ we have that ${\tilde \rho}_t^{\kappa} \in {\mc H}_0^{\gamma/2}$ for a.e. time $t \in [0,T]$ and, in fact, ${\tilde \rho^{\kappa}} \in L^2 (0,T;{\mc H}_0^{\gamma/2})$. Moreover, for any $t \in [0,T]$ and all functions $G\in C_c^{1,\infty} ([0,T] \times (0,1))$ we have
\begin{equation}
\label{eq:unique007}
\langle {\tilde \rho^{\kappa}}_{t}, G_{t} \rangle - \int_0^t \left\langle {\tilde \rho}_{s}^{\kappa},\Big(\partial_s + \bb L \Big) G_{s} \right\rangle ds +\kappa \int^{t}_{0}\left\langle {\tilde \rho}_s ^{\kappa},  G_s  \right\rangle_{V_1} ds=0.
\end{equation}

Note that, it is easy to show that $C_c^{1,\infty} ([0,T] \times (0,1))$ is dense in $L^2 (0,T; {\mc H}_0^{\gamma/2})$. Let $\lbrace H_n^{\kappa}\rbrace_{n \ge 1}$ be a sequence of functions in  $C_c^{1,\infty} ([0,T] \times (0,1))$ converging to ${\tilde \rho^{\kappa}}$ with respect to the norm of $L^{2}(0,T;\mc{H}^{1/2}_{0})$ as $n\to \infty$. For $n\geq 1$, we define the test functions 
$\forall t \in [0,T], \quad \forall u \in [0,1], \quad G_n^{\kappa} (t,u)= \int_t^T  H_n^{\kappa} (s,u) \,ds.$ Plugging $G_n^{\kappa}$ into (\ref{eq:unique007}) and letting $n \to \infty$ we conclude by Lemma \ref{lem:unique-conv-007} below that
\begin{equation}
\int_0^T \Vert \tilde{\rho}^{\kappa}_{s}\Vert^{2} ds + \cfrac{1}{2} \; \Big\| \int_0^T {\tilde \rho}_s ^{\kappa}ds \Big\|^2_{\gamma/2} + \cfrac{\kappa}{2}\;  \Big\|\int_0^T {\tilde \rho}_s^{\kappa} ds  \, \Big\|_{V_{1}}^2 =0.
\end{equation}
Recall that  $\langle \cdot, \cdot \rangle_{V_{1}}$ (resp. $\| \cdot\|_{V_{1}}$) is the scalar product (resp. the norm) corresponding to the Hilbert space  $L^2_{V_{1}}$.

Then, it follows that for almost every time $s \in [0,T]$ the continuous function $\tilde \rho^{\kappa}_{s}$ is equal to $0$ and we conclude the uniqueness of the weak solutions to (\ref{eq:Dirichlet Equation}).

\begin{lem}
\label{lem:unique-conv-007}
Let $\lbrace G_n^{\kappa}\rbrace_{n\geq n}$ be defined as above. We have
\begin{enumerate}[i)]
\item $\displaystyle\lim_{n \to \infty}  \int_0^T \left\langle {\tilde \rho}_s^{\kappa}  , (\partial_s G_n^{\kappa}) (s,\cdot) \right\rangle ds = - \int_0^T  \Vert \tilde{\rho}^{\kappa}_{s}\Vert^{2} ds$. 
\item $\displaystyle\lim_{n \to \infty} \int_0^T \left\langle {\tilde \rho}_{s}^{\kappa},  \bb L G_{n}^{\kappa} (s,\cdot) \right\rangle ds =- \cfrac{1}{2} \; \Big\| \int_0^T {\tilde \rho}_s^{\kappa} ds \Big\|^2_{\gamma/2}.$ 
\item $\displaystyle\lim_{n \to \infty} \int^{T}_{0}\left\langle  {\tilde \rho}_s ^{\kappa},  G_n^{\kappa} (s,\cdot)  \right\rangle ds = \cfrac{1}{2}\;  \Big\|\int_0^T {\tilde \rho}_s ^{\kappa}ds  \, \Big\|_{V_{1}}^2 < \infty.$
\end{enumerate}
\end{lem}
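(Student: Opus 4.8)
The plan is to prove the three limits by the same mechanism that underlies all the approximation arguments already used in Lemmas~\ref{lem2_rho^kappa_to_rho^0_L^2} and in the proof of Theorem~\ref{convergence_rho^k_to_rho^0}: plug the antiderivative test function $G_n^{\kappa}(s,\cdot)=\int_s^T H_n^{\kappa}(r,\cdot)\,dr$ into each bilinear expression, split off the error term $H_n^{\kappa}-\tilde\rho^{\kappa}$, and control that error by Cauchy--Schwarz in $L^2(0,T;\mc H_0^{\gamma/2})$ (using the fractional Hardy inequality \eqref{norms_related} where the relevant norm is $\|\cdot\|_{V_1}$ rather than $\|\cdot\|_{\gamma/2}$). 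First I would record the elementary facts that $\partial_s G_n^{\kappa}(s,\cdot)=-H_n^{\kappa}(s,\cdot)$ and that, since $\tilde\rho^{\kappa}\in L^2(0,T;\mc H_0^{\gamma/2})$, all the integrals below are finite; in particular $\|\tilde\rho^{\kappa}\|_{L^2(0,T;L^2)}<\infty$ by \eqref{norms_related}.

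For i), write $\int_0^T\langle\tilde\rho_s^{\kappa},\partial_s G_n^{\kappa}(s,\cdot)\rangle\,ds=-\int_0^T\langle\tilde\rho_s^{\kappa},H_n^{\kappa}(s,\cdot)\rangle\,ds$, then add and subtract $\tilde\rho_s^{\kappa}$ inside the second slot to get $-\int_0^T\|\tilde\rho_s^{\kappa}\|^2\,ds-\int_0^T\langle\tilde\rho_s^{\kappa},H_n^{\kappa}(s,\cdot)-\tilde\rho_s^{\kappa}\rangle\,ds$; the last term is bounded by $\sqrt{\int_0^T\|\tilde\rho_s^{\kappa}\|^2\,ds}\,\sqrt{\int_0^T\|H_n^{\kappa}(s,\cdot)-\tilde\rho_s^{\kappa}\|^2\,ds}\to 0$ by \eqref{norms_related} and the convergence $H_n^{\kappa}\to\tilde\rho^{\kappa}$ in $L^2(0,T;\mc H_0^{\gamma/2})$. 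For ii), since each $G_n^{\kappa}(s,\cdot)\in C_c^{\infty}((0,1))$ I would use the integration-by-parts formula for the regional fractional Laplacian (Theorem~3.3 of \cite{GM}) to replace $\langle\tilde\rho_s^{\kappa},\bb L G_n^{\kappa}(s,\cdot)\rangle$ by $-\langle\tilde\rho_s^{\kappa},G_n^{\kappa}(s,\cdot)\rangle_{\gamma/2}$, and then, exactly as in step ii) of the proof of Lemma~\ref{lem2_rho^kappa_to_rho^0_L^2}, decompose $G_n^{\kappa}(s,\cdot)=\int_s^T\tilde\rho_r^{\kappa}\,dr+\int_s^T(H_n^{\kappa}(r,\cdot)-\tilde\rho_r^{\kappa})\,dr$. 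The main term is $\iint_{0\le s<r\le T}\langle\tilde\rho_s^{\kappa},\tilde\rho_r^{\kappa}\rangle_{\gamma/2}\,ds\,dr=\tfrac12\iint_{[0,T]^2}\langle\tilde\rho_s^{\kappa},\tilde\rho_r^{\kappa}\rangle_{\gamma/2}\,ds\,dr=\tfrac12\|\int_0^T\tilde\rho_s^{\kappa}\,ds\|_{\gamma/2}^2$ by symmetrizing and using bilinearity of $\langle\cdot,\cdot\rangle_{\gamma/2}$, while the remainder is dominated, after two Cauchy--Schwarz inequalities exactly as in \eqref{Con1}, by $T\sqrt{\int_0^T\|\tilde\rho_s^{\kappa}\|_{\gamma/2}^2\,ds}\,\sqrt{\int_0^T\|H_n^{\kappa}(r,\cdot)-\tilde\rho_r^{\kappa}\|_{\gamma/2}^2\,dr}\to 0$.

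For iii), the same decomposition gives $\int_0^T\langle\tilde\rho_s^{\kappa},G_n^{\kappa}(s,\cdot)\rangle_{V_1}\,ds=\int_0^T\langle\tilde\rho_s^{\kappa},\int_s^T\tilde\rho_r^{\kappa}\,dr\rangle_{V_1}\,ds+\int_0^T\langle\tilde\rho_s^{\kappa},\int_s^T(H_n^{\kappa}(r,\cdot)-\tilde\rho_r^{\kappa})\,dr\rangle_{V_1}\,ds$; the first summand equals $\tfrac12\|\int_0^T\tilde\rho_s^{\kappa}\,ds\|_{V_1}^2$ by the same symmetrization argument (now in the Hilbert space $L^2_{V_1}$), and its finiteness follows from $\tilde\rho^{\kappa}\in L^2(0,T;\mc H_0^{\gamma/2})$ together with \eqref{norms_related}. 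The error term is handled by the chain of Cauchy--Schwarz inequalities displayed at the end of the proof of Lemma~\ref{lem2_rho^kappa_to_rho^0_L^2}, with $\|\cdot\|_{\gamma/2}$ replaced by $\|\cdot\|_{V_1}$ and a final application of the fractional Hardy inequality \eqref{norms_related} to pass from the $V_1$-norm back to the $\gamma/2$-norm of $H_n^{\kappa}-\tilde\rho^{\kappa}$, so that it tends to $0$ as $n\to\infty$.

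The routine part here is bookkeeping; the only genuinely delicate point is the symmetrization producing the $\tfrac12\|\int_0^T\cdot\,ds\|^2$ terms, i.e.\ the identity $\iint_{\{s<r\}}\langle\tilde\rho_s^{\kappa},\tilde\rho_r^{\kappa}\rangle\,ds\,dr=\tfrac12\iint_{[0,T]^2}\langle\tilde\rho_s^{\kappa},\tilde\rho_r^{\kappa}\rangle\,ds\,dr=\tfrac12\langle\int_0^T\tilde\rho_s^{\kappa}\,ds,\int_0^T\tilde\rho_r^{\kappa}\,dr\rangle$, which requires a Bochner-integral / Fubini justification in $\mc H_0^{\gamma/2}$ and in $L^2_{V_1}$; since $\tilde\rho^{\kappa}\in L^2(0,T;\mc H_0^{\gamma/2})$ this is legitimate, and the embedding $\mc H_0^{\gamma/2}\hookrightarrow L^2_{V_1}$ from \eqref{norms_related} makes the $V_1$-version automatic. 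This is the step I expect to require the most care to state cleanly, though it is not hard.
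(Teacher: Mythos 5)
Your proposal is correct and follows essentially the same route as the paper's proof: item i) by adding and subtracting $\tilde\rho_s^{\kappa}$ and Cauchy--Schwarz, item ii) by the integration-by-parts formula of Theorem 3.3 in \cite{GM} followed by the decomposition and symmetrization already used in Lemma \ref{lem2_rho^kappa_to_rho^0_L^2}, and item iii) by the same symmetrization in $L^2_{V_1}$ with the fractional Hardy inequality \eqref{norms_related} controlling the error. Your explicit remark that the symmetrization identity needs a Fubini/Bochner justification is a point the paper leaves implicit, but it does not change the argument.
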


\begin{proof} The proof of this lemma is quite similar to the proof of items i), ii) and iii) in the proof of Lemma \ref{lem2_rho^kappa_to_rho^0_L^2}. For that reason we just sketch the main steps of the proof and we leave the details to the reader. 
For i)  we have that
\begin{equation}
\begin{split}
&-\int_0^T \left\langle {\tilde \rho}_s ^{\kappa} , (\partial_s G_n^{\kappa}) (s,\cdot)\right\rangle ds  = \int_0^T  \big\langle {\tilde \rho}_s^{\kappa},H_n ^{\kappa}(s, \cdot) - {\tilde \rho}_s^{\kappa} \big\rangle ds
+ \int_0^T \| {\tilde \rho}_s ^{\kappa}\|^2  ds,
\end{split}
\end{equation}
and  by the Cauchy-Schwarz inequality,
\begin{equation}
\begin{split}
& \left| \int_0^T  \big\langle {\tilde \rho}_s^{\kappa} \, , \, H_n^{\kappa} (s, \cdot) - {\tilde \rho}_s ^{\kappa}\big\rangle \, ds \right| \le \sqrt{ \int_0^T  \| {\tilde \rho}_s^{\kappa} \|^2 \, ds} \; \sqrt{ \int_0^T  \| H_n ^{\kappa}(s, \cdot) - {\tilde \rho}_s^{\kappa} \|^2 \, ds } 
\end{split} 
\end{equation}
which goes to $0$ as $n \to \infty$.

For ii), we first use the integration by parts formula for the regional fractional Laplacian (see Theorem 3.3 in \cite{GM})  to get
$$ \int_0^T \left\langle {\tilde \rho}_{s}^{\kappa} , \bb L G_{n}^{\kappa} (s,\cdot) \right\rangle ds = - \int_0^T \Big\langle \tilde\rho_s^{\kappa} \,  , G_n^{\kappa} (s, \cdot) \, \Big\rangle_{\gamma/2}\, ds,$$
and as in ii) in the proof of Lemma \ref{lem2_rho^kappa_to_rho^0_L^2} we have that
\begin{equation*}
\begin{split}
&\int_0^T \Big\langle \tilde\rho_s ^{\kappa}\,  , G_n^{\kappa} (s, \cdot) \, \Big\rangle_{\gamma/2}\, ds =\cfrac{1}{2} \; \Big\| \int_0^T {\tilde \rho}_s ^{\kappa}ds \Big\|^2_{\gamma/2}+ \;  \int_0^T \Big\langle \tilde\rho_s^{\kappa} \,  , \int_s^T \left( H_n ^{\kappa}(t, \cdot) -{\tilde \rho}_t^{\kappa}\right)  dt \, \Big\rangle_{\gamma/2}\, ds.
\end{split}
\end{equation*}
Now, note that the term on the right hand side of last expression vanishes as $n\to\infty$ as a consequence of a successive use of Cauchy-Schwarz's inequalities. 
The proof of iii) is similar to the proof of ii) by using the fractional Hardy's inequality (see (\ref{norms_related})) 
and since $C_c^{\infty}((0,1))$ is dense in $H_0^{\gamma/2}$ we have that any $g\in H_0^{\gamma/2}$ is also in the space $L^2_{V_{1}}$ and that (\ref{norms_related}) remains valid for $g$. In particular, we have that the right hand side of iii) is finite.
We have
\begin{equation}
\begin{split}
&\int_0^T \Big\langle \tilde\rho_s^{\kappa} \,  , G_n^{\kappa} (s, \cdot) \, \Big\rangle_{V_{1}}\, ds =\cfrac{1}{2} \; \Big\| \int_0^T {\tilde \rho}_s^{\kappa} ds \Big\|^2_{V_{1}}+ \;  \int_0^T \Big\langle \tilde\rho_s^{\kappa} \,  , \int_s^T \left( H_n^{\kappa} (t, \cdot) -{\tilde \rho}_t^{\kappa}\right)  dt \, \Big\rangle_{V_{1}}\, ds.
\end{split}
\end{equation}
To conclude the proof of iii) it is sufficient to prove that \pat{ the term on the right hand side of last expression vanishes as $n\to\infty$}. But this is a consequence of a successive use of the Cauchy-Schwarz inequalities and Hardy's inequality, from which we get 
\begin{equation*}
\begin{split}
&\left| \int_0^T \Big\langle \tilde\rho_s^{\kappa} \,  , \int_s^T \left( H_n^{\kappa} (t, \cdot) -{\tilde \rho}_t^{\kappa}\right)  dt \, \Big\rangle_{V_{1}}\, ds \right| \\& \le C T  \, \sqrt{ \int_0^T \Big\| \tilde\rho_s ^{\kappa}\Big\|^2_{\gamma/2} ds} \; \sqrt{  \int_0^T \Big\|  H_n ^{\kappa}(t, \cdot) -{\tilde \rho}_t^{\kappa} \Big\|_{\gamma/2}^2\, dt} \; \xrightarrow[n \to \infty]{} \; 0.
\end{split}
\end{equation*}
The proof of the uniqueness of the weak solutions of \eqref{eq:Dirichlet Equation} for $\kappa =0$  is analogous, the difference is that  only the first two items in Lemma \ref{lem:unique-conv-007} above are required. The uniqueness of the weak solutions of  \eqref{eq:Dirichlet Equation_infty} is analogous as well, in this case  only  items i)  and iii)  in Lemma \ref{lem:unique-conv-007} above are required.

\end{proof}

\subsection{Proof of Lemma \ref{existence and uniqueness}}
\label{subsec:EU}

Recall \eqref{weak_sta_eq}. As we will see below, by Lax-Milgram's Theorem (see \cite{Brezis}), there exists a unique function $\bar\vf^{\hat\kappa} \in \mc H^{\gamma/2}_{0}$ which is solution of (\ref{weak_sta_eq}). Then, it is not difficult to see that $\bar\rho^{\hat\kappa}:= \bar\vf^{\hat\kappa} +\bar\rho^{\infty} $ is the desired weak solution of (\ref{eq:Stationary_RFRD}). For that purpose, let $a^{{\hat\kappa}}:\mathcal{ H}^{\gamma/2}_{0}\times\mathcal{ H}^{\gamma/2}_{0}\to \RR$ be the bilinear form defined, for $G,F\in\mathcal{ H}^{\gamma/2}_{0}$,  as 
\begin{equation}\label{bilinear_form_sta}
a^{{\hat\kappa}}(F,G)=\langle F , G \rangle_{\gamma/2} + \hat\kappa \langle F,G \rangle_{V_{1}}.
\end{equation} 
From Lax-Milgram Theorem, in order to conclude the existence and uniqueness it is enough to prove that $a^{\hat\kappa}$ is coercive and continuous. For $ \hat\kappa > 0$, we can  easily see that 
$$a^{{\hat\kappa}}(G,G) \geq \min\{ 1,\hat\kappa V_{1}(\tfrac{1}{2}) \} \left( \Vert G \Vert_{\gamma/2}^{2} + \Vert G \Vert^{2} \right) = \min\{ 1,\hat\kappa V_{1}(\tfrac{1}{2}) \} \Vert G \Vert_{\mathcal{ H}^{\gamma/2}_{0}}^{2}.$$
For $\hat\kappa = 0$, since on $\mathcal{H}^{\gamma}_{0}$ the norms $\Vert \cdot\Vert_{\gamma/2}$ and $\Vert \cdot\Vert_{\mathcal{H}^{\gamma/2}}$ are equivalent we have that
$$a^{{0}}(G,G) = \Vert G\Vert^{2}_{\gamma/2} \gtrsim
 \Vert G \Vert_{\mathcal{ H}^{\gamma/2}_{0}}^{2}.$$
 Therefore $a^{\hat\kappa}$ is coercive for $\hat\kappa \geq 0$. Moreover, by using the Cauchy-Schwarz's inequality we obtain that
$$\vert a^{{\hat\kappa}}(F,G) \vert \leq \Vert F \Vert_{\gamma/2}\Vert G \Vert_{\gamma/2} + \hat\kappa(\Vert F \Vert_{V_{1}}\Vert G \Vert_{V_{1}}).$$
From the fractional Hardy's inequality (see (\ref{norms_related})) we have that 
$$\vert a^{{\hat\kappa}}(F,G) \vert \ls (\hat\kappa +1)(\Vert F \Vert_{\gamma/2} \Vert G \Vert_{\gamma/2}) $$
and since on $ \mathcal{H}^{\gamma/2}_{0}$  the norms $\Vert \cdot  \Vert_{\gamma/2}$ and $\Vert\cdot \Vert_{\mathcal{H}^{\gamma/2}}$ are  equivalent, we conclude that the bilinear form $a^{\hat\kappa} $ is continuous for $\hat\kappa\geq 0$. This end the proof.

\section*{Acknowledgements}
\pat{This work has been supported by the projects EDNHS ANR-14- CE25-0011, LSD ANR-15-CE40-0020-01 of the French National Research Agency (ANR) and of the PHC Pessoa Project 37854WM. B.J.O. thanks Universidad Nacional de Costa Rica for financial support through his Ph.D grant.}

\pat{This project has received funding from the European Research Council (ERC) under  the European Union's Horizon 2020 research and innovative programme (grant agreement   No 715734). }

\pat{This work was finished during the stay of P.G. at Institut Henri Poincar\'e - Centre Emile Borel during the trimester "Stochastic Dynamics Out of Equilibrium". P.G. thanks this institution for hospitality and support.
The authors
thank the Program Pessoa of Cooperation between Portugal and France with reference
406/4/4/2017/S.
}

\appendix 
\section{Computations involving the generator}
\begin{lem}
\label{lem:compA}
For any $x \ne y \in \Lambda_N$, we have 
\begin{enumerate}[i)]
\item $L_N^0 (\eta_x \eta_y) = \eta_x L_N^0 \eta_y + \eta_y L_N^0 \eta_x -  p(y-x) (\eta_y -\eta_x)^2,$
\item $L_N^r (\eta_x \eta_y) = \eta_x L_N^r \eta_y + \eta_y L_N^r \eta_x,$
\item $L_N^\ell (\eta_x \eta_y) = \eta_x L_N^\ell \eta_y + \eta_y L_N^\ell \eta_x.$ 
\end{enumerate}
\begin{proof}
For i) we have, by definition of $L_N^0$, that  
\begin{equation*}
\begin{split}
L_N^0 (\eta_x \eta_y) =& \cfrac{1}{2}\sum _{\bar x,\bar y\in \Lambda_{N}}p(\bar y-\bar x)\left[(\sigma^{\bar x,\bar y}\eta)_{x}(\sigma^{\bar x,\bar y}\eta)_{y}-\eta_{x}\eta_{y}\right]\\
=&\cfrac{1}{2}\sum _{\bar x,\bar y\in \Lambda_{N}}p(\bar y-\bar x)\left[((\sigma^{\bar x,\bar y}\eta)_{x}\eta_{y}-\eta_{x}\eta_{y})+((\sigma^{\bar x,\bar y}\eta)_{y}\eta_{x}-\eta_{x}\eta_{y})+\right. \\
&\;\left. +(\sigma^{\bar x,\bar y}\eta)_{x}(\sigma^{\bar x,\bar y}\eta)_{y}-(\sigma^{\bar x,\bar y}\eta)_{x}\eta_{y}-(\sigma^{\bar x,\bar y}\eta)_{y}\eta_{x}+\eta_{x}\eta_{y}\right]\\
=&\eta_x L_N^0 \eta_y + \eta_y L_N^0 \eta_x + \cfrac{1}{2}\sum _{\bar x,\bar y\in \Lambda_{N}}p(\bar y-\bar x)\left[(\sigma^{\bar x,\bar y}\eta)_{x}-\eta_{x}\right] \left[ (\sigma^{\bar x,\bar y}\eta)_{y}-\eta_{y}\right] \\
=& \eta_x L_N^0 \eta_y + \eta_y L_N^0 \eta_x -  p(y-x) (\eta_y -\eta_x)^2.
\end{split}
\end{equation*}
In order to prove ii), note that $\left[(\sigma^{\bar x}\eta)_{x}-\eta_{x}\right] \left[ (\sigma^{\bar x}\eta)_{y}-\eta_{y}\right]$ is equal to zero, for all $\bar x \in \bZ$. Thus, by definition of $L_N^r$, we have that 
\begin{equation*}
\begin{split}
L_N^r (\eta_x \eta_y) =&\sum_{\bar x\in\Lambda_{N},\bar y\ge N}p(\bar y-\bar x)\left[\eta_{\bar x}(1-\beta)+(1-\eta_{\bar x})\beta\right]\left[(\sigma^{\bar x}\eta)_{x}(\sigma^{\bar x}\eta)_{y}-\eta_{x}\eta_{y}\right]\\
&= \eta_x L_N^r \eta_y + \eta_y L_N^r \eta_x +\\
&\sum_{\bar x\in\Lambda_{N},\bar y\ge N}p(\bar y-\bar x)\left[\eta_{\bar x}(1-\beta)+(1-\eta_{\bar x})\beta\right]\left[(\sigma^{\bar x}\eta)_{x}-\eta_{x}\right] \left[ (\sigma^{\bar x}\eta)_{y}-\eta_{y}\right]\\
&=\eta_x L_N^r \eta_y + \eta_y L_N^r \eta_x.
\end{split}
\end{equation*}
The proof of the third expression is analogous.
\end{proof}
\end{lem}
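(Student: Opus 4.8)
The plan is to deduce all three identities from one elementary ``discrete Leibniz rule with carr\'e du champ correction''. For any operator of the form $(Lf)(\eta)=\sum_{i}c_i(\eta)\,[f(T_i\eta)-f(\eta)]$, where the $T_i:\Omega_N\to\Omega_N$ are arbitrary maps, expanding $g(T_i\eta)h(T_i\eta)-g(\eta)h(\eta)$ by adding and subtracting $g(T_i\eta)h(\eta)$ and $g(\eta)h(\eta)$ yields
\begin{equation*}
L(gh)(\eta)=g(\eta)(Lh)(\eta)+h(\eta)(Lg)(\eta)+\sum_i c_i(\eta)\,\big[g(T_i\eta)-g(\eta)\big]\big[h(T_i\eta)-h(\eta)\big].
\end{equation*}
First I would record this identity, and then specialise it to $g=\eta_x$, $h=\eta_y$ with $x\ne y$ for each of the three pieces of the generator in \eqref{generators}.

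For $L_N^0$ the maps are the exchanges $\sigma^{\bar x,\bar y}$ with weights $\tfrac12 p(\bar y-\bar x)$, and the correction factor $\big[(\sigma^{\bar x,\bar y}\eta)_x-\eta_x\big]\big[(\sigma^{\bar x,\bar y}\eta)_y-\eta_y\big]$ vanishes unless the exchanged pair $\{\bar x,\bar y\}$ meets both $\{x\}$ and $\{y\}$, i.e.\ unless $\{\bar x,\bar y\}=\{x,y\}$. The two ordered pairs $(\bar x,\bar y)\in\{(x,y),(y,x)\}$ each give $(\eta_y-\eta_x)(\eta_x-\eta_y)=-(\eta_y-\eta_x)^2$, and their weights $\tfrac12 p(y-x)$ and $\tfrac12 p(x-y)$ add to $p(y-x)$ by the symmetry $p(z)=p(-z)$; this produces the correction $-p(y-x)(\eta_y-\eta_x)^2$, which is i). For $L_N^r$ and $L_N^\ell$ the maps are the single-site flips $\sigma^{\bar x}$, $\bar x\in\Lambda_N$; since $\sigma^{\bar x}$ changes only the coordinate $\bar x$ and $x\ne y$, at most one of the brackets $\big[(\sigma^{\bar x}\eta)_x-\eta_x\big]$, $\big[(\sigma^{\bar x}\eta)_y-\eta_y\big]$ is nonzero, so their product is identically $0$ and the correction term disappears, giving ii) and iii).

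The argument is purely combinatorial bookkeeping, so I do not expect any genuine obstacle; the only point that needs a little care is the $L_N^0$ case, where one must correctly count the two surviving ordered pairs and check that the prefactors $\tfrac12 p$ combine to $p$ (rather than, say, $\tfrac12 p$), which is where the symmetry of $p$ in \eqref{transition_p} is used.
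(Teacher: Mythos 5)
Your proposal is correct and follows essentially the same route as the paper: both expand $g(T\eta)h(T\eta)-g(\eta)h(\eta)$ by adding and subtracting cross terms to isolate the product of increments, then observe that this correction reduces to the $\{\bar x,\bar y\}=\{x,y\}$ terms for the exchange part (using the symmetry of $p$ to combine the two ordered pairs) and vanishes identically for the single-site flips. Packaging the computation as a general carr\'e du champ identity first is a tidy presentational choice but not a mathematically different argument.
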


\

\end{document}